\begin{document}
\title{\LARGE \bf
The Interplay of Competition and Cooperation Among Service Providers (Part \uppercase\expandafter{\romannumeral2})
}

\author{Xingran Chen,\IEEEmembership{}
        Saswati Sarkar\IEEEmembership{}, Mohammad Hassan Lotfi,\IEEEmembership{}
\IEEEcompsocitemizethanks {\IEEEcompsocthanksitem Xingran Chen is with the Electrical and System Engineering Department of the University of Pennsylvania, PA, 19104. \protect\\
E-mail: xingranc@seas.upenn.edu
\IEEEcompsocthanksitem Saswati Sarkar is with the Electrical and System Engineering Department of the University of Pennsylvania, PA, 19104. \protect\\
E-mail: swati@seas.upenn.edu
\IEEEcompsocthanksitem  Mohammad Hassan Lotfi is  at Capital One.\protect\\
E-mail: mohammadhassanlotfi@gmail.com
}
}

\newtheorem{lemma}{Lemma}
\newtheorem{note}{Note}
\newtheorem{property}{Property}
\newtheorem{theorem}{Theorem}
\newtheorem{definition}{Definition}
\newtheorem{corollary}{Corollary}
\newtheorem{proposition}{Proposition}
\newtheorem{remark}{Remark}
\newtheorem{assumption}{Assumption}

\IEEEtitleabstractindextext{
\begin{abstract}
 This paper investigates the incentives of mobile network operators (MNOs) for acquiring
additional  spectrum to offer mobile virtual network operators (MVNOs) and thereby inviting competition
for a common pool of end users (EUs). We consider interactions between two service providers (SPs), a  MNO and an MVNO, under two different scenarios:   1) EUs must choose one of them 2) EUs have the option to defect to a provider outside the system under consideration should the SP duo  offer unsatisfactory access fees or qualities of service. We formulate a multi-stage hybrid of  cooperative bargaining  and  non-cooperative games. In this formulation, first the  two SPs jointly determine their spectrum acquisitions, allocations and mutual money flows through the  bargaining game;  subsequently the two SPs individually determine the access fees for the EUs through the non-cooperative game. We identify when the overall equilibrium solutions exist and  when the equilibrium solution is unique. We obtain computationally simple characterizations of  the  equilibrium solutions when they exist, which are in closed form or involve optimizations in only one decision variable. The hybrid framework allows us to determine whether and by how much the different entities benefit due to the cooperation in spectrum acquisition decision.

\end{abstract}

\begin{IEEEkeywords}
Heterogeneous networks, Wireless Internet Market, Service Providers, Spectrum provisioning, Subscriber pricing, Game Theory, Hierarchical games, Nash Equilibrium
\end{IEEEkeywords}

}

\maketitle
\IEEEdisplaynontitleabstractindextext
\IEEEpeerreviewmaketitle

\IEEEraisesectionheading{\section{Introduction}\label{sec: introduction}}

\subsection{Motivation and Overview}
\IEEEPARstart{T}{wo} different classes of service providers co-exist in the current wireless service provider (SP) market: Mobile Networks Operators (MNOs) and Mobile Virtual Network Operators (MVNOs).  The MNO acquires $I_L$ spectrum from a regulator, which he offers  to a MVNO in exchange of money, and the MVNO  uses $I_F$ amount of this spectrum. Both   SPs earn by selling  wireless plans to  end users (EUs); the MNO earns additionally  by leasing her spectrum to the MVNO.   Thus, they both cooperate, by sharing spectrum; they also compete, for a common pool of EUs. They clearly make different decisions, which affect their subscriptions; their payoffs have different expressions and their  decisions also follow different constraints, eg,  $I_L$ can be chosen to be any positive value\footnote{ Specifically, $I_L \geq \delta > 0$ where lower bound $\delta$ is a parameter of choice.  Since parameter $\delta$ may be chosen as low a positive quantity as one desires, for all practical purposes, $I_L$ can be chosen to be any positive value. The assumption $I_L \geq \delta > 0$, rather than $I_L \geq 0$ or $I_L > 0$ simplifies our analysis, and is not restrictive as argued above.}     and $I_F$ must be chosen as a value between $0$ and $I_L$  (i.e., $0 \leq I_F \leq I_L$). In a sequence of two papers we investigate the economics of the interplay of the competition and cooperation between an  MNO and an MVNO.

\subsection{Relation with the Prequel}

In the prequel, Part~\uppercase\expandafter{\romannumeral1}~\cite{Part1},  we consider that the SPs arrive at their decisions individually,  in the current paper  we consider that the SPs arrive at certain   decisions together, and then arrive at other  decisions individually. Specifically, in this paper,   the SPs together decide  the spectrum they  acquire (i.e., $I_L, I_F$) to maximize their overall profits, and the  marginal reservation fee $s$  that the MVNO pays to the MNO for using the spectrum the MNO offers. Here $s$ is decided so as to split the proceeds between the SPs in accordance with the subscription revenue each generates, which in turn depends on the prior preferences of the EUs for them.   Subsequently, each SP individually  decides the  access fees for the EUs. The $I_L, I_F, s$ are obtained as  the solution of a cooperative bargaining game, and the access fees are obtained as solutions of a non-cooperative game. The bargaining and the non-cooperative games together constitute a sequential game. In contrast, in the prequel, each decision variable is selected through a non-cooperative game, each of which constitutes a stage of a sequential game. Also, the marginal reservation fee is considered a fixed parameter, and the MNO and MVNO individually decides the spectrum each acquires, and subsequently individually decides the access fees for the EUs. Note that    the marginal reservation fee is indeed a market-driven parameter  in a large spectrum market with many MNOs and MVNOs; in such a scenario the marginal reservation fee may be driven by the overall market evolution, and is beyond the control of individual MNOs and MVNOs. This fee may also be beyond the control of individual MNOs and MVNOs, when it is determined by an external regulator to influence the interaction between different providers perhaps to protect the interests of the EUs. These are the cases that the prequel considers.  In a smaller market and in absence of regulatory intervention,  the marginal reservation fee would be chosen as a decision variable through a negotiation between the MNO and the MVNO concerned. This is the case this paper considers.

\subsection{Positioning vis-a-vis the State-of-the-Art}

The economics of the interactions of resource sharing among service providers have been investigated in many works. In the prequel, we have distinguished our contributions from those in the genre of non-cooperative interaction between the SPs, since there we considered that they arrive at their decisions individually. In this paper, since the SPs together decide the spectrum they acquire, we review the state of the art on cooperative interaction between providers, which have invariably been modeled by coalitional and bargaining games.

Coalitional games were investigated in \cite{S2012}, \cite{S2012-2} and \cite{CCCCP2016}. Transferable and nontransferable payoff coalitional games were used in \cite{S2012} and \cite{S2012-2}, respectively, to model cooperation among service providers through joint deploying and pooling of resources and  serving each others' customers.  Both papers concluded that cooperation substantially enhances individual providers' payoffs. In \cite{CCCCP2016}, MNOs weighed between building individual networks or entering into network and spectrum sharing agreements. Coalitional games with transferable and nontransferable utility were built to show that a cost division policy guaranteed coalition stability.

Bargaining games were studied in
\cite{GZ2009}, \cite{JS2009}, \cite{GZ2011}, \cite{LG2011}, \cite{YHW2013}, \cite{ZJB2015}, and \cite{BC2015}.
The cooperation between selfish nodes was formulated as two-person bargaining games in \cite{GZ2009}, \cite{GZ2011}, both nodes were seen to perform better than if they work independently. In \cite{JS2009}, nodes in a wireless network seek to agree on a fair and efficient allocation of spectrum. Nash Bargaining Solution (NBS) achieves the best tradeoff between fairness and efficiency. 
 A dynamic incomplete information bargaining was built in \cite{YHW2013}, where the primary user does not have complete information of the second user energy cost. NBS can lead to a win-win situation,  i.e., data rate of both users are improved. \cite{ZJB2015} investigated the joint uplink sub-channel and power allocation problem in cognitive small cells with imperfect channel state information. \cite{BC2015} modeled a situation of dynamic spectrum access by a set of cognitive radio enabled nodes as a bargaining game where the nodes bargain among themselves in a distributed manner to agree upon a sharing rule of the channels. The selfish strategies of the players affect system wide performance.
Other optimization models were introduced in \cite{V2012,KMZR2012,TL2013,KSY2014}, and fuzzy logic based frameworks was considered in \cite{CSKBA2015}.

However, these works do not consider the dynamics of the interplay of competition and cooperation  between MNOs and MVNOs, whose roles are fundamentally different from each other. The principal difference is that  while both   MNO and MVNO earn by selling  wireless plans to the EUs,  the MNO earns additionally  by leasing  spectrum to the MVNO. Thus, they make different decisions, which affect their subscriptions, and  their payoffs have different expressions. To our knowledge, \cite{LG2011} is the only work in the domain of cooperative interaction between SPs,  that also considers the dynamics of providers whose roles are similar to those of the MNO and MVNO. This paper considers that the spectrum the MNO acquires is exogenously determined, whereas we consider this as a joint decision of the 2 SPs. This leads to an additional stage in our multi-stage formulation. The subscription models for the EUs are also different, though in both cases the EUs choose between the SPs based on the access fees and the spectrum availability (quality of service in  \cite{LG2011}). Though our model is more general in that it consists of an additional decision variable and additional stage, we are able to obtain the closed form expressions for SPNE 1) access fees, 2) the amount of spectrum the MVNO leases from the MNO, and 3) the reservation fee the MVNO pays to the MNO. In contrast, \cite{LG2011} only proves that the SPNE access fees exist,   and provides the feasibility region of 2) and 3). We also obtain closed form expressions for SPNE spectrum acquisition of the MNO from the central regulator, which \cite{LG2011} considers as a given parameter.  We also generalize our model and results to allow for the possibility that the EUs do not choose either the MNO or the MVNO, but chooses some other SP outside the system we consider; \cite{LG2011} does not do this generalization.

%
 The only other papers that consider the dynamics of MNO and MVNO, namely   \cite{Banerjee2009}, \cite{le2009pricing} and \cite{C2012}, have considered only non-cooperative decisions by the SPs. We have therefore distinguished these from our contributions in the prequel, which is closer to them.


\subsection{Contribution}

We now describe the contributions of this paper. First, we consider a base case in which one MNO and one MVNO compete for EUs in a common pool, and the EUs  choose one of the SPs through a hoteling model for subscription  (Section~\ref{sec: cooperative base case}). We formulate the sequential hybrid of bargaining and non-cooperative games that model the dynamics of the SP interactions (Section~\ref{sec: systemmodel1}), and identify the salient properties of its equilibrium solutions when they exist (Section~\ref{standard}). We obtain conditions for existence and uniqueness of the equilibrium solutions in terms of system parameters, and  characterize them when they exist (Section~\ref{sec: Un-outcome}). We prove that the bargaining framework yields a collusive outcome in which the MNO acquires the minimum amount of spectrum that he is mandated to and the MVNO leases either all or nothing of this spectrum from the MNO (though the MVNO is allowed to lease any amount of this spectrum). The equilibrium solutions are easy to compute and reveal several underlying insights: eg, only the SP that is apriori more popular retains the spectrum leased from the regulator in its entirety. This spectrum sharing arrangement is obtained strategically to motivate the EUs to choose the SP that offers higher price so that the overall subscription revenue is maximized (since the proceeds are shared between the SPs anyway). 
Comparing the payoffs of the SPs and the access fees for the EUs in this paper with those obtained in Part~\uppercase\expandafter{\romannumeral1}~\cite{Part1}, we show that joint decision on spectrum acquisition conclusively benefits the SPs by considerably enhancing their payoffs. The joint decision provides only nuanced benefits for the EUs,   by securing cheaper access fees for them, while simultaneously guiding more EUs to more expensive service by having the more apriori popular SP retain the acquired spectrum in its entirety, and
thereby provide better quality of service to the EUs.  Accordingly, as compared to individual decisions,  for some parameter values the EU-resource-cost metric that we define in Part~\uppercase\expandafter{\romannumeral1}~\cite{Part1} is higher under the joint decision,   and lower for the rest (Section~\ref{numerical}).

  Next, we  allow the EUs to choose a SP outside the system we consider,  if neither of the two SPs in the system (that is, the MVNO and the MNO)  offer a desirable combination of access fee and quality of service.  We also allow each SP in the system to have exclusive additional customer bases to draw from depending on his spectrum acquisition and the price he offers (Section \ref{sec: outside option}).   In this scenario we show that there are two equilibrium solutions, both of which yield a milder version of the  collusive outcome than  in the base case, in that the MNO may acquire higher than the mandated minimum amount of spectrum (Sections~\ref{analysis}, \ref{numericaloutside}). This happens because the EUs have an outside option to desert to, and the SPs have exclusive customer bases to gain from, depending on the price and the qualities of service they offer.    The two equilibrium solutions differ in which of the SPs retain the spectrum leased from  the  regulator. The SP that retains the entire spectrum gets a higher payoff in each case. Under both equilibrium-type solutions, each SP increases his payoff compared to what he gets when the SPs decide their spectrum acquisitions individually.  Also, the EU-resource-cost metric is invariably higher than when the SPs decide their spectrum acquisitions individually.

\section{Base Case}\label{sec: cooperative base case}
We formulate the dynamics of interaction between the SPs as a sequential hybrid of bargaining and non-cooperative games in Section~\ref{sec: systemmodel1}, we identify some salient properties of its equilibrium-type solutions in Section~\ref{standard} and characterize the equilibrium-type solutions in Section~\ref{sec: Un-outcome}. Using these solutions, we assess how the SPs and the EUs fare due to the cooperation between the SPs in jointly deciding their spectrum acquisitions, compared to when they decide everything individually, through analysis in Sections~\ref{sec: Un-outcome} and  through numerical computations in Section~\ref{numerical}.
\subsection{Model}\label{sec: systemmodel1}
We start with by recapitulating  notations that are similar to Part~\uppercase\expandafter{\romannumeral1}~\cite{Part1} and the current paper.  We denote MNO as $\text{SP}_{L}$ and MVNO as $\text{SP}_{F}$.  $\text{SP}_{L}$   offers  $I_L$ amount of  spectrum  (which it acquires from a central regulator) to $\text{SP}_{F}$  in exchange of money, and $\text{SP}_{F}$ uses $I_F$ amount of this spectrum. Clearly, $0\leq I_{F}\leq I_{L}$. 
 We denote the marginal leasing fee (per spectrum unit)  that $\text{SP}_{L}$ pays the central regulator as $\gamma$, marginal reservation fee   $\text{SP}_{F}$ pays to $\text{SP}_{L}$ by $\tilde{s}$, an additional remuneration that $\text{SP}_{L}$ transfers to $\text{SP}_{F}$ by $\theta$,   the fraction of EUs that $\text{SP}_{F}$ and $\text{SP}_{L}$ attract as $n_{F}$ and $n_{L}$, respectively,  and  the access fee that $\text{SP}_{F}$ and $\text{SP}_{L}$ charge the EUs as $p_{F}$ and $p_{L}$, respectively. Let $c$ be the transaction cost incurred by a SP for each subscription. 
The SP$_{L}$ incurs a spectrum acquisition cost of  $\gamma I_{L}^{2}$, and SP$_F$ pays to SP$_L$ a leasing fee of  $sI_{F}^{2}$.
Thus, SP$_{L}$, SP$_{F}$ receive payoffs $\pi_F, \pi_L$ respectively, where:
\begin{align}
 &\pi_{F}=n_{F}(p_{F}-c)-\tilde{s}I_{F}^{2} + \theta \label{equ: BM-L-payoff}\\
 &\pi_{L}=n_{L}(p_{L}-c)+\tilde{s}I_{F}^{2}-\gamma I_{L}^{2} - \theta.\label{equ: BM-F-payoff}
\end{align}
The above equations are similar to (1), (2) of Part~\uppercase\expandafter{\romannumeral1}~\cite{Part1}, with the exception of the introduction of  $\theta$ whose significance will be explained later.

We use a hotelling model to describe how EUs choose between the SPs. EUs are distributed uniformly along the unit interval $[0,1]$, and SP$_{L}$ and SP$_{F}$ are respectively located at $0, 1$  (Figure $1$ of Part~\uppercase\expandafter{\romannumeral1}~\cite{Part1}).
Let $t_{L}$ ($t_{F}$) be the unit transport cost of EUs for SP$_{L}$ (SP$_{F}$), the EU located at $x\in[0,1]$ incurs a cost of $t_{L}x$ (respectively, $t_{F}(1-x)$) when joining SP$_{L}$ (respectively, SP$_{F}$).
The transport costs capture the impact of the qualities of services the SPs offer on the subscription of the EUs, which in turn depend on the spectrum they acquire: $t_L = I_F/I_L, t_F = 1-t_L.$
$v_L, v_F$ represent prior preferences of the EUs for SP$_L$, SP$_F$ respectively, which is the same for all EUs, and do not depend on the strategies of the SPs , i.e., $I_L, I_F, p_L, p_F$.  The EU at $x$ receives utilities $u_{L}(x), u_{F}(x)$ respectively  from SP$_{L}$ and  SP$_{F}$, and joins the SP that gives it the higher utility, where:
\begin{equation}\label{equ: BM-utility EUs}
\begin{aligned}
u_{L}(x)&=v^{L}-\left(p_{L}+t_{L}x\right) \\
u_{F}(x)&=v^{F}-\left(p_{F}+t_{F}(1-x)\right).
\end{aligned}	
\end{equation}
As in Part I \cite{Part1}, we denote $\Delta = v^L - v^F.$

We now mention the major differences with Part~\uppercase\expandafter{\romannumeral1}~\cite{Part1}. Here,
we consider a hybrid of bargaining and non-cooperative games to model the dynamics of the interaction between SP$_L$ and SP$_F$.
The two SPs jointly decide on the spectrum acquisitions ($I_L$, $I_F$), so as to maximize the overall profit,  but individually decide on the access fees for EUs, $p_L, p_F$.  The SPs also split the profit, by selecting the marginal reservation fee $\tilde{s}$, and the additional remuneration $\theta$. 
Thus, $\tilde{s}, \theta$ are new decision variables\footnote{A question that arises is if  the SPs  jointly decide the spectrum acquisitions, why would they not jointly select the access fees too. The answer is two-fold. First,  SP$_L$ offers the spectrum he acquires to SP$_F$, a part of which SP$_F$ uses - thus, they share the spectrum anyhow, that is, the spectrum usage is inherently cooperative. On the other hand, they are competing for the same pool of EUs, it is therefore natural that the access fees will be determined competitively, thus such decisions must be individual. Second, in practice, the spectrums are acquired for larger time intervals, while access fees are updated more frequently. Joint decisions between two SPs involves substantial coordination and negotiation, which is infeasible on shorter time scales.}. The SPs decide $I_L, I_F, \tilde{s}, \theta$ through a bargaining process. If the SPs, SP$_{L}$ SP$_{F}$, are not able to agree on these,  they receive their respective  \emph{disagreement payoff}s, $d_L, d_F$, which we assume to be equal to their payoffs in the sequential non-cooperative game  whose outcome was characterized in Part~\uppercase\expandafter{\romannumeral1}~\cite{Part1} (Theorems 1, 2). 
The disagreement payoff is for example higher for a SP who is apriori more popular, i.e., has a larger $v_L$ or $v_F$, (eg, Figure 4 of Part~\uppercase\expandafter{\romannumeral1}~\cite{Part1}. The disagreement payoffs also depend on the marginal fee per spectrum unit $s$ the SP$_F$ pays the SP$_L$ in the event of a disagreement. This marginal fee is a parameter determined by the overall spectrum market, as  assumed for $s$ in Part~\uppercase\expandafter{\romannumeral1}~\cite{Part1}. 
We also define a \emph{bargaining power} of the SPs. Let $0\leq w\leq 1$ be the relative bargaining power of the SP$_F$ over SP$_L$: the higher the $w$, more is SP$_F$'s  bargaining power.

In the event of agreement, the SPs decide their shares of the overall profit, and thereby $\tilde{s}, \theta$, commensurate with their disagreement payoffs and bargaining powers; higher values of the latter two fetch higher shares of the profit. Since $\tilde{s}$ will have no significance in deciding the shares if $I_F$ is decided as  $0$ \big(refer to \eqref{equ: BM-L-payoff} and \eqref{equ: BM-F-payoff}\big), we have considered the additional remuneration transfer decision variable $\theta$ (which was not in Part~\uppercase\expandafter{\romannumeral1}~\cite{Part1}). Note that $\theta$ can be positive or negative, and the sign reflects the direction of the money flow.

When the SPs jointly decide the spectrum to acquire, so as to maximize the overall profits,  a collusive outcome may occur in which both SPs jointly decrease the amount of spectrum acquisitions while maintaining a specific relative difference that yields the best outcome. The reason is that EUs decide based on the ratio of the investment by SPs and not the absolute values. Thus, regulatory intervention may be desirable. Therefore, we consider that
a regulator  enforces a minimum spectrum acquisition amount of $L_{0}$ on SP$_L$, i.e., $0<L_0\leq I_L$. Recall that we have a minimum required amount for $I_L$, $\delta$, in Part~\uppercase\expandafter{\romannumeral1}~\cite{Part1}, $L_0$ may not be the same as the  $\delta$. This is because collusion does not naturally arise in the non-cooperative selection in Part~\uppercase\expandafter{\romannumeral1}~\cite{Part1}. Thus, a minimum amount $\delta$ was mandated merely for convenience of analysis, and $\delta$ was assumed small everywhere. Here, the minimum amount $L_0$ is imposed as a regulatory intervention to ensure some minimum quality of service for the EUs in presence of collusion between the SPs.

We  formulate a bargaining framework and use  the \emph{Nash Bargaining Solution} (NBS) to characterize $I_F, I_L, \tilde{s}, \theta$:

 \begin{definition}\emph{Nash Bargaining Solution (NBS):} is the unique solution (in our case the tuple of the payoffs of SP$_L$ and SP$_F$) that satisfies the four ``reasonable" axioms (Invariant to affine transformations, Pareto optimality, Independence of irrelevant alternatives, and Symmetry) characterized in \cite{MO1990}.
\end{definition}

From standard game theoretic results in \cite{MO1990}, the optimal solution of the following maximization,  $(\pi_{L}^{*}, \pi_{F}^{*})$, constitute the Nash Bargaining Solution:
\begin{equation}\label{equ: BM-bargaining optimization-1}
\begin{aligned}
\max_{\pi_{L}, \pi_{F}}\quad&(\pi_{F}-d_{F})^{w}(\pi_{L}-d_{L})^{1-w}\\
s.t\quad&(\pi_{L},\pi_{F})\in U, \ \
(\pi_{L},\pi_{F})\geq(d_{L},d_{F})
\end{aligned}
\end{equation}
where
\begin{align*}
U=&\left\{(\pi_{F}, \pi_{L})|\begin{aligned}
&\pi_{F}=n_{F}(p_{F}-c)-\tilde{s}I_{F}^{2}+\theta\\
&\pi_{L}=n_{L}(p_{L}-c)+\tilde{s}I_{F}^{2}-\theta-\gamma I_{L}^{2}	
\end{aligned}\right\}\\
\cap&\left\{ L_0\leq I_{L}, 0\leq I_{F}\leq I_{L} \right\}.
\end{align*}

\begin{remark}
\label{remark4repeat}
Thus, the payoffs of the individual SPs after bargaining is no less than their disagreement payoffs.
\end{remark}

\begin{remark}
\label{remark1}
The above optimization is guaranteed to have a feasible solution if $L_0$ is lower than the spectrum acquisition of SP$_L$ that corresponds to his disagreement payoff; it need not have a feasible solution otherwise.
\end{remark}

 The SPs decide $I_L, I_F, \tilde{s}, \theta$ as per the following sequential hybrid of bargaining and non-cooperative games:
\begin{itemize}
	\item {\bf Stage 1:} $\text{SP}_{L}$ and $\text{SP}_{F}$ jointly decide  $(I_{L}, I_{F}, \tilde{s}, \theta)$ through the  bargaining game \eqref{equ: BM-bargaining optimization-1}.
	\item {\bf Stage 2:} $\text{SP}_{L}$ and $\text{SP}_{F}$ determine the $p_{L}$ and $p_{F}$, respectively, and individually, to maximize their payoffs $\pi_L, \pi_F$, based on $I_L, I_F, \tilde{s}, \theta$ determined in the previous stage. The process constitutes a non-cooperative game.
	\item {\bf Stage 3:} EUs decide to subscribe to one of the SPs based on $I_L, I_F, p_L, p_F$ determined in the previous stages and prior preferences $v^L, v^F$. A EU at location $x$ chooses the SP that provides it a higher utility as per the expressions in \eqref{equ: BM-utility EUs}.
\end{itemize}


 From the above, $n_F, n_L, p_L, p_F$ are determined in Stage $2$ based on  $I_{L}, I_{F}, \tilde{s}, \theta$ determined in Stage $1$, as solution of \eqref{equ: BM-bargaining optimization-1}. Thus, $n_F, n_L, p_L, p_F$ are functions of $I_{L}, I_{F}, \tilde{s}, \theta$; therefore the latter  are the decision variables in optimization \eqref{equ: BM-bargaining optimization-1}. Thus optimization \eqref{equ: BM-bargaining optimization-1} is
\begin{equation}\label{equ: BM-bargainging optimization-2}
\begin{aligned}
\max_{I_{L},I_{F}, \tilde{s}, \theta}\quad&(\pi_{F}-d_{F})^{w}(\pi_{L}-d_{L})^{1-w}\\
s.t\quad&0\leq I_{F}\leq I_{L},\,\,\,L_0\leq I_{L}\\
&\pi_{F}=n_{F}(p_{F}-c)-\tilde{s}I_{F}^{2}+\theta\\
&\pi_{L}=n_{L}(p_{L}-c)+\tilde{s}I_{F}^{2}-\theta-\gamma I_{L}^{2}\\
&(\pi_{L},\pi_{F})\geq(d_{L},d_{F})
\end{aligned}
\end{equation}

\begin{definition}
We define $(I_{L}^{*}, I_{F}^{*}, \tilde{s}^{*}, \theta^*,  p_{L}^{*},p_{F}^{*},n_{L}^{*},n_{F}^{*})$ as an  {\it equilibrium-type solution}, when $I_{L}^{*}, I_{F}^{*}, \tilde{s}^{*}, \theta^*$ constitute the optimum solution of
\eqref{equ: BM-bargainging optimization-2}, $p_L^*, p_F^*$ the Nash equilibrium of the non-cooperative game in Stage~$2$, and $n_L^*, n_F^*$ the corresponding EU subscriptions in Stage~$3$. Let $(\pi_{L}^{*}, \pi_{F}^{*})$ be the corresponding payoffs of the SPs,
\end{definition}

 If an equilibrium-type solution exists, it may be determined through  backward induction,  starting from the last stage (stage 3) of the game and proceeding backward.

\begin{remark}\label{remark0}
There is for example no equilibrium-type solution if \eqref{equ: BM-bargainging optimization-2} does not have a feasible solution.
\end{remark}


Note that the framework presented above is identical to that in Sections 2.1, 2.2 of Part~\uppercase\expandafter{\romannumeral1}~\cite{Part1} except that 1) $I_L^*, I_F^*, \tilde{s}^*, \theta^*$ are determined as solutions of a bargaining game as opposed to $I_L^*, I_F^*$ being obtained as SPNE of a non-cooperative game and 2) $s$ being a fixed parameter in and $\theta$ not being invoked in Part~\uppercase\expandafter{\romannumeral1}~\cite{Part1}. Thus, once we get an optimum $(I_{L}^{*}, I_{F}^{*}, \tilde{s}^*, \theta^*)$, from \eqref{equ: BM-bargainging optimization-2}, the access fee for EUs ($p_L^*$ and $p_F^*$) and the split of EUs ($n_L^*$ and $ n_F^*$)  between SPs can be determined from the results in Part~\uppercase\expandafter{\romannumeral1}~\cite{Part1}, namely Theorems~1, 2, depending on the value of $\Delta$. In fact, Theorems~1, 2 of Part~\uppercase\expandafter{\romannumeral1}~\cite{Part1} show that $p_L^*, p_F^*, n_L^*,  n_F^*$ are expressions only of $I_{L}^{*}, I_{F}^{*}$:
 \begin{theorem}\label{thm1Part1}
 [Theorem 1 of Part~\uppercase\expandafter{\romannumeral1}~\cite{Part1}]
Let $|\Delta| < 1$. The SPNE  $p_L^*, p_F^*, n_L^*,  n_F^*$ are:

\noindent{\bf (1)}
$p_{L}^{*}=c+\frac{2}{3}-\frac{I_{F}^{*}}{3I_{L}^{*}}+\frac{\Delta}{3},\quad
p_{F}^{*}=c+\frac{1}{3}+\frac{I_{F}^{*}}{3I_{L}^{*}}-\frac{\Delta}{3}$,

\noindent{\bf (2)} $n_{L}^{*}=\frac{\Delta}{3}+\frac{2}{3}-\frac{I_{F}^{*}}{3I_{L}^{*}},\, n_{F}^{*}=\frac{I_{F}^{*}}{3I_{L}^{*}}+\frac{1}{3}-\frac{\Delta}{3}$.
\end{theorem}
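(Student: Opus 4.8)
The plan is to prove the characterization by backward induction, treating $I_L^*,I_F^*$ (and hence $t_L=I_F^*/I_L^*$, $t_F=1-t_L$) as fixed, and noting up front that the transfer and cost terms $\tilde s I_F^2$, $\theta$, $\gamma I_L^2$ in \eqref{equ: BM-L-payoff}--\eqref{equ: BM-F-payoff} are constants that influence neither the Stage~2 prices nor the Stage~3 subscriptions. \emph{Stage~3.} For fixed $p_L,p_F$, consider $g(x)=u_L(x)-u_F(x)=\Delta+p_F-p_L+t_F-(t_L+t_F)x$; since $t_L+t_F=1$ this is affine with slope $-1$, hence strictly decreasing, so there is a unique indifferent location $\hat x$, EUs with $x<\hat x$ strictly prefer $\text{SP}_L$ and those with $x>\hat x$ strictly prefer $\text{SP}_F$. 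Solving $g(\hat x)=0$ gives $\hat x=\Delta+p_F-p_L+t_F$, and when $\hat x\in(0,1)$ the market splits there: $n_L=\hat x=\Delta+p_F-p_L+t_F$ and $n_F=1-\hat x=p_L-p_F-\Delta+t_L$, both affine in $(p_L,p_F)$.

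\emph{Stage~2.} Inserting these demands, $\pi_L$ depends on $p_L$ only through the sales term $n_L(p_L-c)=(\Delta+p_F-p_L+t_F)(p_L-c)$, a strictly concave quadratic in $p_L$; symmetrically $\pi_F$ is strictly concave in $p_F$. Hence each SP has a unique best response given by the first-order condition $\partial\pi_L/\partial p_L=0$ (resp.\ $\partial\pi_F/\partial p_F=0$), and both best responses are affine, e.g.\ $p_L=\tfrac12(c+\Delta+p_F+t_F)$ and $p_F=\tfrac12(c+p_L-\Delta+t_L)$. This $2\times2$ linear system has a unique solution; simplifying with $t_F=1-t_L$ and substituting $t_L=I_F^*/I_L^*$ produces precisely the expressions in part~(1). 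Strict concavity of each objective in its own price makes this interior fixed point the (unique) Nash equilibrium of Stage~2: no unilateral deviation is profitable, including to a price that would capture all or none of the EUs.

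\emph{Back-substitution and consistency.} Plugging $(p_L^*,p_F^*)$ into $n_L=\hat x$ and simplifying gives $n_L^*=(2+\Delta-t_L)/3$ and $n_F^*=1-n_L^*=(1-\Delta+t_L)/3$; with $t_L=I_F^*/I_L^*$ these are the expressions in part~(2). Finally one verifies the interior hypothesis used throughout: because $t_L\in[0,1]$, the standing assumption $|\Delta|<1$ forces $n_L^*,n_F^*\in(0,1)$ and the margins $p_L^*-c=(2+\Delta-t_L)/3$, $p_F^*-c=(1-\Delta+t_L)/3$ to be strictly positive, so the indifferent location is genuinely interior and every EU is served --- this is exactly what $|\Delta|<1$ buys us.

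\emph{Main obstacle.} The two linear-system solves are routine; the only delicate step is regime identification, i.e.\ showing that under $|\Delta|<1$ the equilibrium is the interior Hotelling one and not a monopoly or uncovered-market configuration. I would dispatch this by combining the concavity argument in Stage~2 (which rules out corner best responses) with the explicit bounds on $n_L^*,n_F^*$ obtained above, invoking, if full market coverage is required, the standing assumption that $v^L,v^F$ are large enough that each EU obtains nonnegative utility at equilibrium.
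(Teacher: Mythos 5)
This theorem is quoted from Part~I and the present paper contains no proof of it (it is simply imported and then used, e.g.\ as \eqref{equ: Un-A-interior-prices}--\eqref{equ: Un-A-interior-subscriptions} in the appendix), so your backward-induction Hotelling derivation is the standard and surely intended argument. Your computations are correct throughout --- the indifferent location $\hat x=\Delta+p_F-p_L+t_F$ (using $t_L+t_F=1$), the affine best responses, the solution $p_L^*-c=(2+\Delta-t_L)/3$, $p_F^*-c=(1-\Delta+t_L)/3$ with $n_L^*=p_L^*-c$, $n_F^*=p_F^*-c$, and the interiority check that $|\Delta|<1$ together with $t_L\in[0,1]$ forces $n_L^*,n_F^*\in(0,1)$ --- and your treatment of the corner-deviation issue (demand saturates outside the interior region, but the piecewise payoff remains unimodal with its peak at the interior critical point) is adequate.
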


\begin{theorem}\label{thm2Part1}[Theorem 2 of Part~\uppercase\expandafter{\romannumeral1}~\cite{Part1}]
The  SPNE  $p_L^*, p_F^*, n_L^*,  n_F^*$ are

\noindent {\bf (1) } $\Delta\geq 1$:
\[p_{F}^{*}=p_{L}^{*}-\Delta,
n_{L}^{*}=1,\, n_{F}^{*}=0,\] and
$p_L^*$ can be chosen any value in $[c+1, c+\Delta].$ \newline
 {\bf (2) } $\Delta = 1:$ The following interior strategy constitute an additional  SPNE:
\[p_L^* - c = n_L^* = 2/3, p_F^* - c = n_F^*= 1/3.\]
{\bf (3) } $\Delta <  -1:$
 \[p_{L}^{*}=p_{F}^{*}+\Delta-1, n_{L}^{*}=0,\, n_{F}^{*}=1,\] and
$p_L^*$ can be chosen any value in $[c+1, c-\Delta].$ \newline
\end{theorem}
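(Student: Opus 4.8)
\emph{Reduction and demand.} The statement is about the Stage-$2$ pricing subgame for fixed spectrum amounts $I_L,I_F$ (equivalently fixed $t_L=I_F/I_L$, $t_F=1-t_L$), in the regime where the prior-preference gap $\Delta$ is so large that one SP becomes dominant. Since $\tilde sI_F^2$, $\gamma I_L^2$ and $\theta$ do not involve the access fees, the subgame is the simultaneous-move game in which $\text{SP}_L$ maximises $(p_L-c)n_L$ over $p_L$ and $\text{SP}_F$ maximises $(p_F-c)n_F$ over $p_F$. The first step is to read off the demand from \eqref{equ: BM-utility EUs}: $u_L(x)-u_F(x)$ is affine and strictly decreasing in $x$ and vanishes at $\hat x=\Delta-p_L+p_F+t_F$, so $n_L$ equals $\hat x$ clipped to $[0,1]$ (i.e.\ $0$ if $\hat x\le0$, $1$ if $\hat x\ge1$, and $\hat x$ in between) and $n_F=1-n_L$. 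Hence each $n_i$ is continuous and piecewise linear in $p_i$ with two kinks --- at the prices making $\hat x=1$ and $\hat x=0$ --- so each SP's profit is concave on each of the three linear pieces (and constant on the two extreme ones), and a best response is found by comparing the stationary point of the middle, interior-demand piece with the two kink values.

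\emph{The case $\Delta\ge1$.} Imposing the two first-order conditions on the middle piece reproduces the candidate of Theorem~\ref{thm1Part1}, which is a Nash equilibrium only where it yields $n_L,n_F\in[0,1]$; for $\Delta\ge1$ I would instead look for an equilibrium on the corner $n_F^*=0$. There $\text{SP}_F$ earns $0$ for any price keeping $n_F=0$, so it best-responds iff no undercutting price gives positive profit: writing $\text{SP}_F$'s interior-demand profit as a downward parabola in $p_F$ and requiring its supremum over prices with $n_F>0$ to be $\le0$ yields an upper bound on $p_L^*$ (namely $p_L^*\le c+\Delta$). For $\text{SP}_L$, its profit $p_L-c$ increases while $n_L=1$, so optimality forces $p_L^*$ onto the kink $\hat x=1$; that equation ties $p_F^*$ to $p_L^*$ by the affine relation in the statement, and the requirement that $\text{SP}_L$'s interior-demand parabola be non-increasing just past the kink (equivalently, that its vertex lie weakly to the left of the kink) gives the lower bound $p_L^*\ge c+1$. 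Intersecting, $p_L^*\in[c+1,\,c+\Delta]$ --- non-empty exactly because $\Delta\ge1$ --- and a direct verification confirms every such price pair, together with $n_L^*=1$, $n_F^*=0$, is a Nash equilibrium. One then excludes other equilibria in this regime: an interior equilibrium is impossible because the first-order-condition point is infeasible, and an $n_L^*=0$ equilibrium is impossible because for $\Delta\ge1$ the dominant SP can always attract a positive mass of EUs at a profitable price.

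\emph{The cases $\Delta=1$ and $\Delta<-1$.} At the borderline $\Delta=1$ the first-order-condition point of Theorem~\ref{thm1Part1} is still feasible and one checks directly that it remains a Nash equilibrium, in addition to the corner family above (which at $\Delta=1$ collapses to the single point $p_L^*=c+1$); evaluating the Theorem~\ref{thm1Part1} point at $\Delta=1$ gives the stated extra equilibrium $p_L^*-c=n_L^*=2/3$, $p_F^*-c=n_F^*=1/3$. The case $\Delta<-1$ is then handled by the symmetric argument with $\text{SP}_L$ and $\text{SP}_F$ (and positions $0$ and $1$) interchanged, giving $n_F^*=1$, $n_L^*=0$, the affine tie $p_L^*=p_F^*+\Delta-1$, and $p_L^*$ free in $[c+1,\,c-\Delta]$.

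\emph{Main obstacle.} Because the demand is only piecewise linear, the best-response correspondences are piecewise and the bookkeeping of which linear piece is active is where the real work lies. The two delicate points are: (a) proving that the pair of one-dimensional deviation-proofness inequalities used for the $\Delta\ge1$ corner is not merely necessary but jointly sufficient for equilibrium --- in particular that no large deviation crossing a kink is profitable for either SP; and (b) showing that the families exhibited above exhaust \emph{all} Nash equilibria in each regime, which is precisely the case analysis ruling out spurious interior or opposite-corner equilibria.
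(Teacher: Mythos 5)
This statement is quoted verbatim from the prequel (Theorem~2 of Part~I); the present paper contains no proof of it, so there is nothing here to compare your argument against line by line. That said, your reduction to the Stage-2 pricing subgame, the derivation of the piecewise-linear demand from the indifference point $\hat x=\Delta-p_L+p_F+t_F$, and the corner analysis via an undercutting condition for the zero-share SP plus a kink condition for the full-share SP is the standard and essentially the only reasonable route for a Hotelling corner equilibrium, and the overall architecture is sound.

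There is, however, a concrete gap you have not closed: carried through, your own computations do not produce the $t$-free formulas in the statement for arbitrary $I_L,I_F$. The kink $\hat x=1$ sits at $p_L=p_F+\Delta-t_L$ and the no-undercutting bound for SP$_F$ is $p_L\le c+\Delta-t_L$, so the stated relation $p_F^*=p_L^*-\Delta$ with $p_L^*\in[c+1,c+\Delta]$ only emerges when $t_L=I_F/I_L=0$; symmetrically, the $\Delta<-1$ corner gives $p_L=p_F+\Delta+t_F$, which does not reproduce the stated $p_L^*=p_F^*+\Delta-1$ for any $t_F\in[0,1]$. The resolution is that Theorem~2 of Part~I describes the SPNE of the \emph{full} sequential game, in which the equilibrium investments ($I_F^*=0$ for $\Delta\ge1$, $I_F^*=I_L^*$ for $\Delta\le-1$, per the disagreement-payoff discussion in Appendix~A here) pin down $t_L$; your proposal treats it as a fixed-$(I_L,I_F)$ subgame statement and therefore cannot recover the stated constants without importing those investment levels — and even then the extra $-1$ in case~(3) does not fall out of your kink equation, so either a modeling convention of Part~I is missing from your reduction or that constant needs independent verification. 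Finally, the two items you flag yourself — joint sufficiency of the two one-dimensional deviation inequalities (no profitable deviation crossing a kink) and exhaustiveness of the equilibrium families — are exactly the parts that carry the burden of proof and are left undone; as it stands the proposal is a correct plan rather than a proof.
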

Using the above, we now proceed to determine $(I_{L}^{*}, I_{F}^{*}, \tilde{s}^*, \theta)$ in the next two sections.
These, together with $\tilde{s}^*, \theta^*$,  will provide the payoffs of the individual SPs, $\pi_L^*, \pi_F^*$.

\subsection{Properties of the equilibrium-type solutions}
\label{standard}
We now obtain identify some properties of the  equilibrium-type solutions.

We define the {\it aggregate excess profit} to be the additional profit yielded from the cooperation in the bargaining framework:
\begin{definition}\label{def: BM-u-excess}
Aggregate Excess Profit ($u_{excess}$): The aggregate excess profit is defined as
\begin{equation}\label{equ: BM-u-excess}
\begin{aligned}
u_{excess}=&\pi_{L}-d_{L}+\pi_{F}-d_{F} \\
= & n_{F}(p_{F}-c)+n_{L}(p_{L}-c)-\gamma I_{L}^2-d_{L}-d_{F}
\end{aligned}
\end{equation}	
\end{definition}


We have argued in the last paragraph of Section~\ref{sec: systemmodel1} that the equilibrium-type $p_L^*, p_F^*, n_L^*,  n_F^*$ are expressions only of $I_{L}^{*}, I_{F}^{*}$. Thus, under the equilibrium-type solutions,  $u_{excess}$ is only a function of $I_{F}^*, I_{L}^*, d_F, d_L$. We denote $u_{excess}^{*}=u_{excess}|_{I_{L}=I_{L}^{*}\&I_{F}=I_{F}^{*}}$.

\begin{theorem}\label{thm: payoffs in bargaining are better}
 The equilibrium-type payoffs of SPs satisfy the following property:
\begin{align}
&\pi_{L}^{*}=(1-w)u_{excess}^*+d_L\label{equ: pi_L-basecase} \\
&\pi_{F}^{*}=wu_{excess}^*+d_F.\label{equ: pi_F-basecase}
\end{align}
\end{theorem}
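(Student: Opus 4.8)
The plan is to exploit that the remuneration transfer $\theta$ makes the Stage-1 bargaining a problem of \emph{transferable utility}: once the spectrum pair $(I_L,I_F)$ is fixed, the two SPs' combined payoff is pinned down, and $\theta$ (together with $\tilde s$) merely reallocates it between them. Indeed, adding \eqref{equ: BM-L-payoff} and \eqref{equ: BM-F-payoff} cancels both $\tilde s I_F^2$ and $\theta$, leaving
\[
\pi_L+\pi_F \;=\; n_L(p_L-c)+n_F(p_F-c)-\gamma I_L^2 .
\]
By the discussion following Theorem~\ref{thm1Part1} (Theorems~1 and~2 of Part~\uppercase\expandafter{\romannumeral1}), the Stage-2 and Stage-3 quantities $p_L,p_F,n_L,n_F$ are functions of $(I_L,I_F)$ alone; hence so is $\pi_L+\pi_F$, and by Definition~\ref{def: BM-u-excess} we get $\pi_L+\pi_F-d_L-d_F=u_{excess}$, again a function of $(I_L,I_F)$ only.

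First I would fix any feasible $(I_L,I_F)$ (so that $L_0\le I_L$, $0\le I_F\le I_L$) and note that, since $\theta$ is an unconstrained real variable appearing nowhere else in \eqref{equ: BM-bargainging optimization-2}, every point of the segment
\[
S(I_L,I_F)=\bigl\{(\pi_L,\pi_F):\pi_L+\pi_F=u_{excess}+d_L+d_F,\ \pi_L\ge d_L,\ \pi_F\ge d_F\bigr\}
\]
is attainable within the constraints; this segment is nonempty precisely when $u_{excess}(I_L,I_F)\ge 0$, which must hold at any equilibrium-type solution because of the individual-rationality constraint $(\pi_L,\pi_F)\ge(d_L,d_F)$. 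Setting $a=\pi_L-d_L\ge0$ and $b=\pi_F-d_F\ge0$ with $a+b=u_{excess}$, the restriction of the Nash objective to $S(I_L,I_F)$ is $b^{w}a^{1-w}$, and maximizing $w\ln b+(1-w)\ln a$ subject to $a+b=u_{excess}$ (strict concavity of $\log$, i.e.\ weighted AM--GM or a one-line Lagrange computation) gives the unique maximizer $a=(1-w)u_{excess}$, $b=w\,u_{excess}$, with optimal value $w^{w}(1-w)^{1-w}\,u_{excess}$. The boundary cases are immediate: for $w\in\{0,1\}$ the objective collapses to $a$ or $b$, whose maximum over $S(I_L,I_F)$ is $u_{excess}$, matching the same formula; and if $u_{excess}=0$ then $S(I_L,I_F)$ is the single point $(d_L,d_F)$.

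Then I would lift this to the full problem \eqref{equ: BM-bargainging optimization-2}. Since $w^{w}(1-w)^{1-w}>0$, the inner optimum $w^{w}(1-w)^{1-w}\,u_{excess}(I_L,I_F)$ is strictly increasing in $u_{excess}(I_L,I_F)$, so any optimal $(I_L^*,I_F^*)$ maximizes $u_{excess}$ over the feasible region; hence $u_{excess}^*=u_{excess}(I_L^*,I_F^*)$ equals that maximum, and the corresponding optimal split on $S(I_L^*,I_F^*)$ is $\pi_L^*-d_L=(1-w)u_{excess}^*$ and $\pi_F^*-d_F=w\,u_{excess}^*$, which is exactly \eqref{equ: pi_L-basecase}--\eqref{equ: pi_F-basecase}. (Existence of a maximizer of $u_{excess}$, hence of an equilibrium-type solution, is presupposed in the statement and is verified explicitly in the characterization of Section~\ref{sec: Un-outcome}.)

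I expect the one point requiring care, rather than a genuine obstacle, to be the verification that varying $\theta$ alone really does sweep out all of $S(I_L,I_F)$ without violating any other constraint of \eqref{equ: BM-bargainging optimization-2} — which holds because $\theta$ is free and the remaining constraints involve only $I_L,I_F$ — together with the bookkeeping for $w\in\{0,1\}$ and $u_{excess}^*=0$. The rest is the standard weighted-Nash-bargaining-over-a-simplex computation, and it is precisely what forces the aggregate-excess-profit-maximizing (collusive) structure analyzed in the next section.
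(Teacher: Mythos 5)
Your argument is correct, but it reaches \eqref{equ: pi_L-basecase}--\eqref{equ: pi_F-basecase} by a genuinely different route than the paper. The paper's proof is a two-line affair: it quotes the standard first-order property of the asymmetric Nash bargaining solution from the bargaining literature (equation (2) of the cited Muthoo reference), namely $(\pi_F^*-d_F)/w=(\pi_L^*-d_L)/(1-w)$, and then substitutes into Definition~\ref{def: BM-u-excess} to read off the split. You instead derive the split from first principles: you observe that $\theta$ (and $\tilde s$) make the problem one of transferable utility for fixed $(I_L,I_F)$, so the attainable payoff pairs form the full segment $S(I_L,I_F)$ of slope $-1$, and you solve the weighted Nash-product maximization over that segment explicitly via weighted AM--GM/Lagrange, then lift to the outer problem by monotonicity of $w^w(1-w)^{1-w}u_{excess}$ in $u_{excess}$. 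What your approach buys is self-containment and a bit more rigor: the ratio formula the paper cites is valid precisely because the Pareto frontier here has slope $-1$, a fact the paper never verifies but that your segment construction makes explicit; you also handle the boundary cases $w\in\{0,1\}$ and $u_{excess}^*=0$, where the paper's division by $w$ would be problematic; and your monotonicity step simultaneously establishes the content of Theorem~\ref{thm: BM-bargaining optimization-3}, which the paper proves separately. What the paper's approach buys is brevity and a direct appeal to a known result. Both are sound.
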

\begin{remark}
The SPs split $u_{excess}^{*}$ based on their relative bargaining power, SP$_{F}$ obtains a portion $w$, and SP$_{L}$ obtains the rest. Each SP's payoff equals his share of this aggregate excess profit plus his disagreement payoff. Thus, his payoff increases with his bargaining power and his disagreement payoff; the latter depends on  $|\Delta|, s, \gamma.$ 	
\end{remark}

\proof
 From (2) in \cite{AM2002}, the NBS $(\pi_{L}^{*}, \pi_{F}^{*})$  satisfies:
 \begin{align}\label{equ: BM-NBS-fraction}
\frac{\pi_{F}^{*}-d_{F}}{w}=\frac{\pi_{L}^{*}-d_{L}}{1-w}.
\end{align}
\begin{align}\label{equ: BM-pi_L-d_L}
 \mbox{From (\ref{equ: BM-NBS-fraction}), } \ \  \pi_{L}^*-d_{L}=\frac{1-w}{w}(\pi_{F}^*-d_{F}).
\end{align}
Substituting (\ref{equ: BM-pi_L-d_L}) into (\ref{equ: BM-u-excess}), we have $$u_{excess}^*=\frac{1}{w}(\pi_{F}^*-d_{F}).$$ Thus, \eqref{equ: pi_L-basecase} follows.
Next,
\begin{align*}
\pi_{L}^*-d_{L}=\frac{1-w}{w}(\pi_{F}^*-d_{F}) = (1-w)u_{excess}
\end{align*}
Thus, \eqref{equ: pi_F-basecase} follows.
\qed

Since $0 < w < 1$, from \eqref{equ: pi_L-basecase}, \eqref{equ: pi_F-basecase},  $\pi_{L}^{*} \geq  d_L$ and $\pi_{F}^{*} \geq  d_F$ if and only if $u_{excess}^* \geq 0.$ 

Now, we can solve maximization \eqref{equ: BM-bargainging optimization-2} in two steps: 1) obtain the optimum $I_{L}^{*}$, $I_{F}^{*}$ by Theorem~\ref{thm: BM-bargaining optimization-3}, 2) obtain the optimum  $\tilde{s}^*, \theta^*$ by \eqref{equ: BM-bargaining-s} and \eqref{equ: BM-bargaining-theta}.


\begin{theorem}\label{thm: BM-bargaining optimization-3}
The optimum $(I_L^*, I_F^*)$ of  (\ref{equ: BM-bargainging optimization-2}) are also the optimum solutions of
\begin{equation}\label{equ: BM-bargainging optimization-3temp}
\begin{aligned}
\max_{I_{L}, I_{F}}\quad&u_{excess}\\
s.t.\quad&L_0\leq I_{L},\,\,0\leq I_{F}\leq I_{L}\\
&u_{excess} \geq 0
\end{aligned}
\end{equation}
\end{theorem}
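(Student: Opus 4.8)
The plan is to decouple the choice of the spectrum pair $(I_L,I_F)$ from the choice of the monetary transfers $(\tilde s,\theta)$, using the fact that $\tilde s$ and $\theta$ only redistribute a fixed total surplus between the two SPs. First I would invoke the conclusion of the last paragraph of Section~\ref{sec: systemmodel1}: once $(I_L,I_F)$ is fixed with $0\le I_F\le I_L$, Theorem~\ref{thm1Part1} (and the theorem following it, covering $|\Delta|\ge 1$) determine $p_L,p_F,n_L,n_F$ as explicit functions of $(I_L,I_F)$ alone. Hence the total surplus $R(I_L,I_F):=n_F(p_F-c)+n_L(p_L-c)-\gamma I_L^2=\pi_L+\pi_F$, and therefore $u_{excess}=R(I_L,I_F)-d_L-d_F$, depend only on $(I_L,I_F)$; while from the payoff equations in \eqref{equ: BM-bargainging optimization-2} we have $\pi_F=n_F(p_F-c)-\tilde s I_F^2+\theta$, which — since $\theta$ is sign-unrestricted — can be set to any real value by varying $\theta$ alone (with $\tilde s$ held at any admissible value), and then $\pi_L=R(I_L,I_F)-\pi_F$ is determined.

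Second I would identify the projection of the feasible region of \eqref{equ: BM-bargainging optimization-2} onto the $(I_L,I_F)$-plane. A pair $(I_L,I_F)$ with $L_0\le I_L$ and $0\le I_F\le I_L$ can be completed to a feasible point of \eqref{equ: BM-bargainging optimization-2} if and only if $u_{excess}(I_L,I_F)\ge 0$. The ``only if'' is immediate: $(\pi_L,\pi_F)\ge(d_L,d_F)$ forces $\pi_L+\pi_F-d_L-d_F\ge 0$. For the ``if'' direction, fix any admissible $\tilde s$ and pick $\theta$ so that $\pi_F=d_F+w\,u_{excess}(I_L,I_F)\ge d_F$; then $\pi_L=R(I_L,I_F)-\pi_F=d_L+(1-w)u_{excess}(I_L,I_F)\ge d_L$, so $(\pi_L,\pi_F)\ge(d_L,d_F)$. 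This projected feasible set is exactly the feasible set of \eqref{equ: BM-bargainging optimization-3temp}.

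Third, for each fixed feasible $(I_L,I_F)$ I would carry out the inner maximization over $(\tilde s,\theta)$, i.e.\ over the split. Writing $a=\pi_F-d_F$ and $u=u_{excess}(I_L,I_F)\ge 0$, the problem is $\max\{\,a^w(u-a)^{1-w}:0\le a\le u\,\}$; a one-line derivative computation (or weighted AM--GM) gives the maximizer $a^\star=w\,u$ and optimal value $w^w(1-w)^{1-w}\,u$. This is precisely where $\theta$ is used, and backing out $\tilde s^\star,\theta^\star$ from $\pi_F=d_F+a^\star$ yields the formulas \eqref{equ: BM-bargaining-s}, \eqref{equ: BM-bargaining-theta}. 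Substituting this optimal inner value back, \eqref{equ: BM-bargainging optimization-2} becomes $\max\{\,w^w(1-w)^{1-w}\,u_{excess}(I_L,I_F):L_0\le I_L,\ 0\le I_F\le I_L,\ u_{excess}\ge 0\,\}$, a positive constant times the objective of \eqref{equ: BM-bargainging optimization-3temp} over the same feasible set. Hence the two problems have the same optimizers in $(I_L,I_F)$, which is the assertion of the theorem.

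All steps are short, and the only points that deserve care are: (i) that, given $(I_L,I_F)$, the pair $(\pi_L,\pi_F)$ can be moved freely along the line $\pi_L+\pi_F=R(I_L,I_F)$ — which rests squarely on $\theta$ being a free, sign-unrestricted variable, so that the argument also covers the case $I_F=0$, where $\tilde s$ is inert; and (ii) the degenerate case $u_{excess}^\star=0$, in which the bargaining objective vanishes identically on the feasible set and the equivalence is trivial (and a fortiori the case where \eqref{equ: BM-bargainging optimization-2} is infeasible, where \eqref{equ: BM-bargainging optimization-3temp} is infeasible too and the statement is vacuous). I expect (i) to be the step a referee would most want spelled out.
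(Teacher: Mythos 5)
Your proposal is correct and follows essentially the same route as the paper: the paper likewise reduces the objective to $w^{w}(1-w)^{1-w}u_{excess}$ via the proportional-split property $\pi_F^*-d_F = w\,u_{excess}^*$, $\pi_L^*-d_L=(1-w)\,u_{excess}^*$ (its Theorem~\ref{thm: payoffs in bargaining are better}, quoted from the NBS literature), notes that $u_{excess}$ depends only on $(I_L,I_F)$, and observes that the constraint $(\pi_L,\pi_F)\ge(d_L,d_F)$ projects to $u_{excess}\ge 0$. The only difference is that you re-derive the split property by carrying out the inner maximization over $(\tilde s,\theta)$ explicitly (weighted AM--GM), which makes the nested-optimization structure, and the role of the sign-unrestricted $\theta$ when $I_F=0$, more transparent than the paper's appeal to Theorem~\ref{thm: payoffs in bargaining are better}.
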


\begin{remark}
Thus, the equilibrium-type  $(I_L^*, I_F^*)$ can be obtained by solving a maximization that
seeks to maximize the overall payoffs of the two SPs.
\end{remark}

\proof
From \eqref{equ: pi_L-basecase} and \eqref{equ: pi_F-basecase}
\[(\pi_{F}-d_{F})^{w}(\pi_{L}-d_{L})^{1-w}=w^{w}(1-w)^{1-w}u_{excess}.\]	
Since $0 < w < 1$, maximizing the objective function of Theorem~\ref{thm: BM-bargaining optimization-3},
is equivalent to maximizing $u_{excess}.$ 
  Right after defining $u_{excess}$, we have argued that $u_{excess}^*$ is  a function only of  $I_{F}^*, I_{L}^*, d_L, d_F$. Thus, $u_{excess}^*$ does not depend on $\tilde{s}^*, \theta^*$. We have already argued that $(\pi_{L}, \pi_{F})\geq (d_{L}, d_{F})$ is equivalent to $u_{excess}\geq0$. 
\qed

Since $u_{excess}$ is a function only of $I_L, I_F, d_L, d_F$ as noted right after its definition, the choice of $\tilde{s}, \theta$ does not affect $u_{excess}$. But,  $\tilde{s}^*, \theta^*$ must be determined so as to split $u_{excess}^* - d_L - d_F$ into $\pi_L^*, \pi_F^*$, as per  \eqref{equ: BM-F-payoff} and \eqref{equ: pi_F-basecase} \big(\eqref{equ: BM-L-payoff}, \eqref{equ: pi_L-basecase} follow from \eqref{equ: BM-F-payoff} and \eqref{equ: pi_F-basecase}\big). From \eqref{equ: BM-F-payoff} and \eqref{equ: pi_F-basecase},
  \[ \theta^* - \tilde{s}^* (I_F^*)^2   = wu_{excess}^* + d_F - n_F^* (p_F^* - c). \]
   When $I_F^* = 0$, $\theta^*$ is unique; otherwise,  there may be multiple values of $\tilde{s}^*, \theta^*$ which accomplish the above. When $I_F^* > 0$, we choose $\theta^* = 0$ and $\tilde{s}^*$ to satisfy the above equation. Our solution   utilizes  additional remuneration transfer  only when  SP$_F$ does not reserve any spectrum offered by SP$_L$ and thus that route for transfer of money between the SPs to ensure their  commensurate shares  is closed.  Thus, 
\begin{equation}\label{equ: BM-bargaining-s}
\begin{aligned}
\tilde{s}^{*}=\left\{\begin{aligned}
&\frac{1}{(I_{F}^{*})^{2}}(n_{F}^{*}(p_{F}^{*}-c)-d_{F}-wu_{excess}^{*})&\,& I_{F}^*>0\\
&\text{$s^*$ has no significance}&\,& I_{F}^*=0
\end{aligned}\right.
\end{aligned}
\end{equation}
\begin{equation}\label{equ: BM-bargaining-theta}
\begin{aligned}
\theta^{*}=\left\{\begin{aligned}
& 0 &\,& I_{F}^*>0\\
&d_{F}+wu_{excess}^{*} -  n_{F}^{*}(p_{F}^{*}-c) &\,& I_{F}^*=0
\end{aligned}\right.
\end{aligned}
\end{equation}

\begin{remark}\label{remark3}
Intuitively, as SP$_F$'s bargaining power ($w$) increases, he should get a larger share of the overall revenue. Thus, the marginal reservation fee he pays SP$_L$ ought to decrease and the additional remuneration he receives from SP$_L$ ought to increase. The analysis above confirms this intuition. From \eqref{equ: BM-bargainging optimization-3temp}, the equilibrium-type $I_L^*, I_F^*, u_{excess}^*$ do not depend on $w.$ Since  the equilibrium-type $n_L^*, n_F^*, p_L^*, p_F^*$ depend only on $I_L^*, I_F^*$, other than parameters such as $\Delta$,  $\tilde{s}^*$ (respectively, $\theta^*$)  is a linearly decreasing (respectively, increasing)  function of $w$, from \eqref{equ: BM-bargaining-s}~and~\eqref{equ: BM-bargaining-theta}.
\end{remark}



\subsection{Characterizing the equilibrium-type solutions}\label{sec: Un-outcome}
We now characterize the equilibrium type solutions. Unless otherwise mentioned, the proofs have been relegated to Appendix~\ref{Appendix: Unequal1}. 



\begin{theorem}\label{toprove}
 Let $|\Delta| <1$. The following holds for each equilibrium-type solution that may exist: $I_L^* = L_0$, and
  \begin{itemize}
  \item [(1)] If $-1<\Delta<0$, $I_{F}^{*} = L_0,$ and $  s^{*}$  is obtained by \eqref{equ: BM-bargaining-s}, and $\theta^*=0$.
\item[(2)] If $0<\Delta<1$, $I_{F}^{*} = 0$, $s^{*}$ has no significance, and $\theta^*$ is obtained by \eqref{equ: BM-bargaining-theta}.
\item[(3)] If $\Delta=0$, both the above constitute equilibrium-type solutions if there exists any equilibrium-type solution.
\end{itemize}
\end{theorem}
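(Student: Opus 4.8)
The plan is to invoke Theorem~\ref{thm: BM-bargaining optimization-3}, which tells us that any equilibrium-type $(I_L^*,I_F^*)$ maximizes $u_{excess}$ over $\{L_0\le I_L,\ 0\le I_F\le I_L,\ u_{excess}\ge 0\}$. I would then make this objective explicit. Since $|\Delta|<1$, Theorem~\ref{thm1Part1} gives, writing $r:=I_F/I_L\in[0,1]$, that $p_L^*-c=n_L^*=\frac{2+\Delta-r}{3}$ and $p_F^*-c=n_F^*=\frac{1-\Delta+r}{3}$, so \eqref{equ: BM-u-excess} becomes
\[
u_{excess}=\tfrac19\bigl[(2+\Delta-r)^2+(1-\Delta+r)^2\bigr]-\gamma I_L^2-d_L-d_F .
\]
The observation that drives everything is that the bracket depends on $(I_L,I_F)$ only through the ratio $r$, while $I_L$ itself appears only in the term $-\gamma I_L^2$; moreover $(I_L,I_F)\mapsto(I_L,r)$ is a bijection between $\{I_L\ge L_0,\ 0\le I_F\le I_L\}$ and $\{I_L\ge L_0,\ 0\le r\le 1\}$. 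Hence the maximization decouples into an independent choice of $r\in[0,1]$ and of $I_L\ge L_0$.

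Next I would optimize each piece. For $I_L$: with $r$ fixed, $u_{excess}$ is strictly decreasing in $I_L$ (recall $\gamma>0$), so from any feasible $(I_L,I_F)$ with $I_L>L_0$ one can pass to $(L_0,rL_0)$, which preserves $r$, preserves feasibility ($rL_0\le L_0$), and strictly raises $u_{excess}$; thus $I_L^*=L_0$. For $r$: let $g(r):=(2+\Delta-r)^2+(1-\Delta+r)^2=2r^2-2(1+2\Delta)r+(2+\Delta)^2+(1-\Delta)^2$, a strictly convex quadratic, so it is maximized on $[0,1]$ only at the endpoints, and a one-line computation gives $g(0)-g(1)=4\Delta$. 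Therefore: for $0<\Delta<1$ the unique maximizer is $r=0$, giving $I_F^*=rI_L^*=0$; for $-1<\Delta<0$ it is $r=1$, giving $I_F^*=I_L^*=L_0$; and for $\Delta=0$ the maximizers are exactly $r\in\{0,1\}$, i.e. $I_F^*\in\{0,L_0\}$.

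The hard part, as I see it, is reconciling this with the constraint $u_{excess}\ge 0$ and with the ``if there exists any equilibrium-type solution'' clause in part~(3). Here I would argue: if an equilibrium-type solution exists, the feasible set of \eqref{equ: BM-bargainging optimization-3temp} is nonempty, so the maximum of $u_{excess}$ over the larger set $\{L_0\le I_L,\ 0\le I_F\le I_L\}$ is already $\ge 0$ and therefore coincides with the constrained maximum, with the same maximizers; the characterization above then applies verbatim and yields (1) and (2). For $\Delta=0$, since $g(0)=g(1)$ both $(L_0,0)$ and $(L_0,L_0)$ attain this common nonnegative maximum, so each is optimal for \eqref{equ: BM-bargainging optimization-3temp} and hence, by the equivalence established inside the proof of Theorem~\ref{thm: BM-bargaining optimization-3}, for \eqref{equ: BM-bargainging optimization-2}; adjoining the Stage~2 and Stage~3 quantities from Theorem~\ref{thm1Part1} and the values of $\tilde s^*,\theta^*$ from \eqref{equ: BM-bargaining-s} and \eqref{equ: BM-bargaining-theta} produces two full equilibrium-type solutions, which is (3). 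Finally, in each case the claimed $\tilde s^*$ and $\theta^*$ are just the appropriate branch of \eqref{equ: BM-bargaining-s} and \eqref{equ: BM-bargaining-theta}: when $\Delta<0$ we have $I_F^*>0$, so $\theta^*=0$ and $\tilde s^*$ is given by \eqref{equ: BM-bargaining-s}; when $\Delta>0$ we have $I_F^*=0$, so $\tilde s^*$ has no significance and $\theta^*$ is given by \eqref{equ: BM-bargaining-theta}.
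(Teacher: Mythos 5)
Your proposal is correct and follows essentially the same route as the paper's own proof: substitute the Stage-2/3 expressions from Theorem~\ref{thm1Part1} so that $u_{excess}$ depends only on $I_L$ and the ratio $r=I_F/I_L$, deduce $I_L^*=L_0$ by the ratio-preserving rescaling argument, use convexity in $r$ (equivalently in $I_F$) to push the maximizer to the boundary, and decide between $r=0$ and $r=1$ via the difference $4\Delta/9$; the handling of the $u_{excess}\ge 0$ constraint and of $\tilde s^*,\theta^*$ also mirrors the paper's argument.
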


Assuming that the equilibrium-type solution exists, Theorem~\ref{toprove} gives the following insights.
 SP$_{L}$ always acquires minimum amount ($L_{0}$) of spectrum from a regulator. This is because the EUs must choose between the SP$_L$ and SP$_F$, and both determine their spectrum acquisition  together so as to maximize the overall profits and subsequently split their profits. The lack of competition leads to a collusive outcome in which they together opt for the minimum overall spectrum acquisition from the regulator. In contrast, when SP$_L$, SP$_F$ decide their spectrum acquisitions separately, $I_L^*$ exceeds the minimum mandated amount 
  (Theorem~1  of Part~\uppercase\expandafter{\romannumeral1}~\cite{Part1}). This happens because each SP seeks to maximize his profit through a sequence of non-cooperative games.


 The equilibrium-type solutions differ  in how the spectrum acquired from the regulator is split between SP$_L$ and SP$_F.$  This happens because the SPs decide the split of the acquired spectrum jointly to maximize their overall profits, which is accomplished if more EUs choose a SP that charges more. To ensure this, the more apriori popular SP retains the entire leased spectrum: 1) SP$_{F}$ if $v^{L}<v{^F}$, 2) SP$_L$   if $v^{L}>v^{F}$. If both  are equally popular apriori, i.e., $v^{L}=v^{F}$, both the above options constitute equilibrium-type solutions. Then,  even if the more apriori popular  SP charges a high price, more EUs would choose him because of his greater prior popularity and because he can offer  better quality of service  through the acquisition of the leased spectrum in its entirety. Thus, the more popular SP gets the lion share of subscription revenue, which he shares with the other. Thus, if SP$_{F}$ is more popular, he pays SP$_L$ $\tilde{s}^*I_{F}^{*2}$ amount ($I_F^* = I_L^*=L_0$ here); if SP$_{L}$ is more popular, he pays SP$_{F}$ $\theta^*$ amount ($I_F^* = 0,  I_L^*=L_0$ here). $\tilde{s}^* > 0$ in the first case, and   $\theta^*$ is $0$ and positive respectively in the two cases, as Theorem~\ref{thm: Un-outcome-1} will show.

  We now consider the degree of cooperation, i.e., $I_{F}^{*}/I_{L}^{*}$, which clearly equals  $0$ or $1$: these respectively arise   if SP$_{L}$ and SP$_F$ are respectively more apriori popular. Since the more apriori popular SP retains the entire leased spectrum (following Theorem~\ref{toprove} as explained in the previous paragraph), $I_{F}^{*}/I_{L}^{*}$ is $1$ if  $v^{L}<v{^F}$, i.e., if $\Delta = v_L - v_F < 0$,  and $I_{F}^{*}/I_{L}^{*}$ is $0$ if  $v^{L}> v{^F}$, i.e., if $\Delta > 0$.
  Thus, $I_{F}^{*}/I_{L}^{*}$ discontinuously transitions from $1$ to $0$ as $\Delta$ transitions from negative to positive, the transition occurring exactly at $\Delta = 0$ (Figure~\ref{fig-bargaining-degree of cooperation}).      If both SPs have equal priors, i.e., $\Delta = 0$, the degree of cooperation can be either $0$ or $1.$ In contrast, when SP$_L$, SP$_F$ decide their spectrum acquisitions separately,  $I_F^*$ can be between $0$ and $I_L^*$ (Theorem~1  of Part~\uppercase\expandafter{\romannumeral1}~\cite{Part1}).
  Figure~\ref{fig-bargaining-degree of cooperation} elucidates this distinction. The plot  for the individual spectrum acquisitions has been obtained from Theorem~1  of Part~\uppercase\expandafter{\romannumeral1}~\cite{Part1} considering at each $\Delta$,  $\tilde{s}$ to be  that which maximizes the sum of the disagreement payoffs. In this case, the jump in the degree of cooperation at a threshold value of $\Delta$ follows from Theorem~1~(2) of Part~\uppercase\expandafter{\romannumeral1}~\cite{Part1} directly. Figure 6 (left) of Part~\uppercase\expandafter{\romannumeral1}~\cite{Part1} also shows a plot for this case with a similar jump. 

   \begin{figure}
\begin{center}
  \includegraphics[width=2in]{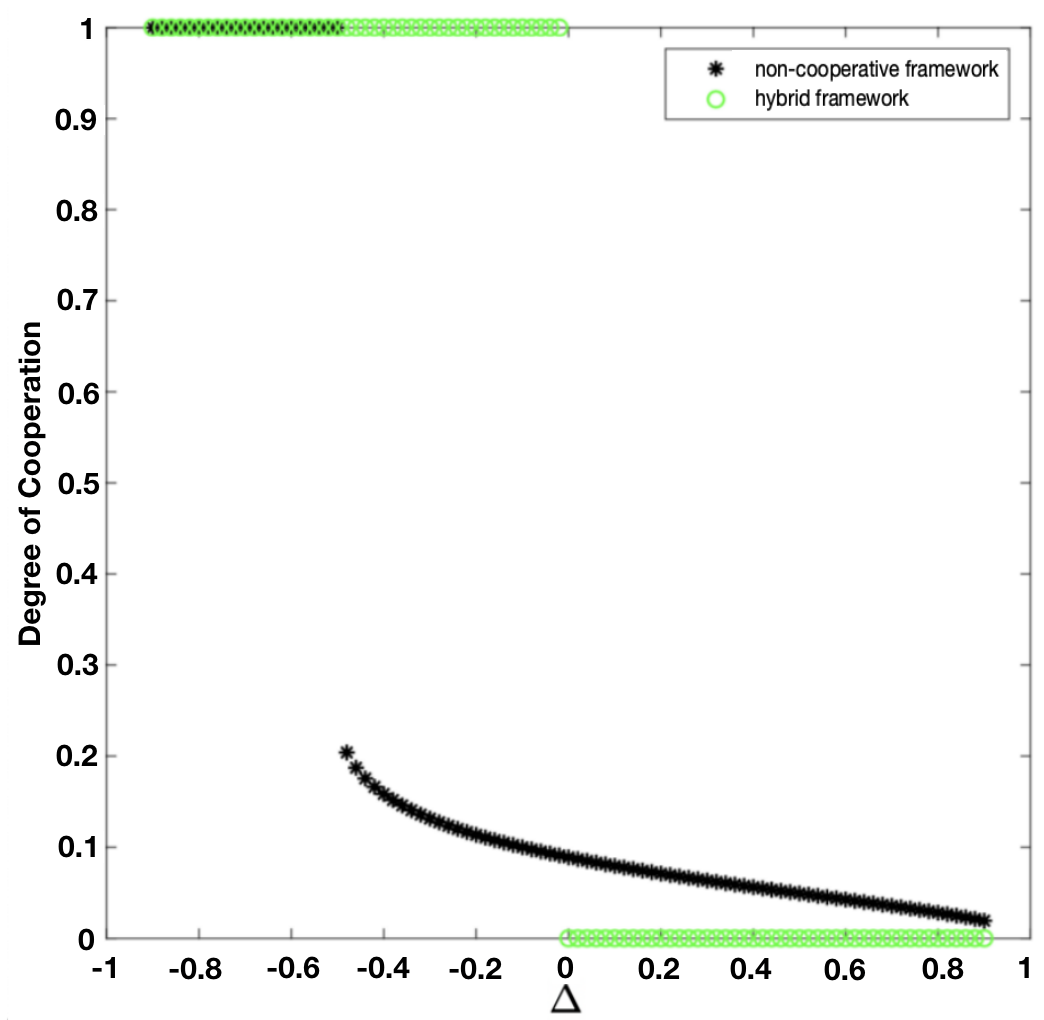}
  \caption{The degree of cooperation vs. $\Delta$.}
\label{fig-bargaining-degree of cooperation}
\end{center}
\end{figure}

Now, $\tilde{s}^{*}, \theta^*$ can be  obtained from \eqref{equ: BM-bargaining-s} and  \eqref{equ: BM-bargaining-theta} respectively,  $p_L^*, p_F^*, n_L^*, n_F^*$ can be obtained from Theorem~\ref{thm1Part1} leading to the following overall equilibrium-type solutions:

\begin{theorem}\label{thm: Un-outcome-1}
Let $|\Delta| < 1$:
\begin{itemize}
  \item [(1)] If $-1<\Delta<0$, if an  equilibrium-type solution exists, it is: $(I_{F}^{*}, I_{L}^{*})=(L_{0}, L_{0})$, $\tilde{s}^{*}$ is obtained by \eqref{equ: BM-bargaining-s}, $\theta^* = 0$, and
\begin{equation*}
\begin{aligned}
&p_{L}^{*}=c+\frac{1}{3}+\frac{\Delta}{3},\, p_{F}^{*}=c+\frac{2}{3}-\frac{\Delta}{3}\\
&n_{L}^{*}=\frac{1}{3}+\frac{\Delta}{3},\ n_{F}^{*}=\frac{2}{3}-\frac{\Delta}{3}
\end{aligned}
\end{equation*}

\item[(2)] If $0<\Delta<1$, if an   equilibrium-type solution exists, it is: $(I_{F}^{*}, I_{L}^{*})=(0, L_{0})$, $\tilde{s}^{*}$ is of no significance, $\theta^*$ is obtained by \eqref{equ: BM-bargaining-theta}, and
\begin{equation*}
\begin{aligned}
&p_{L}^{*}=c+\frac{2}{3}+\frac{\Delta}{3},\, p_{F}^{*}=c+\frac{1}{3}-\frac{\Delta}{3}\\
&n_{L}^{*}=\frac{2}{3}+\frac{\Delta}{3},\, n_{F}^{*}=\frac{1}{3}-\frac{\Delta}{3}
\end{aligned}
\end{equation*}

\item[(3)] If $\Delta=0$, if an equilibrium-type solution exists, the  equilibrium-type solutions are:
\begin{itemize}
  \item $(I_{F}^{*}, I_{L}^{*})=(0, L_{0})$, $\tilde{s}^{*}$ is of no significance, $\theta^*$ is obtained by \eqref{equ: BM-bargaining-theta},
\begin{equation*}
\begin{aligned}
&p_{L}^{*}=c+\frac{2}{3}=n_L^*+c,\, p_{F}^{*}=c+\frac{1}{3}=n_F^*+c
\end{aligned}
\end{equation*}
\item $(I_{F}^{*}, I_{L}^{*})=(L_{0}, L_{0})$, $\tilde{s}^{*}$ is obtained by \eqref{equ: BM-bargaining-s}, $\theta^*=0$,
\begin{equation*}
\begin{aligned}
&p_{L}^{*}=c+\frac{1}{3}=n_L^*+c,\, p_{F}^{*}=c+\frac{2}{3}=n_F^*+c.
\end{aligned}
\end{equation*}

\end{itemize}
\end{itemize}
\end{theorem}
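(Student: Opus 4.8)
The plan is to obtain Theorem~\ref{thm: Un-outcome-1} as an immediate consequence of Theorem~\ref{toprove}, Theorem~\ref{thm1Part1}, and the expressions \eqref{equ: BM-bargaining-s}--\eqref{equ: BM-bargaining-theta} for $\tilde{s}^*$ and $\theta^*$. First I would invoke Theorem~\ref{toprove}: for any equilibrium-type solution that exists we already know $I_L^* = L_0$, together with $I_F^* = L_0$ when $-1 < \Delta < 0$, $I_F^* = 0$ when $0 < \Delta < 1$, and both possibilities when $\Delta = 0$. Hence the degree of cooperation $I_F^*/I_L^*$ is pinned down: it equals $1$ in case~(1), $0$ in case~(2), and is either $0$ or $1$ in case~(3).

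Second, I would feed these values into the closed forms of Theorem~\ref{thm1Part1}, which apply since $|\Delta| < 1$ by hypothesis and which express $p_L^*, p_F^*, n_L^*, n_F^*$ purely through $I_F^*/I_L^*$ and $\Delta$. Substituting $I_F^*/I_L^* = 1$ collapses the $I_F^*/(3 I_L^*)$ terms to $1/3$ and yields the expressions listed in case~(1); substituting $I_F^*/I_L^* = 0$ makes those terms vanish and yields the expressions in case~(2); setting $\Delta = 0$ on top of each gives the two displays in case~(3). As a sanity check one verifies that the resulting $n_L^*, n_F^*$ lie in $(0,1)$ in every case (e.g.\ $n_L^* = (1+\Delta)/3 \in (0,1/3)$ when $-1 < \Delta < 0$), confirming consistency with the hotelling split underlying Theorem~\ref{thm1Part1}.

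Third, I would read off $\tilde{s}^*$ and $\theta^*$ from \eqref{equ: BM-bargaining-s} and \eqref{equ: BM-bargaining-theta} according to whether $I_F^* > 0$ or $I_F^* = 0$. In case~(1), $I_F^* = L_0 > 0$, so $\theta^* = 0$ and $\tilde{s}^*$ is given by \eqref{equ: BM-bargaining-s}; in case~(2), $I_F^* = 0$, so $\tilde{s}^*$ carries no significance and $\theta^*$ is given by \eqref{equ: BM-bargaining-theta}; case~(3) simply records both subcases. This reproduces the statement verbatim.

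The computation is entirely routine; there is no real obstacle once Theorem~\ref{toprove} is in hand. The only point requiring care is that the statement is conditional (``if an equilibrium-type solution exists''): we are not asserting existence here, only that any equilibrium-type solution must take the stated form, which is exactly what the chain Theorem~\ref{toprove} $\Rightarrow$ Theorem~\ref{thm1Part1} $\Rightarrow$ \eqref{equ: BM-bargaining-s}--\eqref{equ: BM-bargaining-theta} delivers. The genuinely substantive work — establishing $I_L^* = L_0$ and the dichotomy on $I_F^*$ — lives in Theorem~\ref{toprove}, whose proof (relegated to the appendix) is where the maximization \eqref{equ: BM-bargainging optimization-3temp} is actually analyzed.
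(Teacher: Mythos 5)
Your proposal is correct and follows exactly the route the paper takes: the paper's own ``Step 2'' in Appendix~A derives Theorem~\ref{thm: Un-outcome-1} by taking the $I_L^*, I_F^*$ from Theorem~\ref{toprove}, substituting the ratio $I_F^*/I_L^*\in\{0,1\}$ into Theorem~\ref{thm1Part1} for $p_L^*, p_F^*, n_L^*, n_F^*$, and reading off $\tilde{s}^*, \theta^*$ from \eqref{equ: BM-bargaining-s} and \eqref{equ: BM-bargaining-theta}. Your arithmetic checks out and your remark that the conditional form of the statement defers all substantive work to Theorem~\ref{toprove} matches the paper's own framing.
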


Thus, considering only the values of $\tilde{s}^*, \theta^*$ given by \eqref{equ: BM-bargaining-s}, \eqref{equ: BM-bargaining-theta}, the equilibrium-type solution is easy to compute and unique when it exists, when $|\Delta| < 1$, with the only exception being at $\Delta = 0$, at which there are either $0$ or $2$ equilibria.  The  insights on $p_L^*, p_F^*, n_L^*, n_F^*$ are otherwise similar to those presented after Theorem 1 in Part~\uppercase\expandafter{\romannumeral1}~\cite{Part1}.

\begin{corollary}\label{cor: sum of payoffs independent of s*}
The sum of payoffs of each of the possible equilibrium-solutions presented in Theorem~\ref{thm: Un-outcome-1} is:
\begin{align}\label{equ: sum of payoffs independent of s*}
\pi^* = \pi_{L}^{*}+\pi_{F}^{*}=(1/3-|\Delta|/3)^2 + (2/3+|\Delta|/3)^2- \gamma L_0^2.	
\end{align}
\end{corollary}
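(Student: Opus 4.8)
\emph{Proof proposal.}

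The plan is to first collapse the sum of equilibrium-type payoffs into a form that is manifestly free of the disagreement payoffs and the bargaining power, and then substitute the explicit quantities from Theorem~\ref{thm: Un-outcome-1}. Adding \eqref{equ: pi_L-basecase} and \eqref{equ: pi_F-basecase} of Theorem~\ref{thm: payoffs in bargaining are better} gives $\pi^* = \pi_L^* + \pi_F^* = u_{excess}^* + d_L + d_F$. On the other hand, Definition~\ref{def: BM-u-excess} evaluated at the equilibrium-type spectrum levels reads $u_{excess}^* = n_F^*(p_F^*-c) + n_L^*(p_L^*-c) - \gamma (I_L^*)^2 - d_L - d_F$. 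Substituting this into the previous identity makes the $d_L, d_F$ terms cancel, leaving $\pi^* = n_F^*(p_F^*-c) + n_L^*(p_L^*-c) - \gamma (I_L^*)^2$. This is the essential step: it shows the equilibrium-type sum of payoffs is determined entirely by the Stage-2/Stage-3 quantities and by $\gamma(I_L^*)^2$, independently of $d_L, d_F$ and $w$.

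Next I would plug in the equilibrium-type values listed in Theorem~\ref{thm: Un-outcome-1}. In all three cases $I_L^* = L_0$, so the last term equals $-\gamma L_0^2$. A feature common to every case is that the price markup of each SP equals its market share, i.e.\ $p_L^*-c = n_L^*$ and $p_F^*-c = n_F^*$; hence $n_L^*(p_L^*-c) = (n_L^*)^2$ and $n_F^*(p_F^*-c) = (n_F^*)^2$, so $\pi^* = (n_L^*)^2 + (n_F^*)^2 - \gamma L_0^2$. It then remains only to verify case by case that, as an unordered pair, $\{n_L^*, n_F^*\} = \{\tfrac13 - \tfrac{|\Delta|}{3},\ \tfrac23 + \tfrac{|\Delta|}{3}\}$: for $-1<\Delta<0$, $n_L^* = \tfrac13 + \tfrac{\Delta}{3} = \tfrac13 - \tfrac{|\Delta|}{3}$ and $n_F^* = \tfrac23 - \tfrac{\Delta}{3} = \tfrac23 + \tfrac{|\Delta|}{3}$; for $0<\Delta<1$ the two roles are interchanged; and for $\Delta = 0$ both branches give $\{1/3, 2/3\}$. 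Since $(n_L^*)^2 + (n_F^*)^2$ is symmetric in the two shares, all cases collapse to $\bigl(\tfrac13 - \tfrac{|\Delta|}{3}\bigr)^2 + \bigl(\tfrac23 + \tfrac{|\Delta|}{3}\bigr)^2 - \gamma L_0^2$, which is \eqref{equ: sum of payoffs independent of s*}.

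There is no genuine obstacle in this argument; the conceptual content is the cancellation of $d_L + d_F$, which is what makes the sum of payoffs ``clean'' and $w$-independent. The only points requiring care are the sign bookkeeping that converts $\Delta$ to $|\Delta|$ in the two asymmetric cases, and the observation that the symmetry of the sum of squares makes the two equilibrium-type solutions at $\Delta = 0$ yield the same value.
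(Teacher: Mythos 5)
Your proof is correct and essentially matches the paper's: both reduce to the identity $\pi^* = n_L^*(p_L^*-c)+n_F^*(p_F^*-c)-\gamma (I_L^*)^2$ and then substitute the values from Theorem~\ref{thm: Un-outcome-1} using $n_L^*=p_L^*-c$, $n_F^*=p_F^*-c$, $I_L^*=L_0$. The only cosmetic difference is that the paper obtains that identity by directly adding \eqref{equ: BM-L-payoff} and \eqref{equ: BM-F-payoff} (the $\tilde{s}I_F^2$ and $\theta$ terms cancel), whereas you route through the NBS split formulas and the definition of $u_{excess}$, which amounts to the same cancellation.
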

\begin{proof}
First, from \eqref{equ: BM-L-payoff} and \eqref{equ: BM-F-payoff}, we have
\begin{align*}
\pi_L+\pi_F=n_L(p_L-c)+n_F(p_F-c)-\gamma I_L^2.
\end{align*}
From Theorem~\ref{thm: Un-outcome-1}, $n_L^*=p_L^*-c$, $n_F^*=p_F^*-c$, and $I_{L}^{*}=L_{0}$. Then inserting $n_L^*$, $n_F^*$, $p_L^*$, $p_F^*$, and $I_L^*$ into the above equation, we have the desired result.
\end{proof}
Again, assuming that  the equilibrium solution exists in each case, the total payoff of the SPs decreases with the minimum mandated amount of spectrum acquisition $L_0$.
This is expected as this reduction is in effect equivalent to relaxation of a constraint in a maximization, which increases the maximum value. Intuitively, the SPs increase their overall payoffs if they are allowed to get away with acquiring  really small amounts of spectrums; since the EUs must choose one of the SPs, the joint subscription revenues of the SPs is not affected as long as both SPs acquire small amounts of spectrum. The sum also  decreases with increase in the marginal reservation fee the central regulator charges.  The sum is maximized at $|\Delta| = 1$, i.e., when one of the two SPs is apriori substantially more  popular than the other, thus, he can attract most of the EUs despite charging a high amount. This enhances the overall subscription revenue. Note that the sum does not depend on the disagreement payoffs, and therefore does not depend on the marginal reservation fee the SP$_F$ pays the SP$_L$ in the event of a disagreement, i.e., the $s$ the market provides.

We provide a necessary and sufficient condition for the existence of equilibrium-type solutions, in terms of parameters $\Delta, \gamma, L_0$ and  disagreement payoffs $d_L, d_F$.
\begin{theorem}\label{thm:existence}
Let $|\Delta| < 1$. At least one equilibrium-type solution exists if and only if
\begin{align*}
\pi^*=&(1/3-|\Delta|/3)^2 + (2/3+|\Delta|/3)^2- \gamma L_0^2\\
\geq& d_L+d_F=d.
\end{align*}
\end{theorem}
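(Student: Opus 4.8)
The plan is to reduce the existence question to the feasibility of a one–variable maximization whose optimal value can be written in closed form. By Remark~\ref{remark0}, an equilibrium-type solution exists exactly when the bargaining optimization~\eqref{equ: BM-bargainging optimization-2} admits an optimal solution, and by Theorem~\ref{thm: BM-bargaining optimization-3} this happens if and only if the feasible set of~\eqref{equ: BM-bargainging optimization-3temp}, namely $\{(I_L,I_F):L_0\le I_L,\ 0\le I_F\le I_L,\ u_{excess}\ge 0\}$, is nonempty. Since the objective of~\eqref{equ: BM-bargainging optimization-3temp} is $u_{excess}$ itself, that set is nonempty if and only if $M:=\sup\{u_{excess}(I_L,I_F):L_0\le I_L,\ 0\le I_F\le I_L\}\ge 0$; as the computation below shows, this supremum is attained, so the entire theorem reduces to establishing the identity $M=\pi^*-d$.

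To evaluate $M$, I would first use Theorem~\ref{thm1Part1} (available since $|\Delta|<1$ and $I_L\ge L_0>0$) to observe that $n_L^*,n_F^*,p_L^*,p_F^*$, and hence the term $n_L^*(p_L^*-c)+n_F^*(p_F^*-c)$ appearing in~\eqref{equ: BM-u-excess}, depend on $(I_L,I_F)$ only through the ratio $r:=I_F/I_L\in[0,1]$. Writing that term as $g(r)$, we have $u_{excess}=g(r)-\gamma I_L^2-d$; because $-\gamma I_L^2$ is strictly decreasing in $I_L$ and $\gamma>0$, for every fixed $r$ the best admissible choice is $I_L=L_0$ (with $I_F=rL_0\in[0,L_0]$), so $M=\max_{r\in[0,1]}g(r)-\gamma L_0^2-d$. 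Theorem~\ref{thm1Part1} moreover gives $p_L^*-c=n_L^*$ and $p_F^*-c=n_F^*=1-n_L^*$, whence $g(r)=(n_L^*)^2+(1-n_L^*)^2=2(n_L^*)^2-2n_L^*+1$, a convex quadratic in $n_L^*$, while $n_L^*=\tfrac23+\tfrac{\Delta}{3}-\tfrac r3$ is affine and strictly decreasing in $r$. A convex function on $[0,1]$ is maximized at an endpoint, so $\max_{r\in[0,1]}g(r)=\max\{g(0),g(1)\}$; a short computation gives $g(0)-g(1)=\tfrac{4\Delta}{9}$, so the maximizing ratio is $r=0$ when $\Delta\ge 0$ and $r=1$ when $\Delta\le 0$ (precisely the dichotomy in Theorem~\ref{toprove}), and in either case $\max_{r\in[0,1]}g(r)=(2/3+|\Delta|/3)^2+(1/3-|\Delta|/3)^2$. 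Substituting, $M=(2/3+|\Delta|/3)^2+(1/3-|\Delta|/3)^2-\gamma L_0^2-d=\pi^*-d$, so $M\ge 0$ if and only if $\pi^*\ge d$, which is the claimed equivalence.

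I expect the only non-routine step to be the maximization of $g$ over $r\in[0,1]$: arguing that the maximum is attained at an endpoint and comparing $g(0)$ with $g(1)$ to decide between $r=0$ and $r=1$. This is exactly the computation that produces the ``collusive'' structure of the equilibrium (the apriori more popular SP retaining the entire leased band), and it is essentially what Theorem~\ref{toprove} already records, so the step can also be bypassed: necessity of $\pi^*\ge d$ is immediate because any equilibrium-type solution has $\pi_L^*\ge d_L$ and $\pi_F^*\ge d_F$ by Remark~\ref{remark4repeat}, while $\pi_L^*+\pi_F^*=\pi^*$ by Theorem~\ref{toprove} together with Corollary~\ref{cor: sum of payoffs independent of s*}; and for sufficiency one checks that, when $\pi^*\ge d$, the candidate tuple of Theorem~\ref{thm: Un-outcome-1} corresponding to the sign of $\Delta$ (either one if $\Delta=0$), with $\tilde s^*,\theta^*$ chosen as in~\eqref{equ: BM-bargaining-s} and~\eqref{equ: BM-bargaining-theta}, is feasible for~\eqref{equ: BM-bargainging optimization-2}, since its aggregate excess profit equals $\pi^*-d\ge 0$, and is optimal, since no feasible point can exceed $\pi^*-d=M$. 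Either route completes the proof.
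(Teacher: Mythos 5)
Your proposal is correct and follows essentially the same route as the paper: the paper's Appendix~A likewise reduces existence to nonnegativity of the maximal $u_{excess}$ over $\{L_0\le I_L,\ 0\le I_F\le I_L\}$, establishes $I_L^*=L_0$ by the same scaling/monotonicity argument, uses convexity in $I_F$ (second derivative $4/(9I_L^2)>0$) to restrict to the endpoints with the same gap $u_{excess}(0,L_0)-u_{excess}(L_0,L_0)=\tfrac{4\Delta}{9}$, and then concludes via Corollary~\ref{cor: sum of payoffs independent of s*} exactly as in your ``bypass'' paragraph. Your closed-form evaluation $M=\pi^*-d$ and the paper's two-step argument are the same computation packaged slightly differently, so there is nothing to flag.
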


\begin{remark}
\label{remark4}
The disagreement payoffs $d_L, d_F$ depend on the market-dependent marginal reservation fee $s$ the SP$_F$ pays the SP$_L$ in the event of disagreement. Thus, this $s$ only determines if an equilibrium-type solution  exists, but not its values. Clearly, such solutions do not exist for large $\gamma, L_0$, which is consistent with the insights developed in Remarks~\ref{remark1}, \ref{remark0}. In contrast, for $|\Delta| < 1$, the SPNE always exists, and is unique,   when the SPs decide everything individually (Theorem 1 of Part 1).
\end{remark}

 We now consider the EU-resource-cost metric introduced in the last paragraph of Section 2.1 of Part~\uppercase\expandafter{\romannumeral1}~\cite{Part1}, quantified as  $I_F/p_F + (I_L-I_F)/p_L$. We have from Theorem~\ref{thm: Un-outcome-1}:
 \begin{theorem}\label{thm: constant m*}
The EU-resource-cost metric in the SPNE is
\begin{align*}
\left\{\begin{aligned}
&L_0/(c+\frac{2}{3}-\frac{\Delta}{3})&\quad&-1<\Delta<0\\
&L_0/(c+\frac{2}{3})&\quad&\Delta=0\\
&L_0/(c+\frac{2}{3}+\frac{\Delta}{3})&\quad&0<\Delta<1
\end{aligned}\right..
\end{align*}
\end{theorem}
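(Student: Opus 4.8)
The plan is a direct computation: substitute the equilibrium-type values of $I_L^*$, $I_F^*$, $p_L^*$, $p_F^*$ supplied by Theorem~\ref{thm: Un-outcome-1} into the metric $I_F/p_F + (I_L - I_F)/p_L$, handling the three regimes $-1 < \Delta < 0$, $\Delta = 0$, and $0 < \Delta < 1$ separately. Throughout we assume an equilibrium-type solution exists, so that Theorem~\ref{thm: Un-outcome-1} applies and its tabulated values can be read off.

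First, for $-1 < \Delta < 0$, part~(1) of Theorem~\ref{thm: Un-outcome-1} gives $I_F^* = I_L^* = L_0$, hence $I_L^* - I_F^* = 0$ and the second term of the metric drops out; what remains is $L_0/p_F^*$, and since $p_F^* = c + \frac{2}{3} - \frac{\Delta}{3}$ this yields $L_0/(c + \frac{2}{3} - \frac{\Delta}{3})$. Next, for $0 < \Delta < 1$, part~(2) gives $I_F^* = 0$ and $I_L^* = L_0$, so the first term vanishes and the metric equals $(L_0 - 0)/p_L^* = L_0/p_L^*$; with $p_L^* = c + \frac{2}{3} + \frac{\Delta}{3}$ this is $L_0/(c + \frac{2}{3} + \frac{\Delta}{3})$. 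Finally, at $\Delta = 0$, part~(3) lists two equilibrium-type solutions: in one, $(I_F^*, I_L^*) = (0, L_0)$ with $p_L^* = c + \frac{2}{3}$, giving $L_0/(c + \frac{2}{3})$; in the other, $(I_F^*, I_L^*) = (L_0, L_0)$ with $p_F^* = c + \frac{2}{3}$, again giving $L_0/(c + \frac{2}{3})$. Thus the metric takes the common value $L_0/(c + \frac{2}{3})$ in both equilibria at $\Delta = 0$, so the stated expression is unambiguous there.

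There is no genuine obstacle: the computation is one line per case. The only points meriting a moment's care are that in each off-diagonal regime exactly one of the two terms of the metric has a zero numerator (so no awkward division arises, both $p_L^*$ and $p_F^*$ being positive), and that the two equilibria coexisting at $\Delta = 0$ must be checked to yield the same number, which they do because whichever of $p_L^*$, $p_F^*$ multiplies the nonzero spectrum quantity equals $c + \frac{2}{3}$ in each case.
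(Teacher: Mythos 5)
Your proof is correct and follows essentially the same route as the paper's own argument: a case-by-case substitution of the equilibrium values $I_L^*$, $I_F^*$, $p_L^*$, $p_F^*$ from Theorem~\ref{thm: Un-outcome-1} into the metric $I_F/p_F + (I_L-I_F)/p_L$, with one term vanishing in each regime and the two equilibria at $\Delta=0$ checked to agree. Nothing is missing.
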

 Thus, the EU-resource-cost metric is clearly an increasing function of $L_0$. This is intuitive as in the SPNE the SPs together acquire exactly $L_0$ amount of spectrum. The Theorem goes beyond this intuition by identifying the exact nature of the dependence.   It increases with increase in $\Delta$, when $-1<\Delta<0$, reaches its maximum value at $\Delta = 0$, and decreases with increase in $\Delta$, when $0<\Delta<1.$ Thus, the EUs are best off, when the static factors are equal. 
It also decreases in increase in $c$,  since the SPs increase the access fee for the EUs with increase in $c$.

We now consider $|\Delta| \geq 1.$ As in Part~\uppercase\expandafter{\romannumeral1}~\cite{Part1}, this region is not of much interest due to the insurmountable difference between the prior preferences for the SPs. We show that in this case  equilibrium-type solutions exist only for very small values of $L_0$. Since these solutions provide $I_L^* = L_0$, even the solutions are of limited practical utility. We state the results for completeness.   Let $s, \delta$ constitute the parameters that provide the disagreement payoff  (from the sequential game of Part~\uppercase\expandafter{\romannumeral1}~\cite{Part1}. Let $\gamma < s.$

\begin{theorem}\label{thm: Un-outcome-4}
If $\Delta\leq-1$ or $\Delta\geq1$, the equilibrium-type solutions exist.
\begin{itemize}
  \item [(1)] If $\Delta\leq-1$ and $L_{0}\leq \frac{1}{\sqrt{2s}}$, the equilibrium-type solutions are: $I_{L}^{*}=L_{0}$,  $I_{F}^{*}\in[0,L_{0}]$, $s^{*}$ is obtained by \eqref{equ: BM-bargaining-s}, $\theta^*$ is obtained by \eqref{equ: BM-bargaining-theta}, and
\begin{equation*}
\begin{aligned}
&p_{L}^{*}=p_{F}^{*}+\Delta+1,\, c+1\leq p_{F}^{*}\leq c-\Delta-1,\\
&n_{L}^{*}=0,\, n_{F}^{*}=1,
\end{aligned}
\end{equation*}
If $L_{0} > \frac{1}{\sqrt{2s}}$, no equilibrium-type solution exists.
\item[(2)] If $\Delta\geq1$ and $L_{0}\leq \delta$, the equilibrium-type solutions are: $I_{L}^{*}=L_{0}$, $I_{F}^{*}\in[0,L_{0}]$, $s^{*}$ is obtained by \eqref{equ: BM-bargaining-s}, $\theta^*$ is obtained by \eqref{equ: BM-bargaining-theta}, and
\begin{equation*}
\begin{aligned}
&p_{F}^{*}=p_{L}^{*}-\Delta,\, c+1\leq p_{L}^{*}\leq c+\Delta.\\
&n_{L}^{*}=1,\, n_{F}^{*}=0.
\end{aligned}
\end{equation*}
If $L_{0}>\delta$, no equilibrium-type solution exists.
\end{itemize}
\end{theorem}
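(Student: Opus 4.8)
The plan is to follow the template of Sections~\ref{standard}--\ref{sec: Un-outcome}, but with the degenerate Stage-2 equilibria of Theorem~2 of Part~\uppercase\expandafter{\romannumeral1} replacing Theorem~\ref{thm1Part1}. By Theorem~\ref{thm: BM-bargaining optimization-3} it suffices to solve the reduced program \eqref{equ: BM-bargainging optimization-3temp}: maximize $u_{excess}$ over $\{L_0\le I_L,\ 0\le I_F\le I_L\}$ subject to $u_{excess}\ge0$, and then recover $\tilde s^*,\theta^*$ from \eqref{equ: BM-bargaining-s}--\eqref{equ: BM-bargaining-theta} and the access fees and subscriptions from the Stage-2 equilibrium. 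First I would substitute the Stage-2 SPNE for $|\Delta|\ge1$ from Theorem~2 of Part~\uppercase\expandafter{\romannumeral1}: exactly one SP captures the whole market ($n_L^*=1,\,n_F^*=0$ when $\Delta\ge1$; $n_L^*=0,\,n_F^*=1$ when $\Delta\le-1$), the other earns no subscription revenue, and the equilibrium access fees have one degree of freedom. The key point is that, evaluated along this equilibrium, the subscription-revenue term $n_L(p_L-c)+n_F(p_F-c)$ in \eqref{equ: BM-u-excess} does not depend on $I_L$, and (using the freedom in the monopolizing SP's equilibrium price) need not depend on $I_F$ either, so $u_{excess}$ effectively depends on $(I_L,I_F)$ only through the term $-\gamma I_L^2$.

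Hence $u_{excess}$ is strictly decreasing in $I_L$, so its maximizer over the feasible set is $I_L^*=L_0$, with $I_F^*$ arbitrary in $[0,L_0]$. Substituting $I_L^*=L_0$ into Theorem~2 of Part~\uppercase\expandafter{\romannumeral1} yields the stated $p_L^*,p_F^*,n_L^*,n_F^*$ and their admissible ranges, and \eqref{equ: BM-bargaining-s}, \eqref{equ: BM-bargaining-theta} then give $\tilde s^*,\theta^*$. This pins down the form of the equilibrium-type solutions claimed in both parts~(1) and (2).

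What remains is to decide when the constraint $u_{excess}\ge0$ can be met at $I_L^*=L_0$; if it cannot, \eqref{equ: BM-bargainging optimization-3temp} is infeasible and, by Remark~\ref{remark0}, no equilibrium-type solution exists. At $I_L^*=L_0$ the constraint reads $n_L^*(p_L^*-c)+n_F^*(p_F^*-c)-\gamma L_0^2\ge d_L+d_F$. I would then evaluate $d_L+d_F$ from the sequential non-cooperative game of Part~\uppercase\expandafter{\romannumeral1} specialized to $|\Delta|\ge1$, and observe that its subscription-revenue contribution matches the one on the left-hand side (for the same admissible price level), so these cancel and the condition collapses to $\gamma L_0^2\le\gamma(I_L^{\mathrm{seq}})^2$, i.e.\ $L_0\le I_L^{\mathrm{seq}}$, where $I_L^{\mathrm{seq}}$ is SP$_L$'s equilibrium spectrum acquisition in the Part~\uppercase\expandafter{\romannumeral1} game --- exactly the comparison anticipated in Remark~\ref{remark1}. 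For $\Delta\ge1$, SP$_L$ only loses money on spectrum, so $I_L^{\mathrm{seq}}=\delta$, giving $L_0\le\delta$. For $\Delta\le-1$ with $\gamma<s$, SP$_L$ instead profits from the leasing fee $sI_F^2$ even with no subscribers, SP$_F$'s best-response leasing is $I_F=\min\{I_L,\,1/(2sI_L)\}$, and SP$_L$'s resulting optimum is $I_L^{\mathrm{seq}}=1/\sqrt{2s}$ (using that $\delta$ is small), giving $L_0\le1/\sqrt{2s}$; above the relevant threshold $u_{excess}<0$ everywhere feasible.

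The main obstacle is this last computation of $d_L+d_F$ --- in particular, re-deriving SP$_L$'s equilibrium spectrum acquisition in the $\Delta\le-1$ branch of the Part~\uppercase\expandafter{\romannumeral1} game, where one must track that SP$_L$ earns $sI_F^2$ with no subscribers, solve SP$_F$'s concave-in-$I_F$ leasing subproblem (whose price premium grows with the ratio $I_F/I_L$) to obtain $I_F=\min\{I_L,1/(2sI_L)\}$, and then verify that SP$_L$'s best response lands at $I_L^{\mathrm{seq}}=1/\sqrt{2s}$; this is precisely what converts the naive $\delta$-bound of the $\Delta\ge1$ case into the $1/\sqrt{2s}$-bound. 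A secondary subtlety is the continuum of Stage-2 equilibria: one should check that for every admissible value of the free access-fee parameter the revenue term in $u_{excess}$ cancels against the matching term in $d_L+d_F$, so that the existence threshold depends only on $L_0,\gamma,s,\delta$ and not on which equilibrium price is realized.
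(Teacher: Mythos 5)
Your proposal follows essentially the same route as the paper's proof: substitute the corner Stage-2 SPNE of Theorem~2 of Part~I into $u_{excess}$, note that the subscription revenue is then independent of $(I_L,I_F)$ so that $u_{excess}$ is flat in $I_F$ and strictly decreasing in $I_L$ through $-\gamma I_L^2$, conclude $I_L^*=L_0$ with $I_F^*\in[0,L_0]$ arbitrary, and obtain the existence threshold by cancelling the revenue term against $d_L+d_F$, leaving $\gamma L_0^2\le\gamma\big(I_L^{\mathrm{seq}}\big)^2$. The computation you single out as the main obstacle --- re-deriving $I_L^{\mathrm{seq}}=1/\sqrt{2s}$ for $\Delta\le-1$ --- is not actually needed: the paper simply imports the disagreement investment levels ($I_L'=I_F'=1/\sqrt{2s}$ when $\Delta\le-1$; $I_L'=\delta$, $I_F'=0$ when $\Delta\ge1$) and the resulting $d_L+d_F$ from Theorem~2 of Part~I. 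Your secondary concern about the continuum of Stage-2 prices is legitimate and, if anything, you are more careful than the paper, which uses the same symbol $p_F^*$ for the price at the disagreement point and at the bargaining outcome so that the cancellation is automatic.

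The one concrete omission is the boundary $\Delta=1$. There, Theorem~2(2) of Part~I admits an \emph{additional interior} Stage-2 SPNE ($p_L^*-c=n_L^*=2/3$, $p_F^*-c=n_F^*=1/3$), and your argument, which only works through the corner equilibria, does not rule out an equilibrium-type solution built on it; without that, the claim that the listed solutions are \emph{the} equilibrium-type solutions in part~(2) is incomplete at $\Delta=1$. The paper closes this case by running the interior analysis of Theorem~\ref{toprove} at $\Delta=1$, which forces $I_F^*=0$ and hence $n_L^*=1$, contradicting interiority, so no additional solution arises. Adding that short check completes your proof.
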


\begin{remark}
When $\Delta\leq-1$, Theorem $2$ of Part~\uppercase\expandafter{\romannumeral1}~\cite{Part1} shows that the disagreement payoffs are attained when SP$_L$ acquires $\frac{1}{\sqrt{2s}}$
resource. If $L_0 > \frac{1}{\sqrt{2s}}$,  equilibrium-type solution does not exist per the intuitions in Remarks~\ref{remark1}.  More specifically, in this case,  the SPs together attain payoffs lower than the total disagreement payoffs, as they are forced to acquire greater amounts of spectrum than what they did for acquiring their disagreement payoffs. This does not increase the total subscription revenue as the EUs must choose one of the SPs, but increases the total cost incurred in spectrum acquisition from the  regulator. Thus, the aggregate excess payoff is negative. Hence  there is no equilibrium-type solution. If $\Delta\geq1$ and $L_{0} > \delta$, equilibrium-type solutions do not exist for similar reasons which follow from an application of Theorem $2$ of Part~\uppercase\expandafter{\romannumeral1}~\cite{Part1}.
\end{remark}

\subsection{Numerical results}
\label{numerical}

We numerically investigate the payoffs, the degree of cooperation, the investment levels, and the split of EUs to the SPs for $|\Delta| < 1$ and  different values of other  parameters. We set $\gamma=0.5$, $c=1$, $w=0.2$, and consider two cases: 1)  $\Delta=-0.5$; 2) $\Delta=0.5$. SP$_F$ is apriori more popular in the first, and SP$_L$ in the second. We refer to the sum of equilibrium-type solution payoffs of the SPs as $\pi^*$, and disagreement payoffs of the SPs as $d$.

We first examine the condition for existence of equilibrium-type solutions, given in  Theorem~\ref{thm:existence}, by varying $L_0$ between $[0.1,1]$, and different values of $s$ used to obtain the disagreement payoffs.  As expected from Theorem~\ref{thm:existence}, Figure~\ref{fig-bargaining1} show that $\pi^*$ decreases with $L_0$, and does not depend on $s.$ As mentioned in  Remark~\ref{remark4}, $d$ depends on $s$, and from Theorem~\ref{thm:existence} $d$ does not depend on $L_0$. Thus, the plots of $d$ are parallel to the x-axis in Figure~\ref{fig-bargaining1}. We note that $d$ initially increases with $s$, and then reaches its maximum value, at $s = s_{best} = 23.9$ and subsequently decreases. We consider $s =  0.8, 1, 1.2, s_{best}.$ Figure~\ref{fig-bargaining1} show the region in which  $d \leq \pi^*$, for different values of $s$, it is the region of existence of equilibrium-type solutions as per  Theorem~\ref{thm:existence}. For a given $s$, we do not plot $\pi^*$, once it falls below $d$; thus the curves for $\pi^*$ corresponding to a specific $s$ stop whenever they meet the $d$ for that $s.$     The region in which equlibrium-type solutions exist is smallest at $s=s_{best}$ and much larger at $s= 0.8.$ Referring to Corollary~\ref{cor: sum of payoffs independent of s*} and Theorem~\ref{thm:existence},  in this region   $\pi^*-d$  shows the gain in overall payoffs of the SPs through joint decision on spectrum acquisitions. The gain is naturally the smallest  at $s = s_{best}$, but significant at other values of $s.$


\begin{figure}[h!]
\begin{center}
  \includegraphics[width=1.7in, height=1.7in]{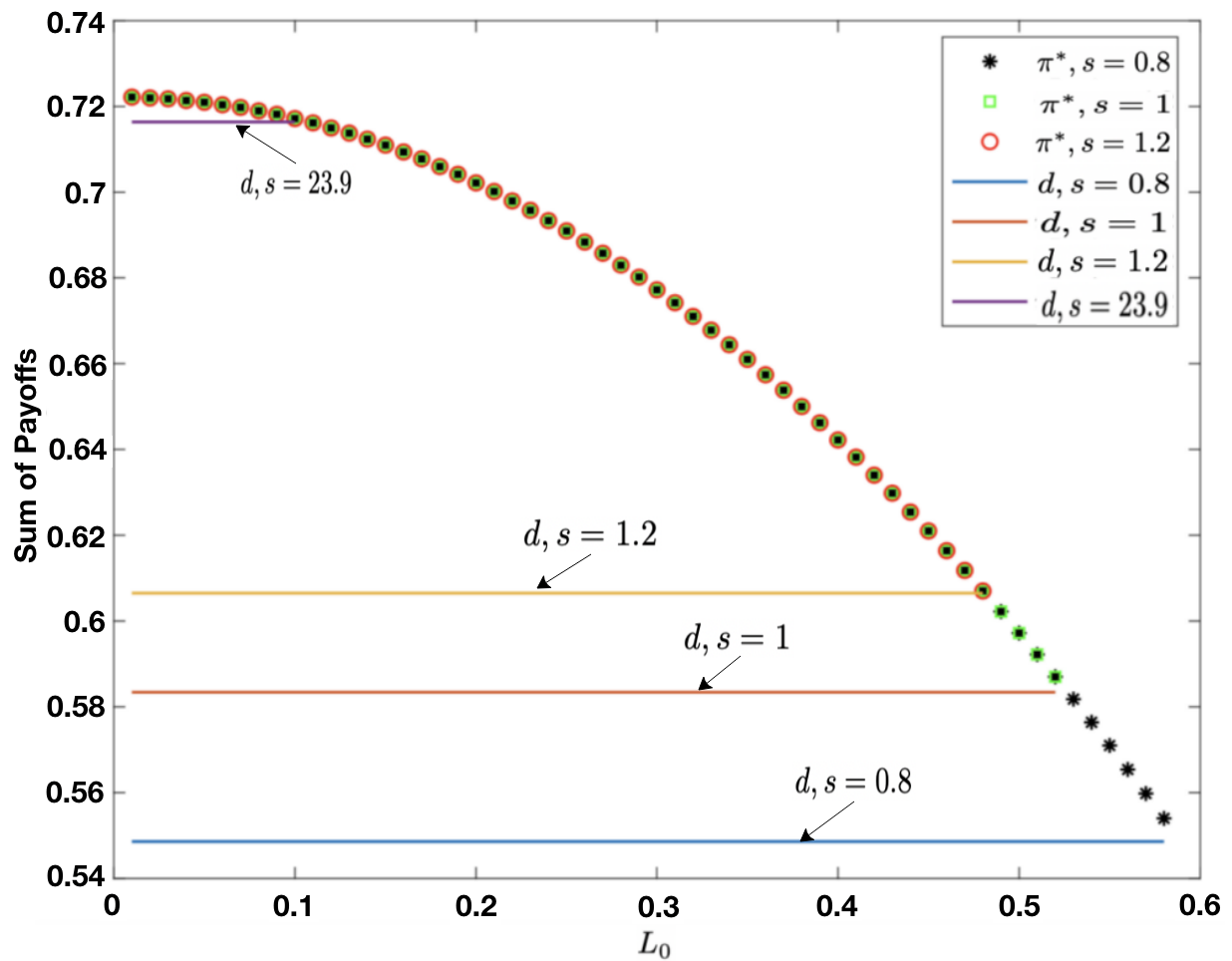}
  \includegraphics[width=1.7in, height=1.7in]{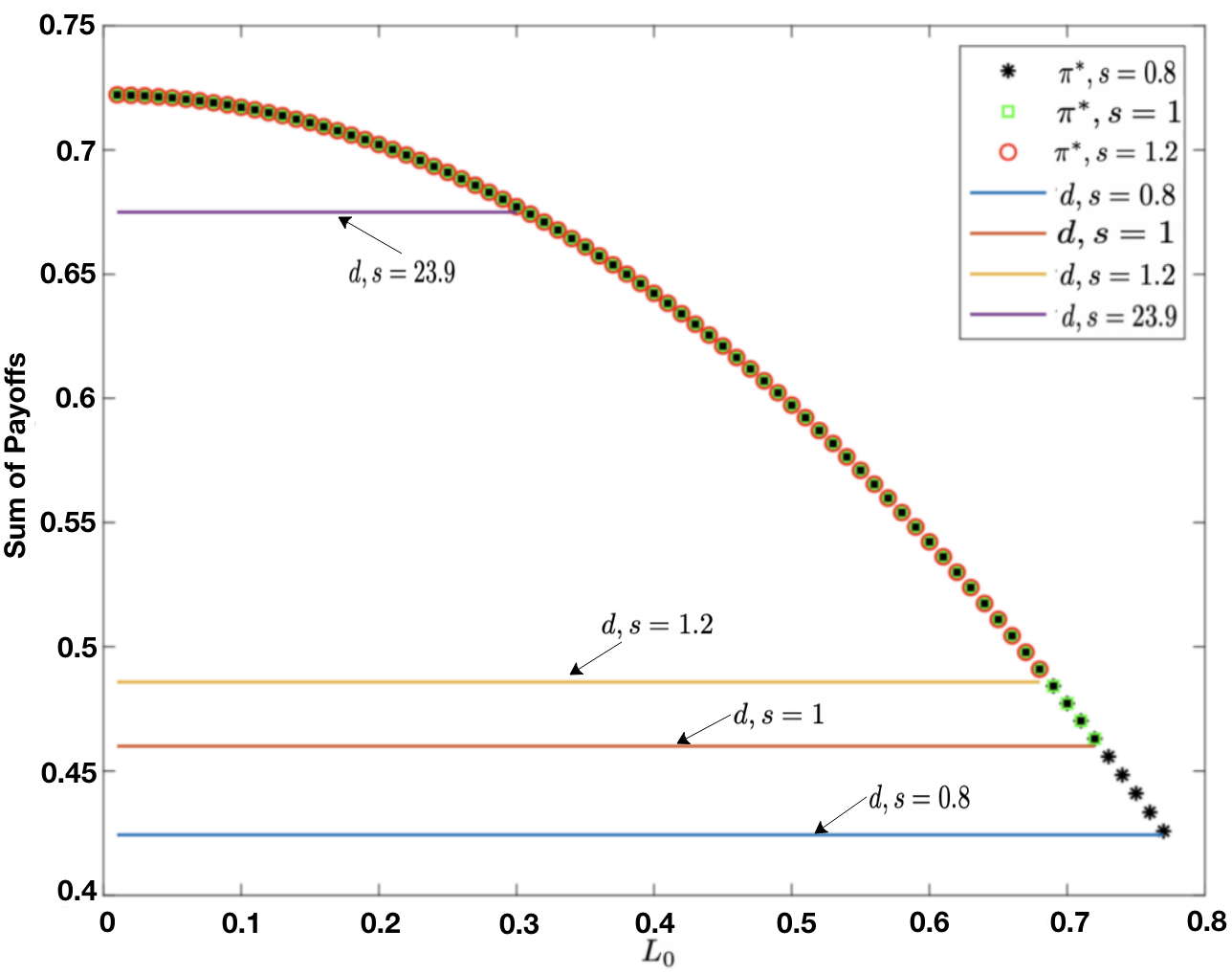}
  \caption{Sum of equilibrium-type solution payoffs of SPs, $\pi^*$, when $\Delta=-0.5$ (left) and when $\Delta=0.5$ (right) vs. $L_{0}$, sum of disagreement payoffs of SPs is $d$.}
   \label{fig-bargaining1}
\end{center}
\end{figure}

\begin{figure}
\begin{center}
  \includegraphics[width=1.7in]{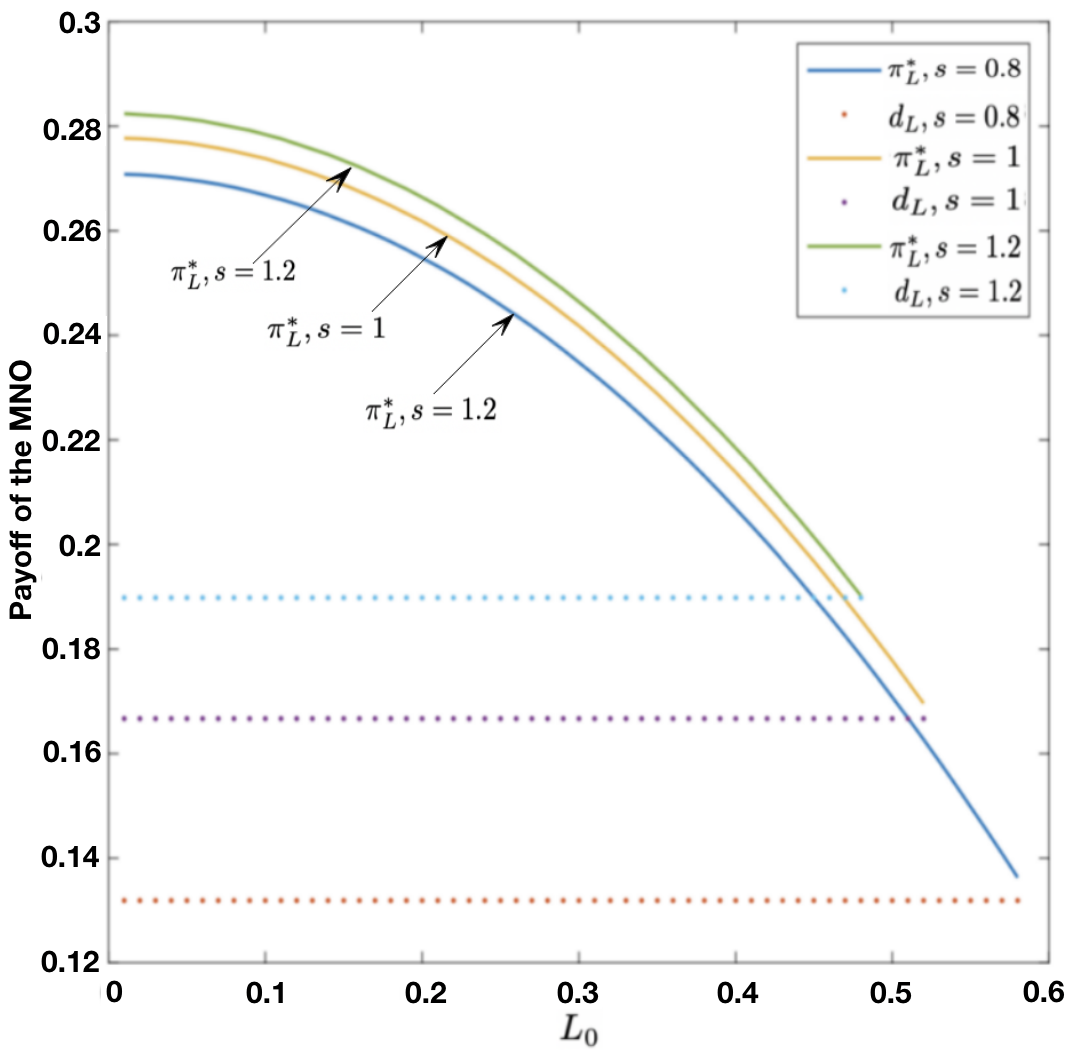}
  \includegraphics[width=1.7in]{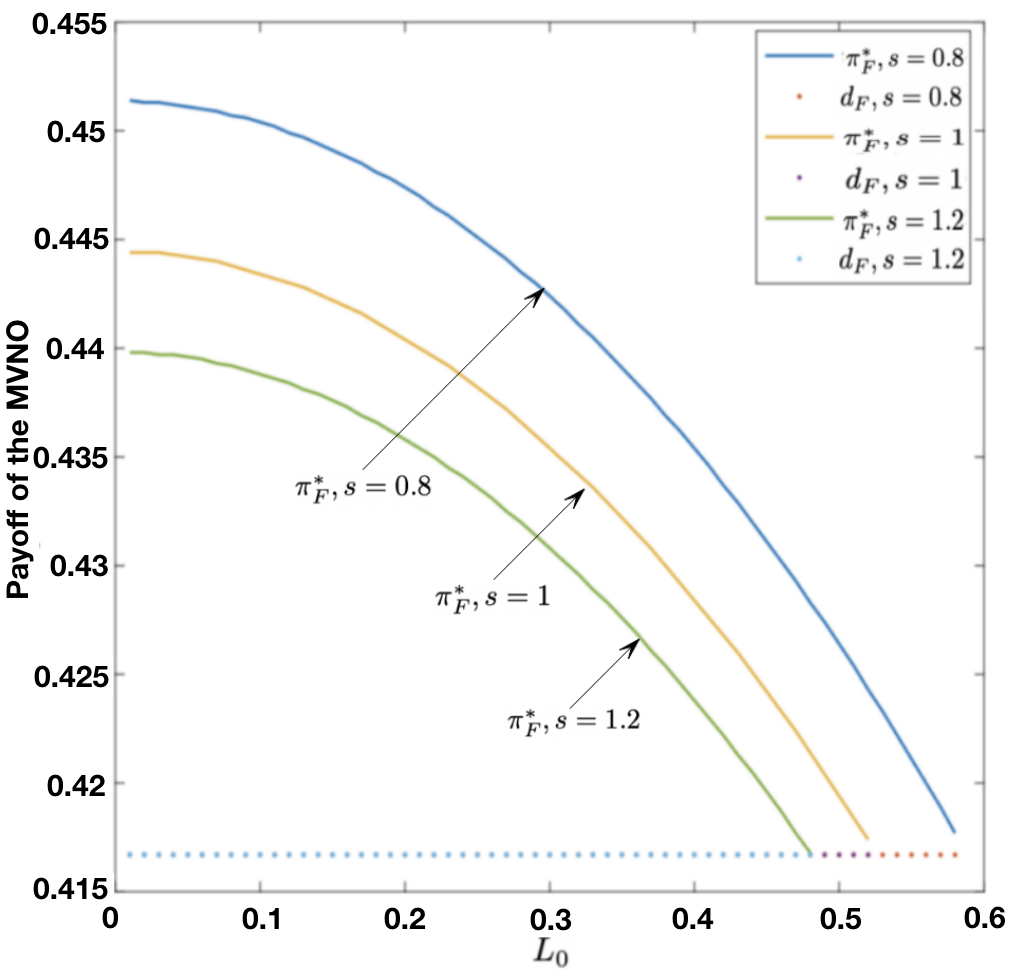}
  \caption{Equilibrium-type solution and disagreement payoffs of SP$_{L}$ (left), SP$_{F}$ (right) when $\Delta=-0.5$ vs. $L_{0}$.}
\label{sepa_payoff1}
\end{center}
\end{figure}

\begin{figure}
\begin{center}
  \includegraphics[width=1.7in]{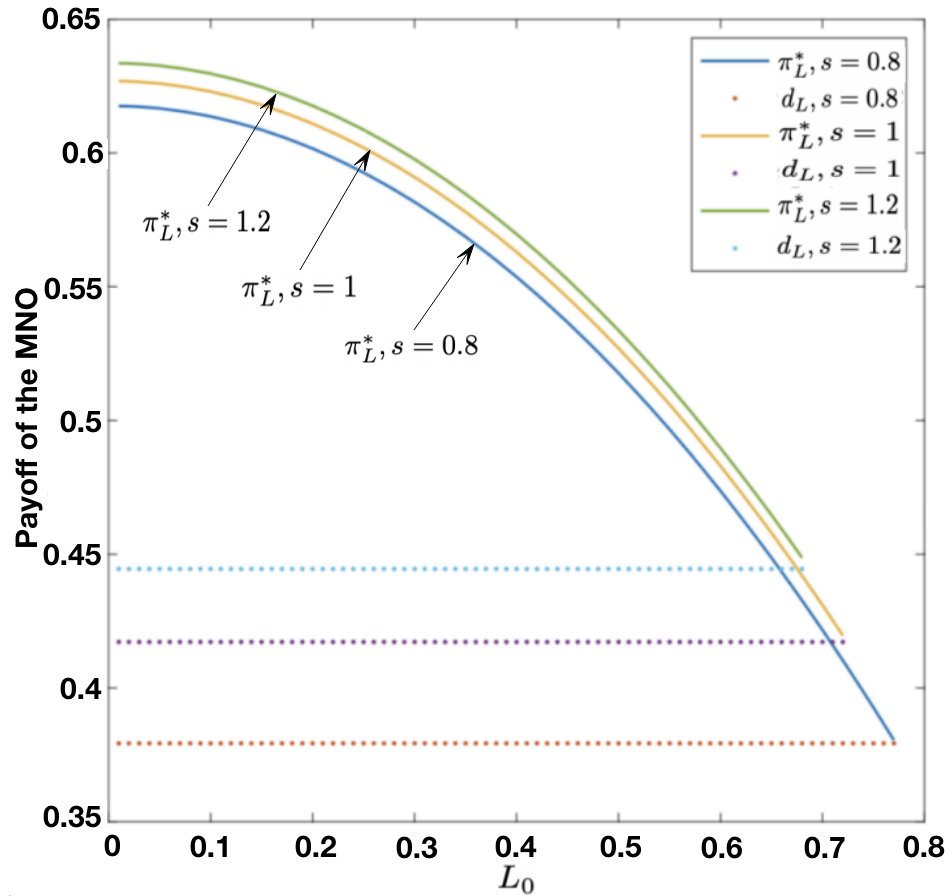}
  \includegraphics[width=1.7in]{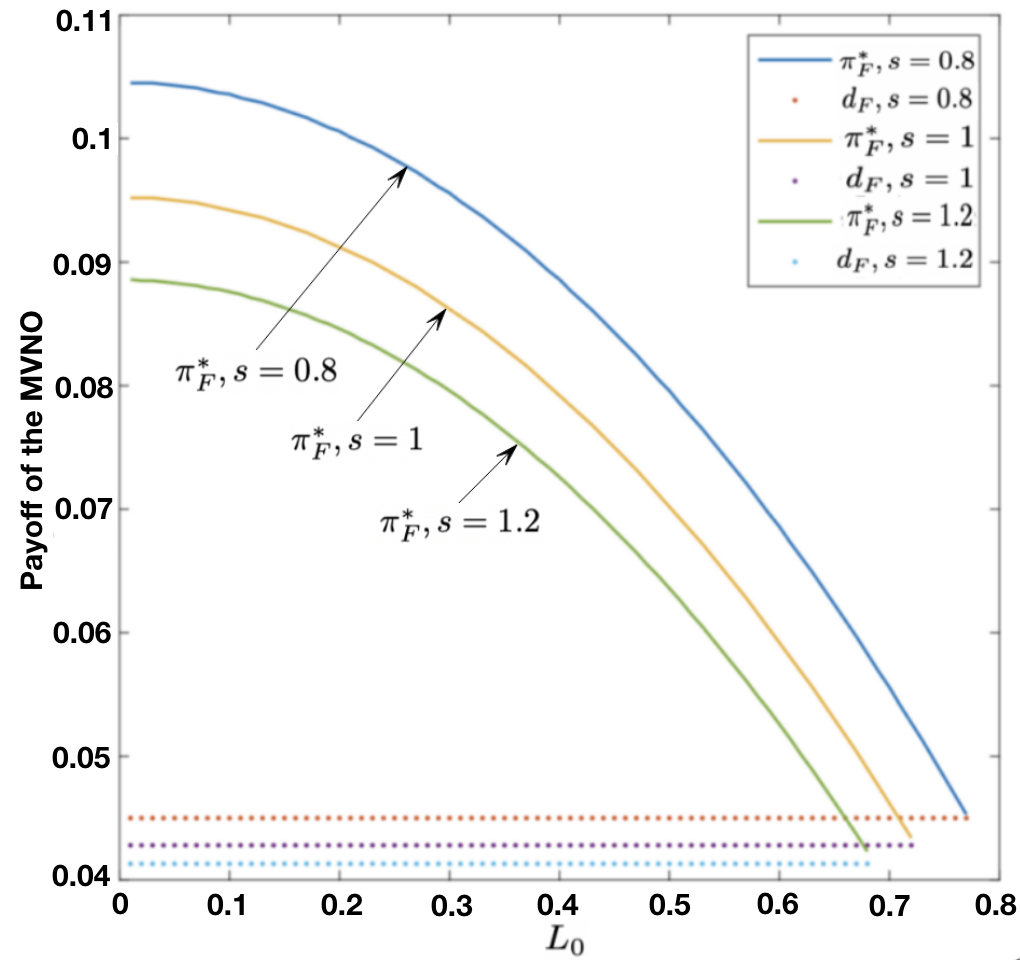}
  \caption{Equilibrium-type solution and disagreement payoffs of SP$_{L}$ (left),  SP$_{F}$ (right) when $\Delta=0.5$ vs. $L_{0}$.}
\label{sepa_payoff2}
\end{center}
\end{figure}

Figures~\ref{sepa_payoff1}, \ref{sepa_payoff2} demonstrate the payoff gain of each SP due to joint decision on spectrum acquisition (Remark~\ref{remark4repeat}). For   $\Delta \in \{-0.5, 0.5\}$, the figures  plot their disagreement payoffs,  and equilibrium-solution payoffs, in the region that equilibrium solutions exist, as given by Theorem~\ref{thm:existence} and illustrated in Figure~\ref{fig-bargaining1}.
 The payoffs under the equilibrium solutions now depend on $s$ as the disagreement payoffs depend on $s$ \big(Equations~\eqref{equ: pi_L-basecase} and \eqref{equ: pi_F-basecase}\big).    First consider  $\Delta=-0.5$.  
 Figure~\ref{sepa_payoff1} shows that
 the payoff of SP$_{L}$ (SP$_{F}$, respectively) increases (decreases, respectively) with $s$. Both payoffs decrease with $L_{0}$. Also,  the gain in payoff for each SP beyond his disagreement payoff, due to joint decision on spectrum,  is considerable for low $L_0$, but decreases as $L_0$ increases (the SPs are forced to acquire higher amounts of spectrum for large $L_0$ to deliver the minimum quality of service mandated by the  regulator).  The payoff of SP$_{F}$ is higher than that of SP$_{L}$ in this case because SP$_F$ is more apriori popular (as $\Delta < 0$).
 When $\Delta=0.5,$ Figure~\ref{sepa_payoff2} shows 
  that $\pi_{L}^{*}>\pi_{F}^{*}$, i.e., SP$_{L}$ has higher payoff in this case, which is intuitive as SP$_L$ is more apriori popular ($\Delta > 0$).  The observations are otherwise similar to those for Figure~\ref{sepa_payoff1}.  

\begin{figure}
\begin{center}
  \includegraphics[width=1.7in]{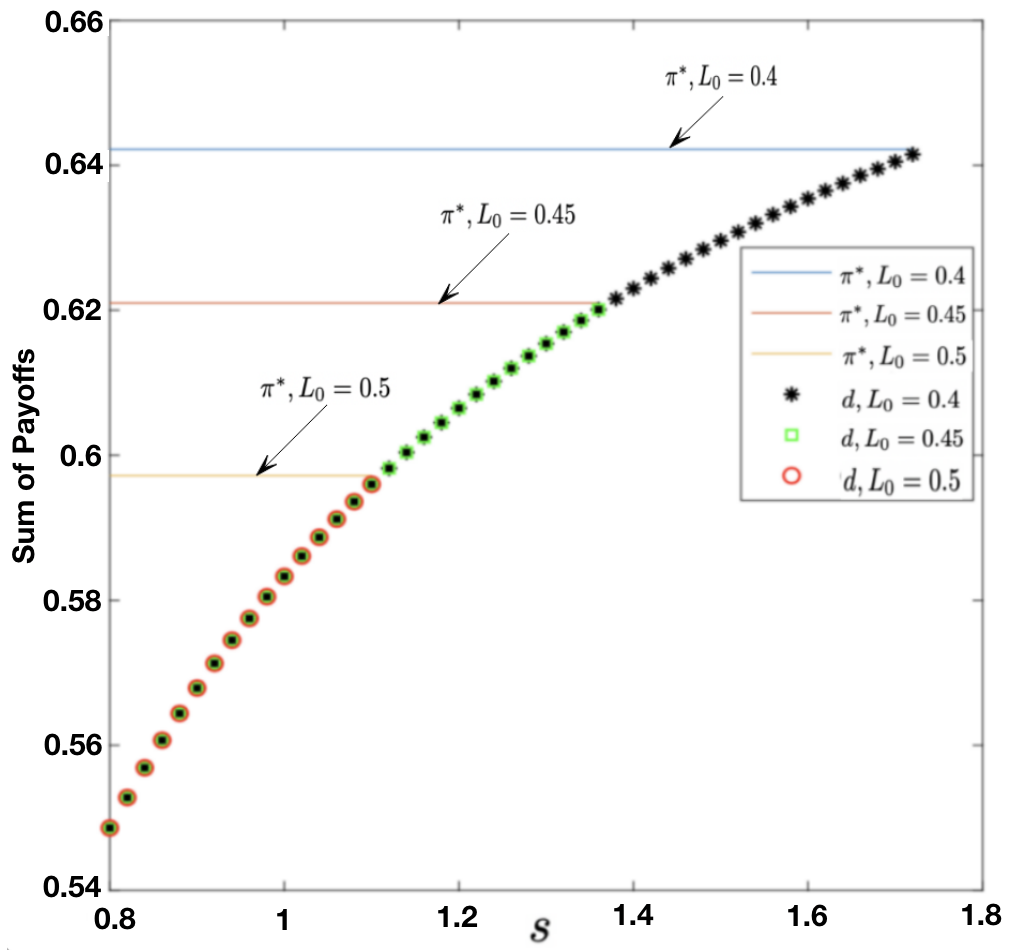}
   \includegraphics[width=1.7in]{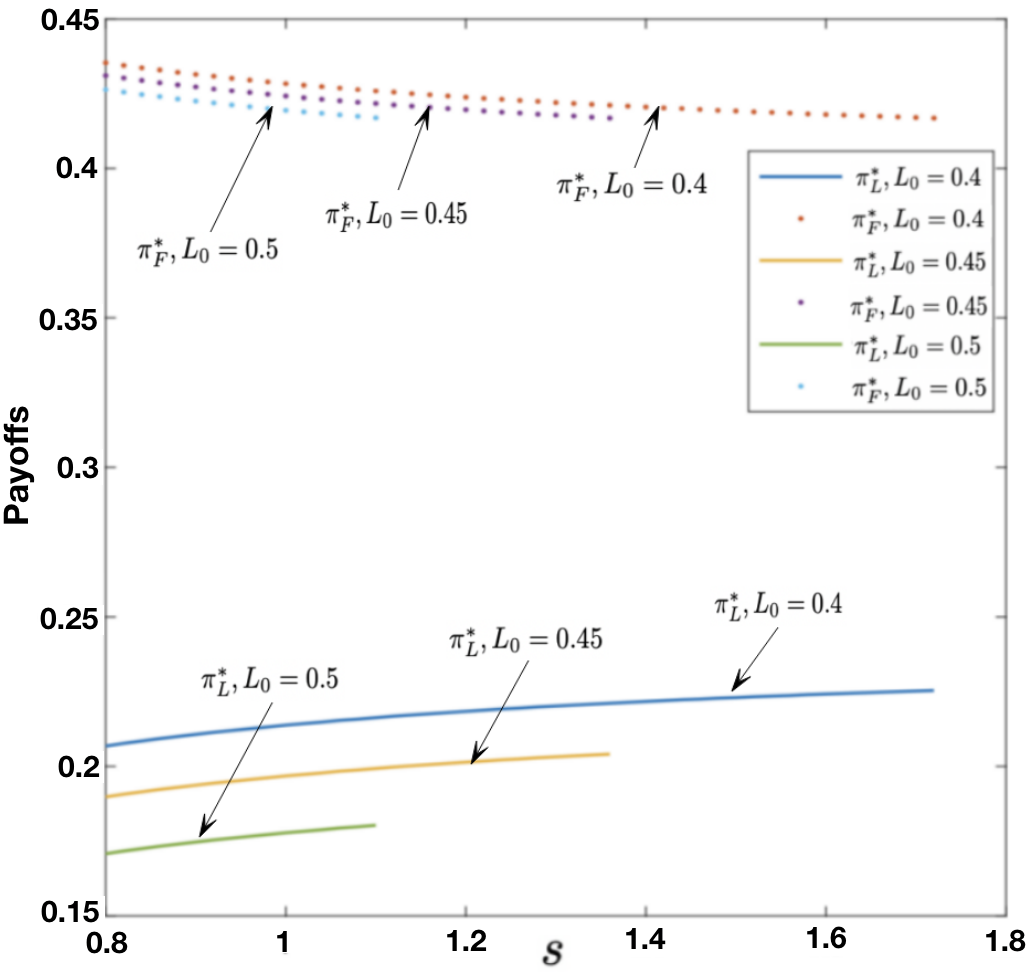}
  \caption{Sum of equilibrium-type solution payoffs $\pi^*$, sum of disagreement payoffs $d$ (left), individual payoffs (right) vs. $s$.}
\label{fig-bargaining-payoff-L}
\end{center}
\end{figure}

The above observations for existence of equilibrium-type solutions and the collective and individual gains in payoffs of the SPs due to joint decision on spectrum acquisition may be reinforced by plotting $\pi^*, d, \pi^*_L, \pi^*_F$ as functions of $s$ for few fixed $L_0$s (Figure~\ref{fig-bargaining-payoff-L}). We consider only $\Delta = -0.5$ here, and $L_{0}=0.4,0.45,0.5$.  Now, in the left figure,
 plots of $\pi^*$ are parallel to the $x$-axis, while $d$ increases with increase in $s$ in the range considered, $s \in [0.8,2]$. We plot $\pi^*$ only in the region in which the equilibrium-type solutions exist, i.e., where $\pi^* \geq d$. This plot also quantifies the gains in collective payoffs by showing how much the flat curves exceed the increasing one, in the region in which they are plotted. The figure in the right show that the payoff of each SP decreases with $L_{0}$, and payoff of SP$_{F}$ (SP$_{L}$, respectively) decreases (increases, respectively) with $s$. The payoff of SP$_{F}$ is higher than that of SP$_{L}$, which is intuitive as SP$_F$ is apriori more popular in this case.

Our numerical computations thus far reveal that the cooperation in form of joint decisions on spectrum acquisitions benefits the SPs by enhancing their collective and individual payoffs. We now investigate how this enhanced cooperation between the SPs affects the EUs.

\begin{figure}
\begin{center}
  \includegraphics[width=1.7in, height=1.7in]{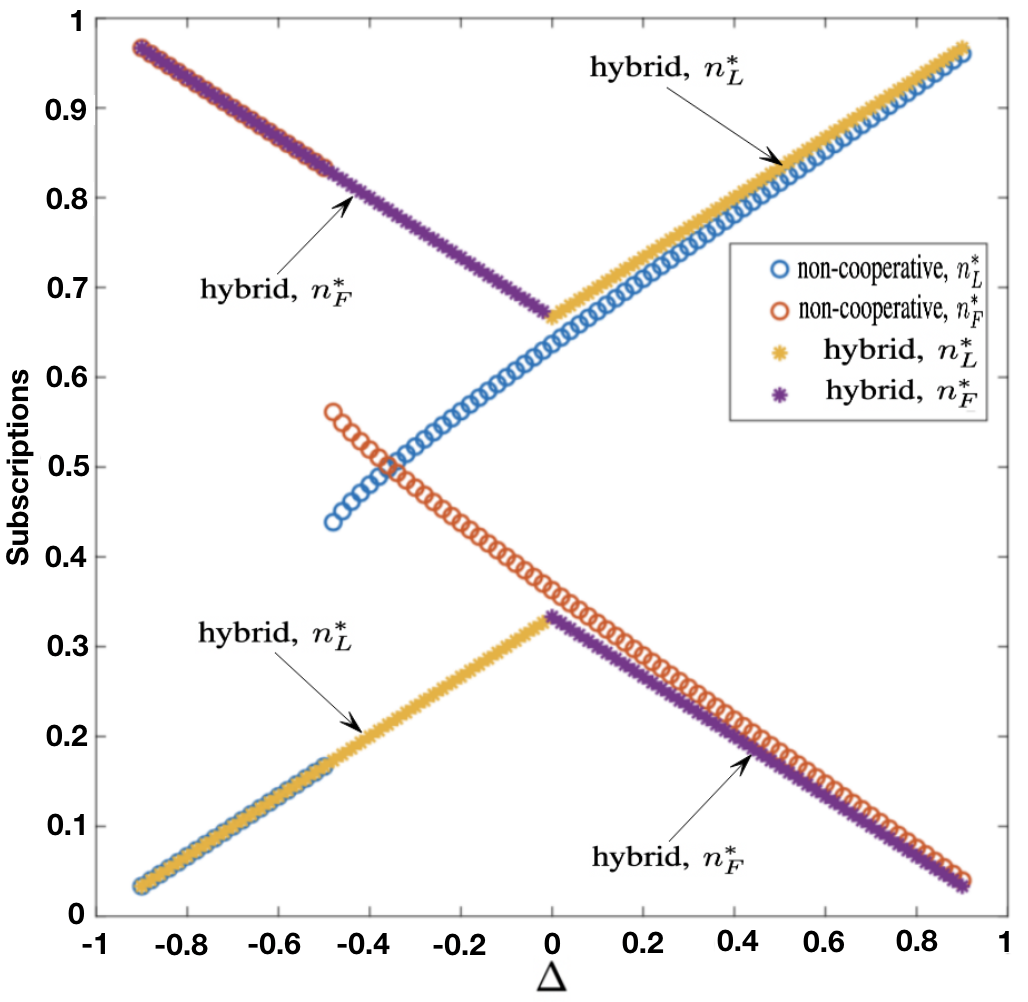}
  \includegraphics[width=1.7in,height=1.7in]{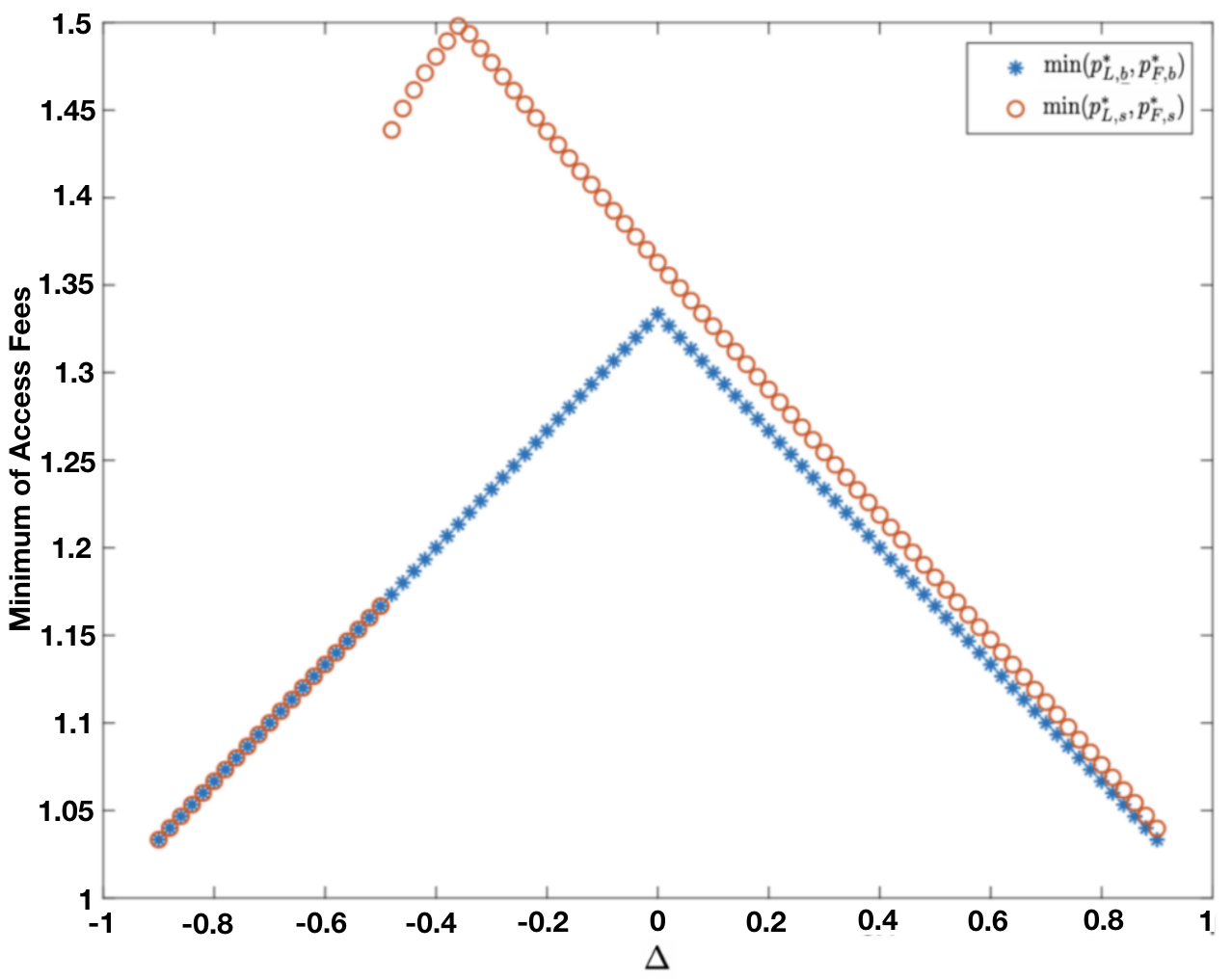}
  \caption{The subscriptions (left), the access fees (right) vs. $\Delta$.}
\label{fig-bargaining-subscrptionsprices}
\end{center}
\end{figure}
Now we investigate the subscriptions and access fees when the reservation fee is $s_{best}$, with $\Delta$ varying in $(-1,1)$. From Theorem~\ref{thm: Un-outcome-1},  subscriptions $n_L^*$, $n_F^*$ only depend on $\Delta$ (and are independent of $s$ and $L_{0}$).
Figure~\ref{fig-bargaining-subscrptionsprices} plots the subscriptions (left). From Theorem~\ref{thm: Un-outcome-1}, $p_{L}^{*}=n_{L}^{*}+c$ and $p_{F}^{*}=n_F^*+c$, so the equilibrium-type subscriptions and access fees exhibit similar behaviors. For $\Delta < 0$, i.e.,  when     SP$_{F}$ is apriori more popular, $n_F^* > n_L^*$ for the equilibrium-type solution under joint decision on spectrum acquisition. The difference $n_F^*-n_L^*$  increases as $\Delta$ reduces, when $\Delta < 0.$ The reverse is observed when $\Delta > 0.$ Since $p_F^* - p_L^* = n_F^* - n_L^*$ throughout, more EUs choose the SP that charges higher; this choice is clearly induced by how the SPs share between them the spectrum $I_L$ that SP$_L$ acquires. In some way, this benefits the SPs, enhancing their overall revenue, and harms the EUs by motivating them to pay more. The conclusion is however nuanced as the EUs choose the more expensive option, voluntarily, and only because that option provides better quality of service by retaining the acquired spectrum in its entirety, and was also apriori more popular. The choice is therefore guided, rather than enforced, by having the more apriori popular SP retain the acquired spectrum. When the SPs separately decide their spectrum acquisitions, the trends are similar, through the differences between the subscriptions, and therefore the access fees, is less pronounced. The spectrum is more evenly shared between the SPs (Figure~\ref{fig-bargaining-degree of cooperation}), leading to lower access fees and lower qualities of service for more EUs.

In  Figure~\ref{fig-bargaining-subscrptionsprices}~(right), we plot the minimum of access fees of SPs in both frameworks: $\min(p_{L,b}^*,p_{F,b}^*)$ ($\min(p_{L,s}^*,p_{F,s}^*)$, respectively) represent the minimum access fees when the SPs decide spectrum acquisitions  jointly (separately, respectively). The minimum access fee represents the least cost an EU might incur. The minimum is clearly equal or lower for the joint decision case. Thus, joint decisions of the SPs benefits the EUs by providing them cheaper access. But, as we have noted in the previous paragraph, more EUs are induced to select the more expensive option by having it provide the better quality of service and choosing the more popular of the two SPs to do so. Thus, in one perspective,   the EUs gain  due to enhanced coordination between the SPs, while they lose in another perspective.

We now plot the EU-resource-cost metric quantified in Theorem~\ref{thm: constant m*}.
Figure~\ref{fig-bargaining-ad1} shows that for both $\Delta>0$ or $\Delta<0$, for some values of $L_0$,  this metric is higher under the joint decision  and lower for the rest. As Theorem~\ref{thm: constant m*} shows, the EU-resource-cost metric is a linear function of $L_0$ under the joint decision, and therefore is higher or lower than that for individual decisions (as in Part I~\cite{Part1})  depending on the value of $L_0.$

\begin{figure}
\begin{center}
  \includegraphics[width=1.7in, height=1.7in]{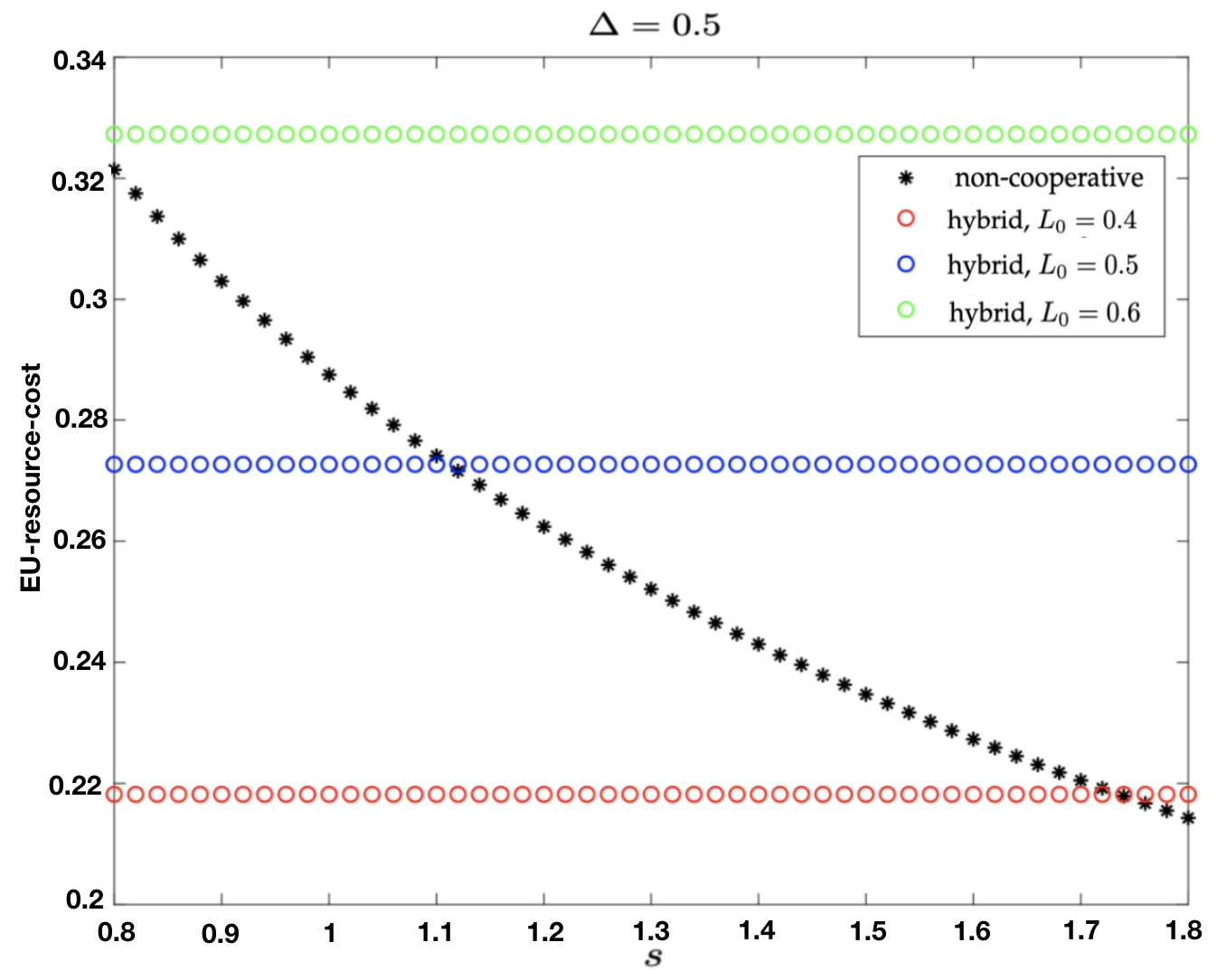}
  \includegraphics[width=1.7in,height=1.7in]{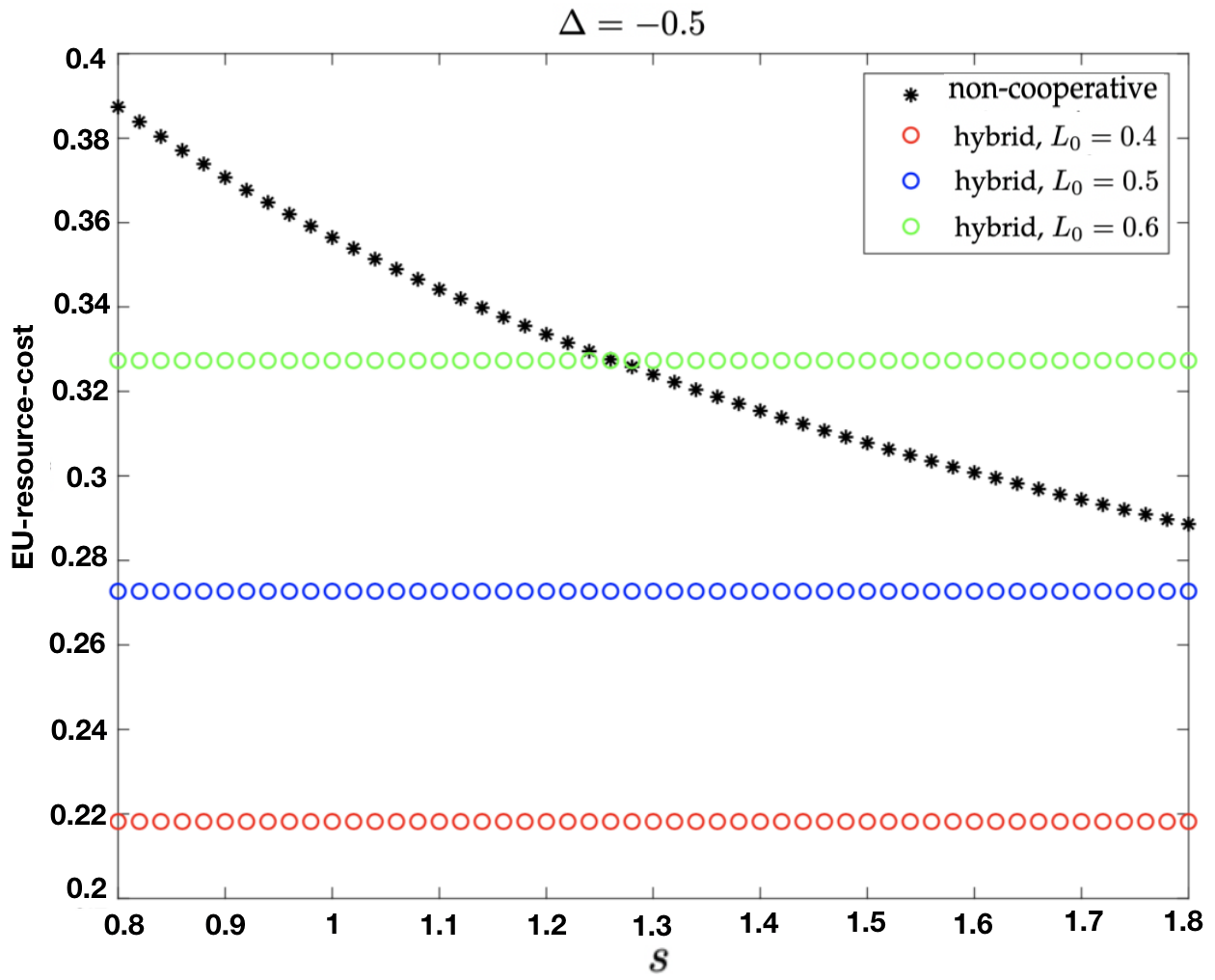}
  \caption{The EU-resource-cost vs. $s$ under $\Delta=0.5$ (left) and $\Delta=-0.5$ (right).}
\label{fig-bargaining-ad1}
\end{center}
\end{figure}

\section{EUs with outside option}\label{sec: outside option}

We now generalize our framework to allow the EUs from the common pool the MVNO and the MNO are competing over,  to choose a SP outside the system,  if  neither the MVNO nor the MNO  offers a desirable combination of access fee and quality of service. The SPs outside the system are collectively referred to as ``outside option''; we do not consider the strategies of these SPs.  Thus, the MVNO and the MNO may both experience an  attrition in their subscriptions.  We also allow the MVNO and the MNO to have exclusive additional customer bases to draw from depending on their individual spectrum acquisition and offered prices.  We introduce these modifications through demand functions we describe next.

Similar to equations (10), (11) in Part~\uppercase\expandafter{\romannumeral1}~\cite{Part1}, we define
the fraction\footnote{Recall that as in Part~\uppercase\expandafter{\romannumeral1}~\cite{Part1} $n_L, n_F$ are the fraction of EUs from the common pool who subscribe to the EUs, while $\tilde{n}_L, \tilde{n}_F$ may be the fractions or actual numbers of subscriptions, considering the attrition to the outside option and the additions from the exclusive customer bases. Only scale factors would change in the expressions for $\tilde{n}_L, \tilde{n}_F$ and the payoffs depending on if $n_L, n_F$ are fractions or actual numbers.} of EUs with each SP as
\begin{equation}\label{equ: Out-demand}
\begin{aligned}
&\tilde{n}_{L}=\alpha\big(n_{L}+\varphi_{L}(p_{L}, I_{L})\big),\\
&\tilde{n}_{F}=\alpha\big(n_{F}+\varphi_{F}(p_{F}, I_{F})\big),
\end{aligned}
\end{equation}
where
\begin{equation}
\begin{aligned}\label{demandfunction}
&\varphi_{L}(p_{L}, I_{L})=k-p_{L}+b(I_{L}-I_{F}),\\
&\varphi_{F}(p_{F}, I_{F})=k-p_{F}+bI_{F}
\end{aligned}
\end{equation}
and $\alpha>0$, $k$ and $b$ are constants.

We also define
$g(I_{L})=\frac{b}{15}I_{L}+\frac{1}{15}-\frac{c}{3}+\frac{k}{3}$, $f(I_{L})=\frac{1}{5I_{L}}+\frac{b}{5}>0$.

We characterize the equilibrium-type solutions in Section~\ref{analysis}, and examine its salient properties through numerical computations in Section~\ref{numericaloutside}.

\subsection{The equilibrium-type solution}
\label{analysis}
Our goal here is to examine if the availability of the outside option deters the collusive outcome by which the SPs  acquire the minimum mandated amount of spectrum from the central regulator. We focus on the region in which at least one interior equilibrium-type solution, i.e., $0 < n_L, n_F < 1$ exists, and show that this is indeed the case.  
 The  proofs are given in Appendix \ref{APP: outside option}.

\begin{theorem}\label{thm: out-outcome-sectionA}
Let $\Delta = 0$. Either there is no interior equilibrium-type solution, or there are two  interior equilibrium-type solutions. They are:
\begin{itemize}
\item[(1)] $I_{L,1}^{*}$ is a solution of
\begin{equation*}
\footnotesize
\begin{aligned}
\footnotesize
\max_{I_{L}}\,\,& 2\alpha g^{2}(I_{L})+2\alpha(f(I_{L})I_{L}+g(I_{L}))^{2}-\gamma I_{L}^{2}\\
s.t\,\,&L_0\leq I_{L}\,\,
\end{aligned}
\end{equation*}
$I_{F,1}^{*}=I_{L,1}^{*}$, $\tilde{s}^{*}$ is obtained by \eqref{equ: BM-bargaining-s}, and $\theta^*=0$.
\item[(2)]
$p_{L,1}^{*}=\frac{1}{15}+\frac{2c}{3}+\frac{k}{3}+\frac{bI_{L}^{*}}{15},\,
p_{F,1}^{*}=\frac{4}{15}+\frac{2c}{3}+\frac{k}{3}+\frac{4bI_{L}^{*}}{15}$.
\item[(3)] $\tilde{n}_{L,1}^{*}=\frac{2}{15}+\frac{2k}{3}+\frac{2bI_{L}^{*}}{15}-\frac{2c}{3},\, \tilde{n}_{F,1}^{*}=\frac{8}{15}+\frac{2k}{3}-\frac{2c}{3}+\frac{8bI_{L}^{*}}{15}$.
\end{itemize}
and
\begin{itemize}
\item[(1)] $I_{L,2}^{*} = I_{L,1}^*, I_{F,2}^{*}=0$, $\tilde{s}^{*}$ has no significance, and $\theta^*$ is obtained by \eqref{equ: BM-bargaining-theta}.
\item[(2)] $p_{L,2}^{*}=p_{F,1}^{*}$, $p_{F,2}^{*}=p_{L,1}^{*}$.
\item[(3)] $\tilde{n}_{L,2}^{*}=\tilde{n}_{F,1}^{*}$, $\tilde{n}_{F,2}^{*}=\tilde{n}_{L,1}^{*}$.
\end{itemize}
\end{theorem}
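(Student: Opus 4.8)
The plan is to run the backward-induction pipeline of the base case (Theorems~\ref{thm1Part1}--\ref{thm: Un-outcome-1}), with the demand functions \eqref{equ: Out-demand}--\eqref{demandfunction} now feeding into the SP payoffs in place of the bare hotelling shares. In Stage~3 the indifferent-EU condition from \eqref{equ: BM-utility EUs} with $\Delta=0$ gives the common-pool split $n_L=p_F-p_L+1-I_F/I_L$, $n_F=1-n_L$ (exactly as in Part~I before price optimization), so \eqref{equ: Out-demand} makes $\tilde{n}_L,\tilde{n}_F$ affine in $(p_L,p_F)$ and in $(I_L,I_F)$. In Stage~2, $\pi_L$ (resp.\ $\pi_F$) is a concave quadratic in $p_L$ (resp.\ $p_F$) because $\partial\tilde{n}_L/\partial p_L=-2\alpha<0$; the first-order conditions $\tilde{n}_L=2\alpha(p_L-c)$ and $\tilde{n}_F=2\alpha(p_F-c)$ form a $2\times 2$ linear system with a unique solution $p_L^*(I_L,I_F),p_F^*(I_L,I_F)$, the Stage~2 Nash equilibrium. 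A direct computation then records the identities $p_L^*-c=g(I_L)$ and $p_F^*-c=f(I_L)I_L+g(I_L)$ that hold when $I_F=I_L$, the $I_F=0$ case being the $\text{SP}_L\leftrightarrow\text{SP}_F$ mirror image since $\Delta=0$.

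For Stage~1 I would first carry over the analogues of Theorems~\ref{thm: payoffs in bargaining are better} and~\ref{thm: BM-bargaining optimization-3}: their proofs use only the NBS identity \eqref{equ: BM-NBS-fraction} and the fact that $u_{excess}=\pi_L+\pi_F-d$ is independent of $\tilde{s},\theta$ (the $\tilde{s}I_F^2$ and $\theta$ terms cancel in the sum), both of which still hold here, so the bargaining game reduces to $\max_{I_L,I_F}u_{excess}$ over $\{L_0\le I_L,\ 0\le I_F\le I_L,\ u_{excess}\ge 0\}$, with $\tilde{s}^*,\theta^*$ recovered afterward from \eqref{equ: BM-bargaining-s} and \eqref{equ: BM-bargaining-theta}. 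Substituting the Stage~2 equilibrium, $u_{excess}(I_L,I_F)=\tilde{n}_L^*(p_L^*-c)+\tilde{n}_F^*(p_F^*-c)-\gamma I_L^2-d$ becomes, for each fixed $I_L$, an explicit quadratic in $I_F$ on $[0,I_L]$; because $\Delta=0$, exchanging $\text{SP}_L$ with $\text{SP}_F$ together with $I_F\mapsto I_L-I_F$ leaves the Stage~2/Stage~3 problem invariant, so this quadratic is symmetric about $I_F=I_L/2$. The crux is to show it is \emph{strictly convex} in $I_F$ — i.e.\ that the coefficient of $I_F^2$ obtained after inserting $p_L^*,p_F^*$ is strictly positive for all admissible parameters; granting this, the maximum over $[0,I_L]$ is attained exactly at the two endpoints $I_F\in\{0,I_L\}$, with equal value. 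I expect this convexity verification to be the main obstacle — it is the analogue, in the outside-option model, of the endpoint argument used for the base case.

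It then remains to assemble the conclusion. By the symmetry just invoked, $u_{excess}$ at $I_F=I_L$ equals $u_{excess}$ at $I_F=0$ as a function of the remaining variable $I_L$; using the Stage~2 first-order condition $\tilde{n}_L^*=2\alpha(p_L^*-c)$ (and likewise for $F$) together with the Stage~2 identities above, this common value is $2\alpha g^2(I_L)+2\alpha\big(f(I_L)I_L+g(I_L)\big)^2-\gamma I_L^2-d$, so maximizing it over $I_L\ge L_0$ is exactly the one-dimensional program of part~(1) (deleting the constant $d$ does not move the argmax). If this program has a maximizer $I_{L,1}^*$ at which $u_{excess}\ge 0$ and at which the induced common-pool shares $n_L,n_F$ lie strictly in $(0,1)$, then the two endpoints $I_F=I_L$ and $I_F=0$ both give interior equilibrium-type solutions — related by the $\text{SP}_L\leftrightarrow\text{SP}_F$ mirror, so that $p_{L,2}^*=p_{F,1}^*$, $\tilde{n}_{L,2}^*=\tilde{n}_{F,1}^*$, and so on — with $\tilde{s}^*,\theta^*$ read off from \eqref{equ: BM-bargaining-s}--\eqref{equ: BM-bargaining-theta} according to whether $I_F^*>0$ or $I_F^*=0$; this is precisely the pair of solutions in the statement. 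If no such $I_{L,1}^*$ exists then, since the $I_F$-maximum is always at an endpoint, no interior equilibrium-type solution exists, giving the stated dichotomy. The remaining steps — plugging $I_F=I_L$ (resp.\ $I_F=0$) into the Stage~2 formulas to extract parts~(2)--(3) and their swapped counterparts — are routine.
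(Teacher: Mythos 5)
Your proposal follows essentially the same route as the paper's proof: reduce the bargaining stage to maximizing $u_{excess}$ over $(I_L,I_F)$, observe that for fixed $I_L$ the objective is a convex quadratic in $I_F$ so the maximum sits at the two endpoints $I_F\in\{0,I_L\}$ with equal values, and then read off $I_{L,1}^*$ from the resulting one-variable program and the prices and subscriptions from the Stage-2 formulas; the convexity you flag as the ``main obstacle'' is in fact a one-line check in the paper, which writes $u_{excess}=4\alpha f^{2}(I_{L})I_{F}^{2}-4\alpha f^{2}(I_{L})I_{L}I_{F}+\cdots$ and notes $\partial^{2}u_{excess}/\partial I_{F}^{2}=8\alpha f^{2}(I_{L})>0$. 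The only cosmetic differences are that you re-derive the Stage-2/Stage-3 equilibrium from the first-order conditions and invoke the $\Delta=0$ swap symmetry to equate the endpoint values, whereas the paper imports those formulas from Theorem~7 of Part~I and simply evaluates $u_{excess}$ at both endpoints.
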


We provide a necessary and sufficient condition for the existence of equilibrium-type solutions, in terms of parameters $\alpha, \gamma, I_{L,1}^*$ and  disagreement payoffs $d_L, d_F$.
\begin{theorem}\label{thm:existence_outside}
Let $\Delta = 0.$ Interior equilibrium-type solutions exist if and only if
\begin{align*}
\pi^*=&2\alpha g^{2}(I_{L,1}^*)+2\alpha(f(I_{L,1}^*)I_{L,1}^*+g(I_{L,1}^*))^2- \gamma I_{L,1}^{*2}\\
\geq&d_L+d_F=d,\quad \text{and}\quad I_{L}^{*}<\frac{4}{b}.
\end{align*}
\end{theorem}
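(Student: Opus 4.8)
\textbf{Proof proposal for Theorem~\ref{thm:existence_outside}.}

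The plan is to mirror the proof of Theorem~\ref{thm:existence} from the base case, with the closed-form sum of payoffs replaced by the one implicit in Theorem~\ref{thm: out-outcome-sectionA} and with the extra bookkeeping needed to keep the candidate solution inside the interior region. I would first observe that the reduction carried out in Theorems~\ref{thm: payoffs in bargaining are better} and \ref{thm: BM-bargaining optimization-3} uses only the NBS identity \eqref{equ: BM-NBS-fraction} and the definition of $u_{excess}$, and therefore carries over verbatim to the outside-option payoffs $\pi_L=\tilde n_L(p_L-c)+\tilde s I_F^2-\theta-\gamma I_L^2$, $\pi_F=\tilde n_F(p_F-c)-\tilde s I_F^2+\theta$. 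Consequently an interior equilibrium-type solution exists if and only if the maximization of $u_{excess}=\tilde n_L(p_L-c)+\tilde n_F(p_F-c)-\gamma I_L^2-d$ over the interior region $\{0<n_L,n_F<1\}\cap\{L_0\le I_L,\ 0\le I_F\le I_L\}$ (after substituting the Stage-2 equilibrium fees and Stage-3 demands) attains a nonnegative value; in that event the maximizer is the NBS, and $\tilde s^*,\theta^*$ are fixed by \eqref{equ: BM-bargaining-s}--\eqref{equ: BM-bargaining-theta}.

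Next I would appeal to Theorem~\ref{thm: out-outcome-sectionA} and the backward induction behind it: over the interior region the aggregate profit $\tilde n_L(p_L-c)+\tilde n_F(p_F-c)-\gamma I_L^2$ is maximized with $I_F\in\{0,I_L\}$, and in either case equals $\Phi(I_L):=2\alpha g^2(I_L)+2\alpha\big(f(I_L)I_L+g(I_L)\big)^2-\gamma I_L^2$. This last identity is a direct check: writing the equilibrium fees and demands of Theorem~\ref{thm: out-outcome-sectionA} in terms of $g$ and $f$ gives $p_{L,1}^*-c=g(I_L^*)$, $p_{F,1}^*-c=f(I_L^*)I_L^*+g(I_L^*)$ and $\tilde n_{L,1}^*/\alpha=2g(I_L^*)$, $\tilde n_{F,1}^*/\alpha=2(f(I_L^*)I_L^*+g(I_L^*))$, so the sum telescopes. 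Since $\Phi$ over $I_L\ge L_0$ is maximized at $I_L=I_{L,1}^*$ with value $\pi^*$, the maximum of $u_{excess}$ over the interior region equals $\pi^*-d$, and the nonnegativity condition above becomes exactly $\pi^*\ge d$.

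Finally I would pin down the word \emph{interior}. Using $\tilde n_L=\alpha\big(n_L+\varphi_L(p_L,I_L)\big)$ with $I_F=I_L$ (so $\varphi_L=k-p_L$) and the explicit quantities of Theorem~\ref{thm: out-outcome-sectionA}, one computes $n_L^*=\frac{1+bI_L^*}{5}$ and, symmetrically, $n_F^*=\frac{4-bI_L^*}{5}$ (so $n_L^*+n_F^*=1$); solution~2 merely swaps the two. As $b,I_L^*>0$, the requirements $0<n_L^*,n_F^*<1$ for both solutions hold if and only if $bI_L^*<4$, i.e.\ $I_L^*<4/b$. Combining the two parts: an interior equilibrium-type solution, if it exists, is by Theorem~\ref{thm: out-outcome-sectionA} one of these two, so $\pi^*=\pi_L^*+\pi_F^*\ge d_L+d_F=d$ and, by interiority, $I_L^*<4/b$; conversely, when $\pi^*\ge d$ and $I_{L,1}^*<4/b$ both solutions of Theorem~\ref{thm: out-outcome-sectionA} are well defined and interior, $u_{excess}=\pi^*-d\ge 0$ there, so the NBS problem is feasible and is realized at these points, giving existence.

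The step I expect to be the main obstacle is the reduction in the second paragraph: showing that over the interior region the two-variable aggregate-profit maximization collapses to $\max_{I_L\ge L_0}\Phi(I_L)$ with optimal $I_F\in\{0,I_L\}$. That is essentially the substance of Theorem~\ref{thm: out-outcome-sectionA}, so I would simply invoke it; the remaining pieces — the carryover of the NBS split, the translation to $\pi^*\ge d$, and the $I_L^*<4/b$ interiority computation — are routine algebra. One should also note, as part of the hypotheses inherited from Theorem~\ref{thm: out-outcome-sectionA}, that $\max_{I_L\ge L_0}\Phi(I_L)$ is attained, so that $I_{L,1}^*$ and hence $\pi^*$ are well defined.
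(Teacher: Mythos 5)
Your proposal is correct and follows essentially the same route as the paper: reduce to maximizing $u_{excess}$, invoke Theorem~\ref{thm: out-outcome-sectionA} to collapse the problem to the one-variable maximization of $\Phi(I_L)$ with $I_F^*\in\{0,I_L^*\}$, identify $\pi^*\ge d$ with $u_{excess}\ge 0$, and translate interiority into $I_L^*<4/b$. The only cosmetic difference is that you check interiority directly via $0<n_L^*,n_F^*<1$ while the paper checks $0<x_0^*<1$; these are the same computation.
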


\begin{remark}\label{remark9}
Solutions do not exist for large $\gamma$ or small  $\alpha$, following the insights developed in Remarks~\ref{remark1}, \ref{remark0}.
\end{remark}

The equilibrium-type solutions are easy to compute as they involve optimization in one decision variable and closed-form expressions. They are not unique, unlike in Part~\uppercase\expandafter{\romannumeral1}~\cite{Part1} (Theorem 7). 

Our numerical computations would reveal that $I_L^*$  exceeds $L_0$ in some cases. Thus,  the deterrent of overall attrition and the incentive of increasing subscription from the exclusive additional bases, induce the SPs to acquire more spectrum than the minimum mandated amount, even when they are jointly deciding the acquisition amounts. Note that $p_{L}^{*}, p_{F}^{*}, n_{L}^{*}, n_{F}^{*}$ are linear increasing function of $I_{L}^{*}$. Thus, the SPs can increase both their subscriptions and access fees by acquiring greater overall spectrum $I_L^*$ from the regulator.  Like in the base case, $I_F^* \in \{0, I_L^*\}$, and thus the degree of cooperation is either $0$ or $1.$ This is in contrast to the equivalent case in Section~3 Part~\uppercase\expandafter{\romannumeral1}~\cite{Part1} (eg, Figure 5) which show that the degree of cooperation can assume values between $0$ and $1.$  Then, we consider the competition between SPs, i.e., the subscription $n_{L}^{*}$ and $n_{F}^{*}$. The subscriptions $n_{L}^{*}$ and $n_{F}^{*}$ are constant if there exists no outside option (Theorem~\ref{thm: Un-outcome-1}~(3)); but $\tilde{n}_{L}^{*}$ and $\tilde{n}_{F}^{*}$ change with the spectrum acquisition level of SP$_{L}$, $I_{L}^{*}$, if there exists an outside option.

We can write the first equilibrium-type solution as
\begin{align*}
\tilde{n}_{L}^*=&\frac{1}{5}+\varphi_{L}(p_{L}^*,I_{L}^*)+\frac{bI_{L}^{*}}{5}, \\
\tilde{n}_{F}^*=&\frac{4}{5}+\varphi_{F}(p_{L}^*,I_{L}^*)-\frac{bI_{L}^{*}}{5}
\end{align*}
In both equations, intuitively, the first term, $\frac{1}{5}$, $\frac{4}{5}$, represents the subscription from the common pool, if there had been no attrition to an outside option.  The second and third terms represent the impacts of the attrition as also the additions from the exclusive customer bases.  In the special case that $b=0$, i.e., when the demand functions depend only on the access fees,  the third term is $0$ and the demand functions capture the impact of attrition and additions in the expression for the subscriptions. For $b > 0$, the second and the third term together become $k - p_L^* + \frac{b}{5}I_L^* $ in the expression for $\tilde{n}_{L}^{*}$, and
$k - p_F^*+\frac{4b}{5}I_{L}^{*}$ in that for $\tilde{n}_{F}^{*}$. Thus, higher overall spectrum acquisition increases  the subscription for both SPs even in these terms. The intuitions remain same for the
second equilibrium-type solution, as the subscriptions are merely swapped.


Finally, when $L_{0}\geq 4/b, \Delta=0$, there does not exist an ``interior'' equilibrium-type solution, that is,   in which $0 < n_L, n_F < 1.$ Future research includes determining (1) whether there exists corner  equilibrium-type solutions, or (2) generalization to the case that $\Delta \neq 0$.

\subsection{Numerical results}\label{sec: outside numerical results}
\label{numericaloutside}
We set $b=2$, $k=c=1$, $w=0.2$ and $s=2$ throughout. For $s=2$ the condition for existence of interior  equilibrium-type solutions is satisfied for all cases below. Also $I_L^* < 4/b$ in all cases below.

\begin{figure}[hbt]
\begin{center}
  \includegraphics[width=1.7in, height=1.7in]{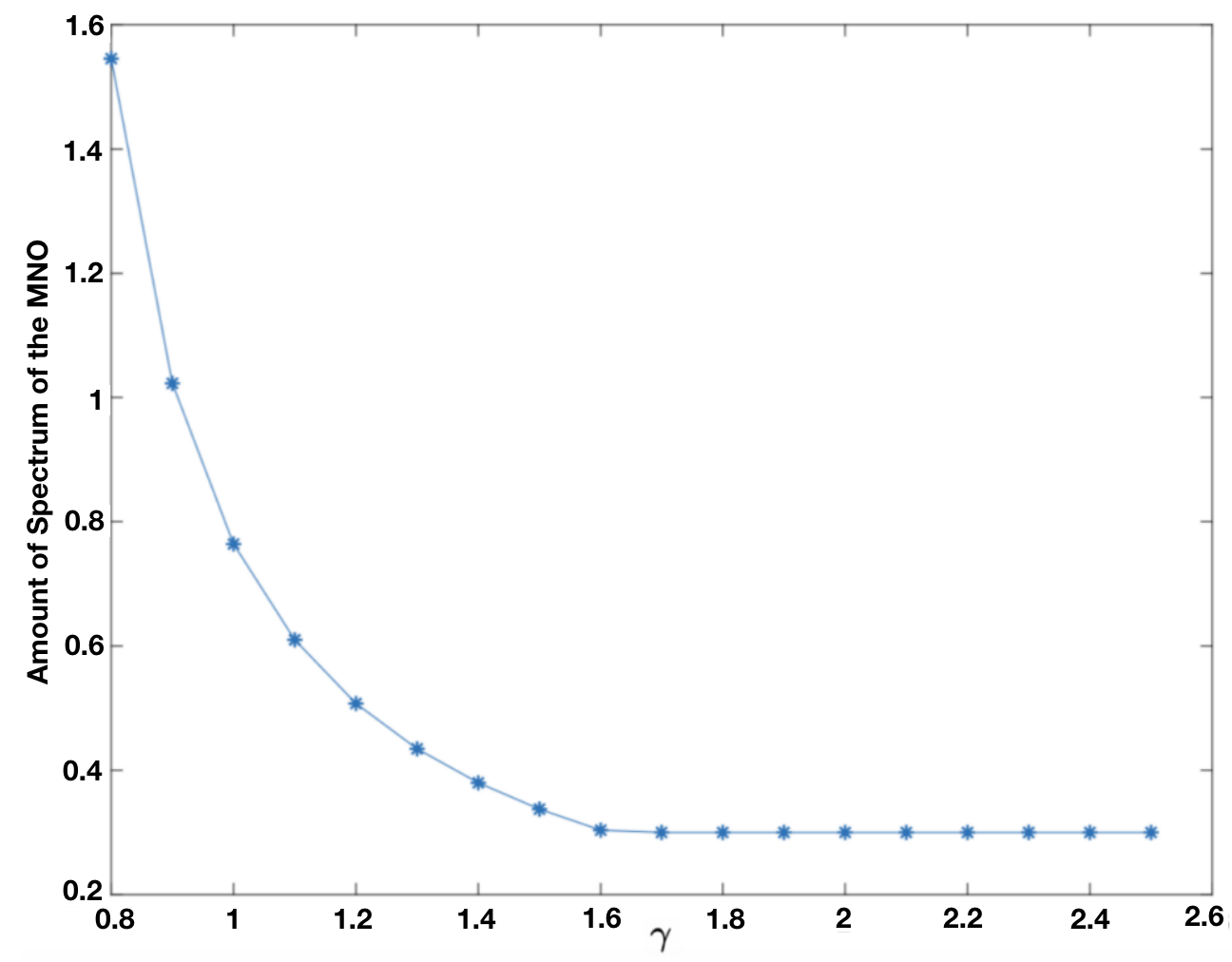}
  \includegraphics[width=1.7in, height=1.7in]{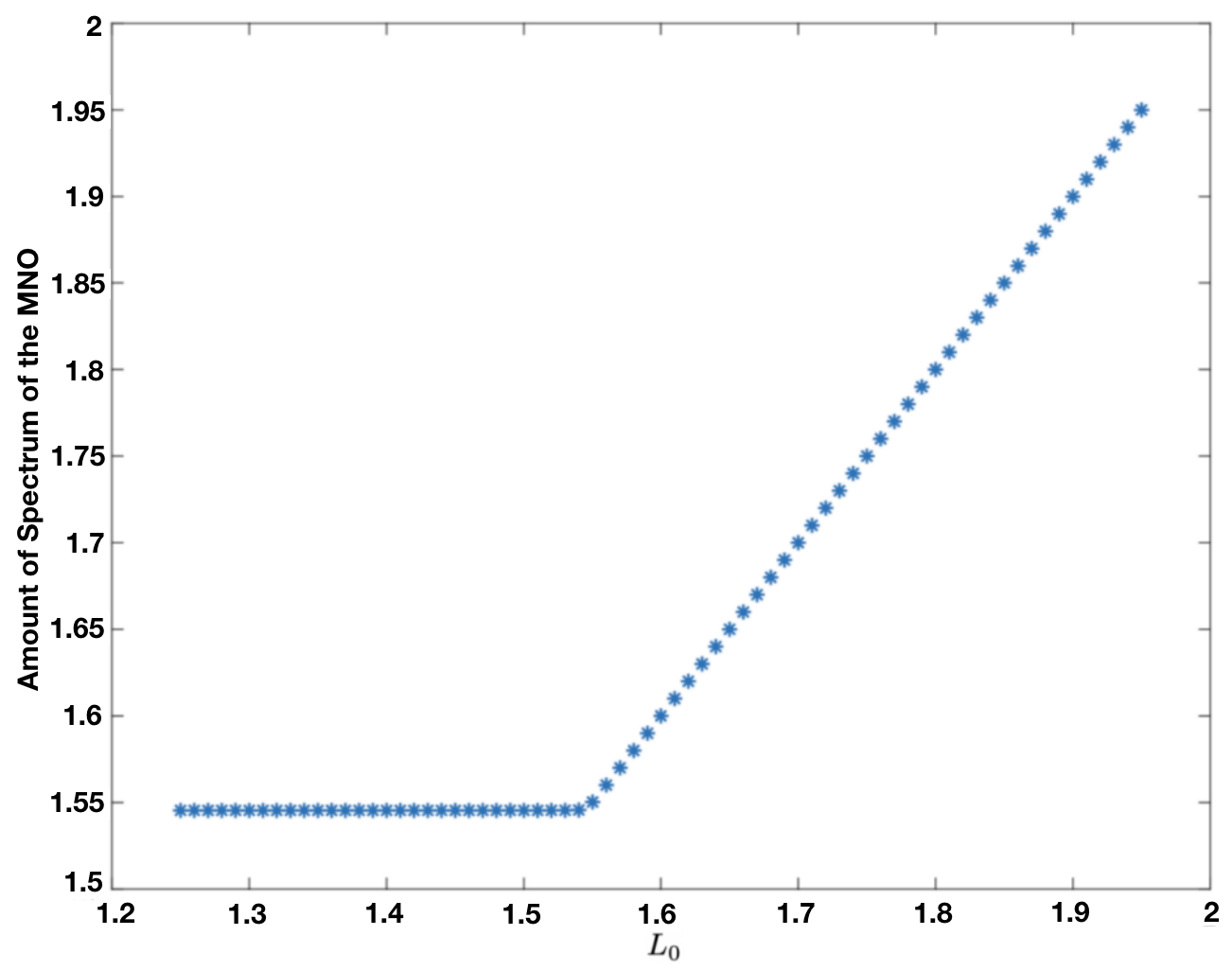}
\caption{The spectrum acquisition levels vs. $\gamma$ (left), $L_{0}$ (right)}
\label{figure:out-invest}	
\end{center}
\end{figure}

With $L_0 = 0.3$, Figure~\ref{figure:out-invest}~(left) shows that  the spectrum acquisitions for the two equilibrium-type solutions ($I_L^*$ is the same in both)  exceeds  $ L_0$ until $\gamma$ crosses a threshold, and subsequently remains at $L_0.$ Thus, SP$_L$ acquires more spectrum when it is cheaper to do so; otherwise settles at the minimum mandated amount. Now, with $\gamma=0.8$, Figure~\ref{figure:out-invest}~(right) shows that if $L_{0}$ is smaller than a threshold ($=1.54$), $I_{L}^{*}$ exceeds $L_0$ and equals the threshold value, and subsequently $I_{L}^{*}=L_{0}$. Thus, $I_{L}^{*}$ is initially constant and subsequently  increases linearly with $L_0.$

\begin{figure}[hbt]
\begin{center}
  \includegraphics[width=1.7in]{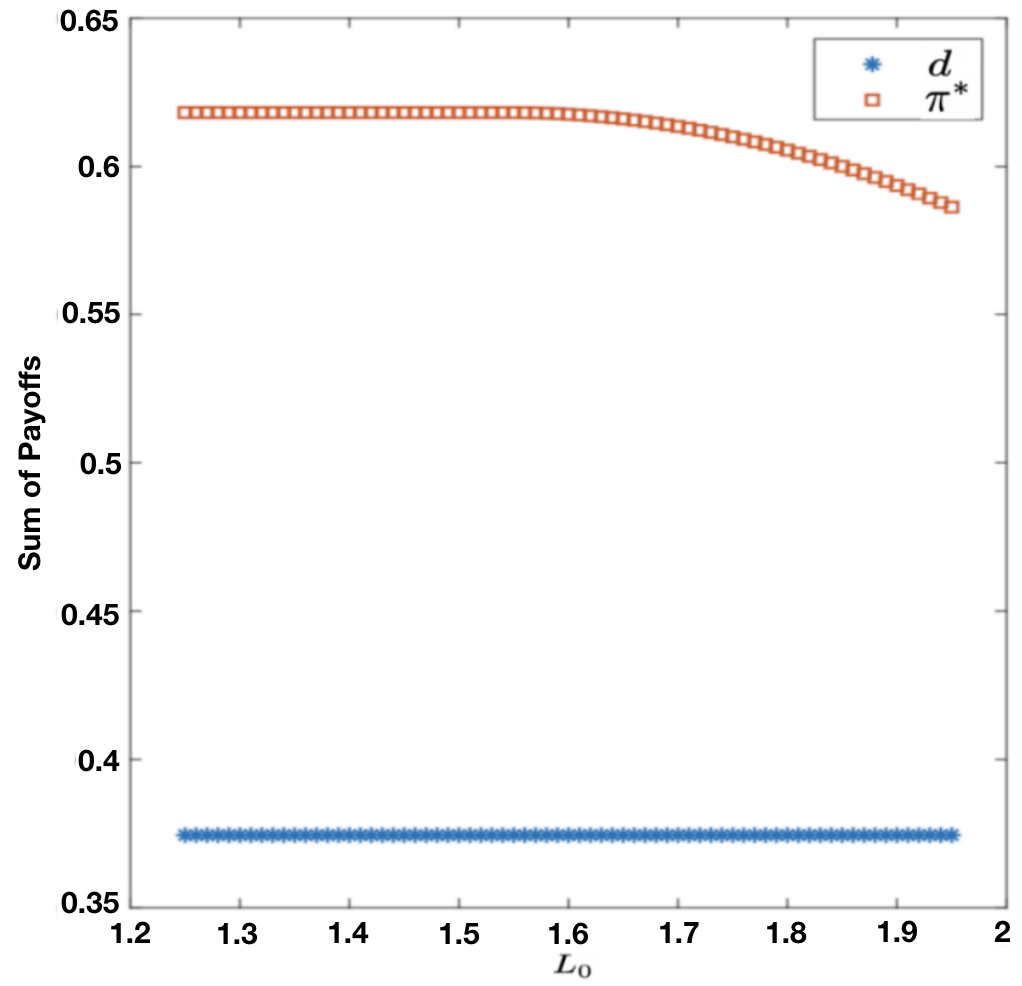}
    \includegraphics[width=1.7in]{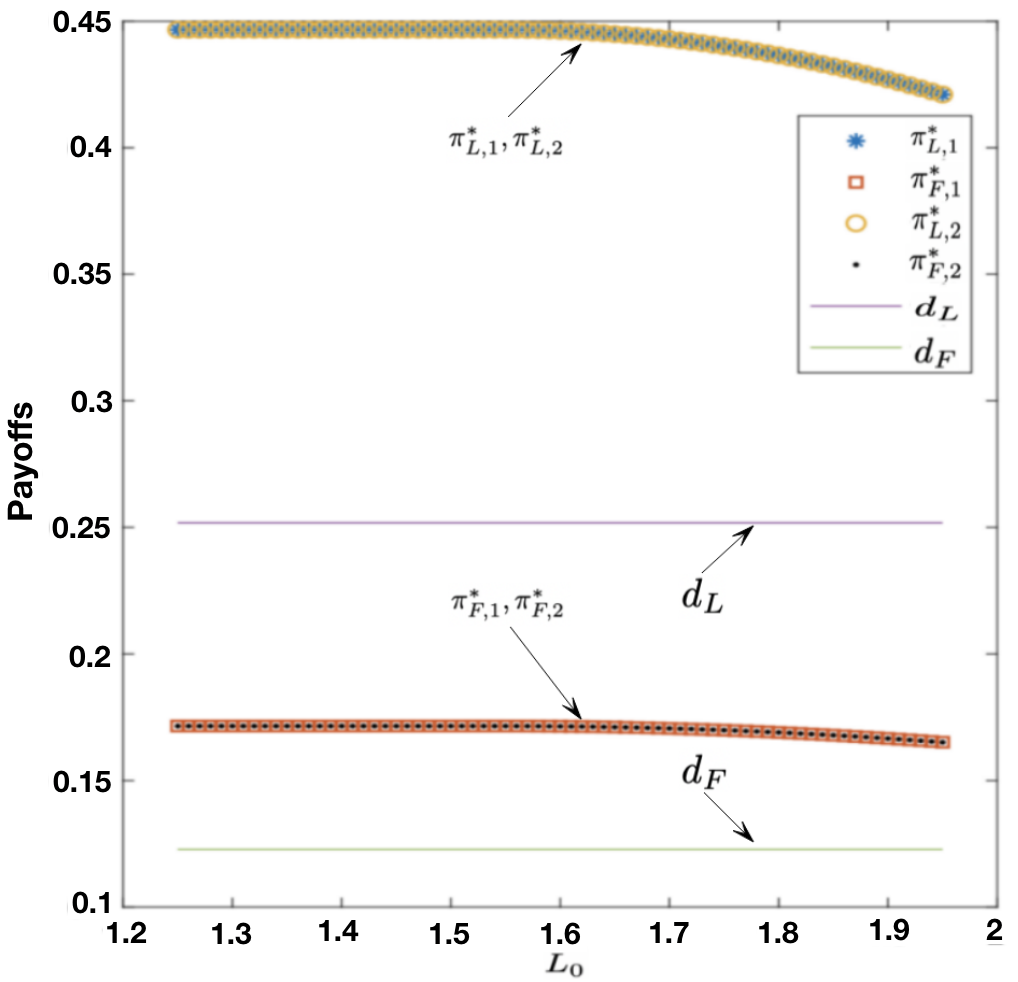}
\caption{Sum of equilibrium-type solution payoff, $\pi^*$, disagreement payoff, $d$ (left), payoffs of individual SPs (right) vs. $L_0$}
\label{fig-bargaining6}	
\end{center}
\end{figure}


Figure~\ref{fig-bargaining6} shows that the total payoff of the two SPs, as also their individual payoffs exceed the corresponding disagreement values, under both equilibrium-type solutions. As in the base case (eg, Figures~\ref{fig-bargaining1}, \ref{sepa_payoff1}, \ref{sepa_payoff2}) the total payoff and the individual payoffs decrease with increase in $L_0$, for the same reason as described in the paragraph after Corollary~\ref{cor: sum of payoffs independent of s*}. 
In the first equilibrium-type solution, SP$_{F}$ leases the entire spectrum SP$_L$ acquires, while in the second, SP$_L$ retains this entire spectrum. We observe $\pi_{L,1}^*<\pi_{F,1}^*$ and $\pi_{L,2}^*>\pi_{F,2}^*$. Thus, under Nash bargaining solution,   the SP that retains the entire spectrum  gets a higher share of the payoff.

\begin{figure}[hbt]
\begin{center}
  \includegraphics[width=2in]{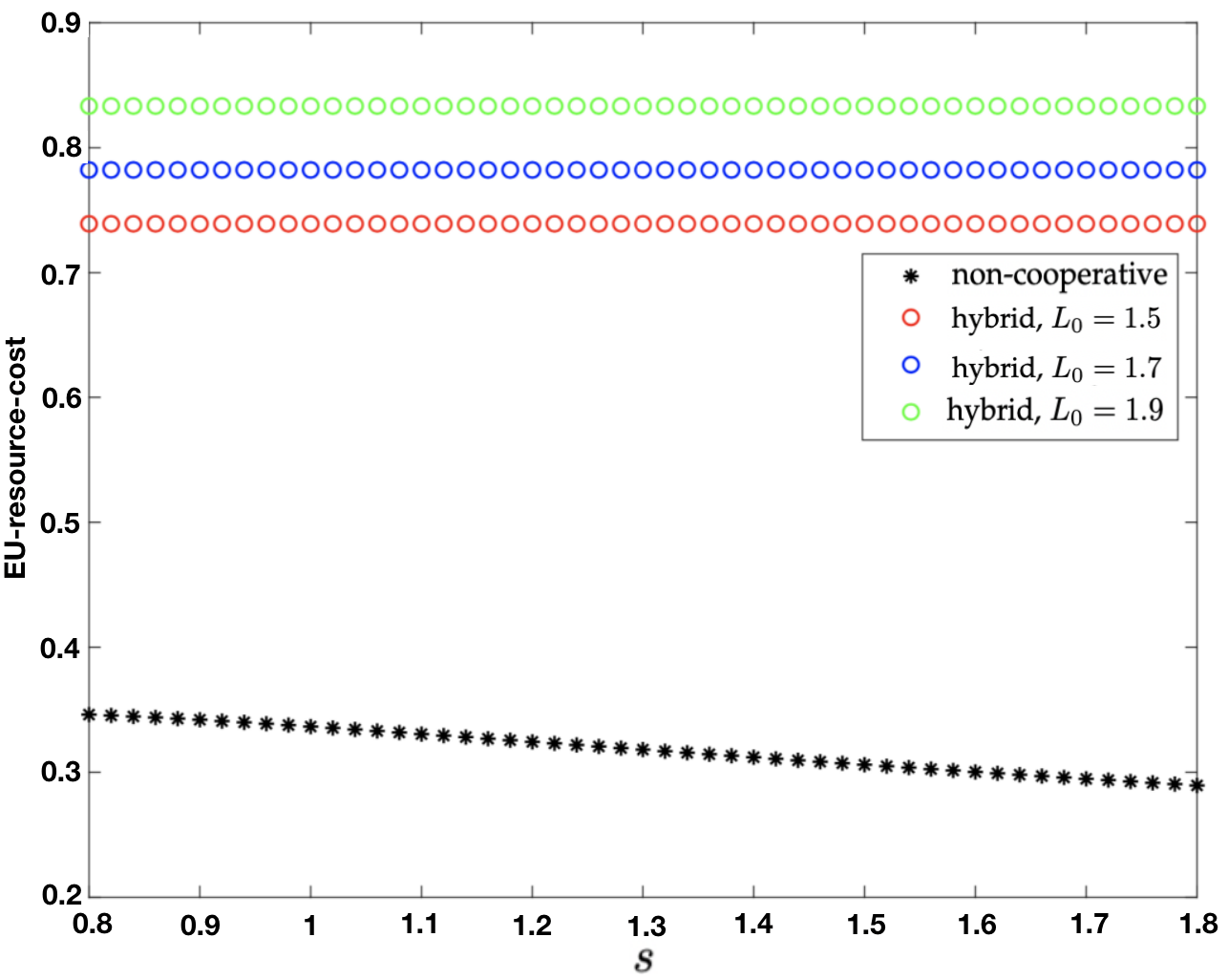}
\caption{The EU-resource-benefit vs $s$}
\label{fig-bargaining_a2}	
\end{center}
\end{figure}

From Theorem~\ref{thm: out-outcome-sectionA}, by simple calculation, the EU-resource-cost metric is $I_{L,1}^*/p_{F,1}^*$ in either SPNE. 
Figure~\ref{fig-bargaining_a2} shows that this metric is higher when the SPs jointly decide their spectrum acquisitions than when they decide separately (as in Part I~\cite{Part1}).  Naturally, for the joint decision case, this metric is constant with respect to $s$,  the reservation fee SP$_F$ pays to SP$_L$. Similar to the base case, this metric  increases with $L_0$.

\section{Generalization: Limited spectrum from the central regulator}\label{sec: Limited spectrum from the central regulator}
We now consider that SP$_L$ can lease at most $M$ spectrum units from the central regulator due to paucity at the latter's end, and
generalize the results in Sections~\ref{sec: cooperative base case}, \ref{sec: outside option}.
We naturally assume that $M \geq L_0$ (recall that $L_0$ is the minimum amount of spectrum that SP$_L$ is required to obtain).
\subsection{The base case}
Theorems~\ref{toprove}, \ref{thm: Un-outcome-1}, \ref{thm:existence}, \ref{thm: Un-outcome-4} in Section~\ref{sec: Un-outcome}, reveal that the SPNE strategies in the base case only depend on the lower bound $L_0$ of $I_L$. Thus, when $I_L$ is additionally required to be less than or equal to $M$,  SPNE strategies remain the same.

\subsection{EUs with outside option}
\label{sec: Limited spectrum from the central regulator-outside-option}
If an outside option exists in the system, then from Theorem~\ref{thm: out-outcome-sectionA}, the SPNE strategies  do not  only depend  on $L_0$. Then, Theorem~\ref{thm: out-outcome-sectionA} holds with
the constraint $L_0 \leq I_L$ in (1) replaced by $L_0 \leq I_L \leq M.$ We prove this in Appendix~\ref{APP: outside option,M}. The replacement is also intuitive.


 The numerical results in Section~\ref{sec: outside numerical results} reveal that without this upper bound $I_L^*$ is an interior point. Thus, if $M$ is relatively small, $I_L^*=M.$ If $M$ is large, then the values of  $I_L^*$ are as given by Theorem~\ref{thm: out-outcome-sectionA}, and as computed in Section~\ref{sec: outside numerical results}.

\section{Conclusions and Future Research}
This paper investigates the incentives of mobile network operators (MNOs) for acquiring
additional  spectrum to offer mobile virtual network operators (MVNOs) and thereby inviting competition
for a common pool of end users (EUs). We consider interactions between two service providers, a  MNO and an MVNO, when the EUs 1) must choose one of them 2) have the option to defect to  an outside option should the SP duo  offer unsatisfactory access fees or qualities of service. The 2 SPs jointly decide their spectrum acquisitions and the money flow between them, and separately decide the access fees for the EUs. We propose a multi-stage hybrid of cooperative bargaining and noncooperative games for modeling the interactions between the SPs, and  identify when the overall equilibrium solutions exist,  when it is unique and characterize the  equilibrium solutions when they exist.

Analytical and numerical results show that the payoffs of both SPs in this hybrid framework are higher than those in noncooperative framework (in Part~\uppercase\expandafter{\romannumeral1}~\cite{Part1}). In a market without outside option, EUs in this hybrid framework can attain higher or lower value of resource-cost tradeoff than that in noncooperative framework, while in a market with outside option, EUs typically attain a strictly higher value of this tradeoff than that in noncooperative framework.

Future research includes generalization  to accommodate: 1) arbitrary number of SPs 2) non-uniform distribution of EUs between the two SPs in the hotelling model,  3) distinct transaction costs $c_L$ and $c_F$, 4) potentially non-convex  spectrum reservation fee functions that the SP$_F$ pay the SP$_L$ and the SP$_L$ pay the regulator,   5) arbitrary transport cost $t_L, t_F$ functions of the spectrum acquired by the SPs, $I_L, I_F$. Considering 3 SPs as in Section~4 of Part~\uppercase\expandafter{\romannumeral1}~\cite{Part1} constitute a starting point towards 1). Possible starting points towards the others, as also that for moving from 3 SPs to arbitrary number of SPs,  have been provided in Section 5 of Part~\uppercase\expandafter{\romannumeral1}~\cite{Part1}.

\cleardoublepage

\appendices

\section{Proofs for Theorems in Section \ref{sec: Un-outcome}}\label{Appendix: Unequal1}



We prove Theorem~\ref{toprove} and Theorem~\ref{thm:existence} in two steps.

Let $|\Delta| < 1$. Consider $(I_L^*, I_F^*, p_L^*, p_F^*, n_L^*, n_F^*)$ that constitute the optimum solution of
\begin{equation}\label{equ: BM-bargainging optimization-3}
\begin{aligned}
\max_{I_{L}, I_{F}}\quad&u_{excess}\\
s.t.\quad&L_0\leq I_{L},\,\,0\leq I_{F}\leq I_{L}.
\end{aligned}
\end{equation}
  Here  $p_L^*, p_F^*, n_L^*, n_F^*$  are obtained from $I_L^*, I_F^*$  Theorem ~\ref{thm1Part1}:
 \begin{equation}\label{equ: Un-A-interior-prices}
\begin{aligned}
p_{L}^{*}=c+\frac{2}{3}-\frac{I_{F}}{3I_{L}}+\frac{\Delta}{3}, \ \
p_{F}^{*}=c+\frac{1}{3}+\frac{I_{F}}{3I_{L}}-\frac{\Delta}{3}.
\end{aligned}
\end{equation}
\begin{equation}\label{equ: Un-A-interior-subscriptions}
\begin{aligned}
n_{L}^{*}=p_L^*-c, \ \, n_{F}^{*}= p_F^*-c
\end{aligned}
\end{equation}

   In {\bf Step 1} we show that  any such $(I_L^*, I_F^*)$
must be of the form given in Theorem~\ref{toprove}. Next, note that an optimum solution of \eqref{equ: BM-bargainging optimization-3temp}, should it exist,  is also an optimum solution of \eqref{equ: BM-bargainging optimization-3}. Since equilibrium-type solutions constitute the optimum solutions of \eqref{equ: BM-bargainging optimization-3temp},  Theorem~\ref{toprove} follows.

In {\bf Step 2} we observe that given the $I_L^*, I_F^*$ of the possible equilibrium-type solutions mentioned in Theorem~\ref{toprove}, 1) $\tilde{s}^{*}, \theta^*$ of these can be  obtained from \eqref{equ: BM-bargaining-s} and  \eqref{equ: BM-bargaining-theta} respectively, and 2) $p_L^*, p_F^*, n_L^*, n_F^*$ of these can be obtained from Theorem~\ref{thm1Part1}.
 Accordingly,   Theorem~\ref{thm: Un-outcome-1} follows from Theorem~\ref{toprove}, as mentioned before Theorem~\ref{thm: Un-outcome-1}. The total payoff of the two SPs under each of the possible equilibrium-type solutions in Theorem~\ref{thm: Un-outcome-1} is the same, and is given in Corollary~\ref{cor: sum of payoffs independent of s*}. If any possible equilibrium-type solution listed in Theorem~\ref{toprove} is an equilibrium-type solution, then this total payoff must not exceed the sum of the disagreement payoffs.  Next, if this total payoff is not less than the disagreement payoffs, then $u_{excess} \geq 0$ under the possible equilibrium-type solutions listed in Theorem~\ref{toprove}. Thus, these solutions satisfy   the additional constraint in  \eqref{equ: BM-bargainging optimization-3temp} (beyond \eqref{equ: BM-bargainging optimization-3}), and therefore constitute its optimum solution too. Thus, these are equilibrium-type solutions . Theorem~\ref{thm:existence} follows.

\noindent{\bf Step 1.} \begin{proof} Consider $(I_L^*, I_F^*, p_L^*, p_F^*, n_L^*, n_F^*)$ that constitute the optimum solution of \eqref{equ: BM-bargainging optimization-3}.

Substituting \eqref{equ: Un-A-interior-prices} and \eqref{equ: Un-A-interior-subscriptions} into (\ref{equ: BM-L-payoff}) and (\ref{equ: BM-F-payoff}), we can get the payoffs of $\text{SP}_{F}, \text{SP}_{L}$, for some $\tilde{s}, \theta$  as:
\begin{equation}\label{equ: Fpayoff10}
\begin{aligned}
\pi_{F}^*=(\frac{1-\Delta}{3}+\frac{I_F^*}{3I_L^*})^2
-\tilde{s}(I_{F}^*)^2+\theta,
\end{aligned}
\end{equation}
  \begin{equation}\label{equ: Un-L-payoff1}
 \begin{aligned}
\pi_{L}^*=(\frac{\Delta+2}{3}-\frac{I_{F}^*}{3I_{L}^*})^{2}+\tilde{s}(I_{F}^*)^{2}-\gamma (I_{L}^*){2}-\theta.
\end{aligned}
\end{equation}

By Definition \ref{def: BM-u-excess}, substituting \eqref{equ: Fpayoff10} and \eqref{equ: Un-L-payoff1}  into (\ref{equ: BM-u-excess}), we can get $u_{excess}^*$,
\begin{equation}\label{equ: Un-bargainging optimization-3-sectionA}
\begin{aligned}
u_{excess}^*=&(\frac{\Delta+2}{3}-\frac{I_{F}^*}{3I_{L}^*})^{2}-\gamma (I_{L}^*){2}-d_{L}\\
+&(\frac{1-\Delta}{3}+\frac{I_{F}^*}{3I_{L}^*})^{2}-d_{F}.	
\end{aligned}
\end{equation}
Denote $t^*=I_{F}^*/I_{L}^*$, \eqref{equ: Un-bargainging optimization-3-sectionA} is equivalent to
\begin{equation}\label{equ: Un-uexcess-D}
\begin{aligned}
u_{excess}^*=&(\frac{\Delta+2-t^*}{3})^{2}-\gamma (I_{L}^*)^{2}\\
+&(\frac{1 -\Delta+t^*}{3})^{2}-d_{L}-d_{F}.
\end{aligned}
\end{equation}

Now we prove that $I_{L}^{*}=L_{0}$ by contradiction. Suppose $I_{L}^{*}>L_{0}$, then take $\hat{I}_{L}=L_{0}$ and $\hat{I}_{F}=I_{F}^{*}\frac{L_{0}}{I_{L}^{*}}$. Thus $t^{*}=I_{F}^{*}/I_{L}^{*}=\hat{I}_{F}/\hat{I}_{L}$. Since $t^{*}$ is constant and $\hat{I}_{L} < I_{L}^{*}$ , then $u_{excess}$ is higher with $\hat{I}_{F}$ and $\hat{I}_{L}$ than with $I_{F}^{*}$ and $I_{L}^{*}$. This contradicts the optimality of  $I_{F}^{*}$ and $I_{L}^{*}$. Therefore, $I_{L}^{*}=L_{0}$.

Take the second derivative of $u_{excess}$ with respect to $I_{F}$, $\frac{d^{2}u_{excess}}{dI_{F}^{2}}=\frac{4}{9I_{L}^{2}}>0$, then $u_{excess}$ is convex with respect to $I_{F}$, and the maximum of $u_{excess}$ must be obtained at the boundaries of $I_{F}$.

Then, we obtain the optimal solution $I_{F}^{*}$. Note $0\leq I_{F}\leq I_{L}$. Substitute $I_{F}^{*}=0$ and $I_{F}^{*}=I_{L}^{*}=L_{0}$ into (\ref{equ: Un-uexcess-D}), we have
\begin{align*}
u_{excess}(0,L_{0})-u_{excess}(L_{0},L_{0})=\frac{4}{9}\Delta.\\
\end{align*}
Therefore
\begin{equation*}
\begin{aligned}
&u_{excess}(0,L_{0})>u_{excess}(L_{0},L_{0})\quad\text{if}\quad \Delta>0\\
&u_{excess}(0,L_{0})=u_{excess}(L_{0},L_{0})\quad\text{if}\quad \Delta=0\\
&u_{excess}(0,L_{0})<u_{excess}(L_{0},L_{0})\quad\text{if}\quad \Delta<0
\end{aligned}
\end{equation*}
\begin{equation*}
\begin{aligned}
\Rightarrow&\left\{\begin{aligned}&(I_{F}^{*}, I_{L}^{*})=(0, L_{0})\quad\text{if}\quad 0<\Delta\leq 1\\
&(I_{F}^{*}, I_{L}^{*})=(0\,\,\text{or}\,\,L_{0},L_{0})\quad\text{if}\quad \Delta=0\\
&(I_{F}^{*}, I_{L}^{*})=(L_{0}, L_{0})\quad\text{if}\quad -1<\Delta<0\end{aligned}\right..
\end{aligned}
\end{equation*}


\end{proof}

\noindent{\bf Proof of Theorem~\ref{thm: Un-outcome-4}.}
\begin{proof}
Once $I_L^*, I_F^*$ are determined, $\tilde{s}^{*}$ is obtained by (\ref{equ: BM-bargaining-s}) and $\theta^*$ is obtained by \eqref{equ: BM-bargaining-theta}. We obtain $I_L^*, I_F^*, p_L^*, p_F^*$  in two steps: $\Delta\leq-1$ ({\bf Step 1}),  $\Delta\geq1$ ({\bf Step 2}).

\noindent{\bf Step 1: $\Delta\leq-1$ }
Suppose the reservation fee is $s$ in the sequential framework with $s>\gamma$.
 From Theorem~2~(3) in Part~\uppercase\expandafter{\romannumeral1}~\cite{Part1},  $n_L^*=0,n_F^*=1,$ \begin{equation}\label{equ: Un-A-corner-prices}
\begin{aligned}
&p_{L}^{*}=p_{F}^{*}+ \Delta -1\\
&p_{F}^{*}\in[c+1,c-\Delta-1].
\end{aligned}
\end{equation}
These also constitute the SPNE, together with,
\begin{align}\label{equ: corner investment level}
I_{L}'=I_{F}'=\frac{1}{\sqrt{2s}},	
\end{align}
 that provides the disagreement payoffs, $d_L, d_F$.
From (1), (2) in Part~\uppercase\expandafter{\romannumeral1}~\cite{Part1} and \eqref{equ: corner investment level}, $d_{L}+d_{F}=p_{F}^{*}-c-\frac{\gamma}{2s}.$

Again, from \eqref{equ: BM-L-payoff} and \eqref{equ: BM-F-payoff}, under equilibrium-type solution,
the payoffs of the SPs are
\begin{align}
\pi_{F}=&p_{F}^{*}-c-\tilde{s}^*(I_{F}^*)^{2}+\theta^*\label{equ: Fpayoff10-corner1},\\
\pi_{L}=&\tilde{s}^*I_{F}^{2}-\gamma (I_{L}^*)^{2}-\theta^*\label{equ: Un-L-payoff1-corner1}.
\end{align}

By Definition \ref{def: BM-u-excess}, substituting \eqref{equ: Fpayoff10-corner1} and \eqref{equ: Un-L-payoff1-corner1}   into (\ref{equ: BM-u-excess}):
\begin{equation}\label{equ: Un-uexcess1-corner1}
\begin{aligned}
u_{excess}^*=&p_{F}^*-c-\gamma (I_{L}^*)^{2}-d_{L}-d_{F}.
\end{aligned}
\end{equation}
Note that $u_{excess}^*$ is independent of $I_{F}$, then $I_{F}^{*}$ can be any number between $[0, I_{L}^{*}]$.
Therefore,  $I_{L}^{*}$ is a solution of the following optimization problem,
  \begin{equation}
\begin{aligned}
\max_{I_{L}, I_{F}}\quad&u_{excess}=p_{F}^*-c-\gamma I_{L}^{2}-d_{L}-d_{F}\\
s.t\quad&L_0\leq I_{L}\\
&u_{excess}\geq0
\end{aligned}
\end{equation}
From \eqref{equ: Un-A-corner-prices}, $p_{F}^{*}$ is independent of $I_{L}$, so the objective function is a decreasing function of $I_{L}$. Thus, $I_{L}^{*}=L_{0}$. Since $d_{L}+d_{F}=p_{F}^{*}-c-\frac{\gamma}{2s}$, then $u_{excess}^*\geq0$ is equivalent to $L_{0}\leq \frac{1}{\sqrt{2s}}.$ The result follows.

\noindent{\bf Step 2: $\Delta\geq1$:}
We first consider the corner SPNE for ($p_L, p_F, n_L, n_F)$ in  Theorem~2~(1) in Part~\uppercase\expandafter{\romannumeral1}~\cite{Part1}: $n_L^*=1,n_F^*=0$, and
\begin{equation}\label{equ: Un-A-corner-prices1}
\begin{aligned}
p_{F}^{*}=&p_{L}^{*}+v^{F}-v^{L}\\
p_{L}^{*}\in&[c+1,c+v^{L}-v^{F}].
\end{aligned}
\end{equation}
Along with $I_{L}'=\delta$, $I_{F}'=0$, these also constitute the SPNE that provide the disagreement payoffs.
Therefore, from (1) in Part~\uppercase\expandafter{\romannumeral1}~\cite{Part1},
$d_F=0$ and $d_L=p_{L}^*-c-\gamma \delta^2.$
From \eqref{equ: BM-L-payoff}, \eqref{equ: BM-F-payoff}, under an equilibrium-type solution,
\begin{equation}\label{equ: Fpayoff10-corner2}
\begin{aligned}
&\pi_{F}^*=\tilde{s}^*(I_{F}^*)^{2}+\theta^*\\
&\pi_{L}^*=p_{L}^{*}-c+\tilde{s}^*(I_{F}^*)^{2}-\gamma (I_{L}^*)^{2}-\theta^*,
\end{aligned}
\end{equation}
then substituting \eqref{equ: Fpayoff10-corner2}  into (\ref{equ: BM-u-excess}), we can get $u_{excess}^*$:
\begin{equation}\label{equ: Un-uexcess1-corner2}
\begin{aligned}
u_{excess}^*=&p_{L}^*-c-\gamma (I_{L}^*)^{2}-d_{L}-d_{F}.
\end{aligned}
\end{equation}
Note that $u_{excess}^*$ is independent of $I_{F}$, then $I_{F}^{*}$ can be any number between $[0, I_{L}^{*}]$.
Therefore, the optimum $I_{L}^{*}$ is a solution of the following optimization problem,
  \begin{equation}
\begin{aligned}
\max_{I_{L}, I_{F}}\quad&u_{excess}=p_{L}^*-c-\gamma I_{L}^{2}-d_{L}-d_{F}\\
s.t\quad&L_0\leq I_{L}\quad u_{excess}\geq0
\end{aligned}
\end{equation}
From \eqref{equ: Un-A-corner-prices1}, $p_{L}^{*}$ is independent of $I_{L}$, so the objective function is a decreasing function of $I_{L}$, then $I_{L}^{*}=L_{0}.$ Note that  $d_{L}+d_{F}=p_{L}^{*}-c-\gamma\delta^2$, then $u_{excess}^*\geq0$ is equivalent to $L_{0}\leq\delta.$

Next, we consider $\Delta=1$ and the interior SPNE in   Theorem~2~(2) in Part~\uppercase\expandafter{\romannumeral1}~\cite{Part1}, i.e., $0<n_F^*, n_L^*<1$. By similar analysis in Theorem~\ref{toprove}, we have $I_{L}^{*}=L_{0}$ and $I_{F}^*=0$. Therefore from \eqref{equ: Un-A-interior-prices} and \eqref{equ: Un-A-interior-subscriptions}, $p_{L}^{*}=c+1$, $p_{F}^*=c$, $n_{L}^*=1$, and $n_F^*=0$, which is contradicted to $0<n_F^*, n_L^*<1.$ Thus no equilibrium-type solution exists in this case.

\end{proof}

\noindent{\bf Proof of Theorem~\ref{thm: constant m*}.}
\begin{proof}
We calculate $m^*$ in $5$ cases: $-1<\Delta<0$, $\Delta=0$, $0<\Delta<1$, $\Delta\leq-1$ and $\Delta\geq1$.

\noindent{\bf Case 1.} When $-1<\Delta<0$. Note that $I_L^*-I_F^*=0$, then $m^*=I_F^*/p_F^*=L_0/(c+2/3-\Delta/3)$.

\noindent{\bf Case 2.} When $\Delta=0$. If $I_F^*=0$, then  then $m^*=(I_L^*-I_F^*)/p_L^*=L_0/(c+2/3)$. If $I_F^*=I_L^*$, then   $m^*=I_F^*/p_F^*=L_0/(c+2/3)$.

\noindent{\bf Case 3.} When $0<\Delta<1$. Note that $I_F^*=0$, then $m^*=(I_L^*-I_F^*)/p_L^*=L_0/(c+2/3+\Delta/3)$.
\end{proof}

\section{Proof of Theorems in Section~\ref{analysis}}\label{APP: outside option}
Once $I_L^*, I_F^*$ are determined, $\tilde{s}^{*}$ is obtained by (\ref{equ: BM-bargaining-s}) and $\theta^*$ is obtained by \eqref{equ: BM-bargaining-theta}. We therefore focus on obtaining $(I_L^*, I_F^*, p_L^*, p_F^*, n_L^*, n_F^*)$ corresponding to the equilibrium-type solutions.
Let $\Delta = 0$. Consider $(I_L^*, I_F^*, p_L^*, p_F^*, n_L^*, n_F^*)$ that constitute the optimum solution of \eqref{equ: BM-bargainging optimization-3} (with only the expressions for $u_{excess}$ differing from Appendix~\ref{Appendix: Unequal1}).
Per  Theorem 7 (3), (4), \cite{Part1}, for an interior SPNE, $I_L^* < 4/b$, and :\begin{equation}\label{equ: outsideoption-subscriptions}
\begin{aligned}
\tilde{n}_{L}^{*}=&\frac{I_{L}^{*}-I_{F}^{*}}{I_{L}^{*}}+p_{F}^{*}-2p_{L}^{*}+k+bI_{L}^{*}-bI_{F}^{*}\\
\tilde{n}_{F}^{*}=&\frac{I_{F}^{*}}{I_{L}^{*}}+p_{L}^{*}-2p_{F}^{*}+k+bI_{F}^{*},
\end{aligned}
\end{equation}
\begin{equation}\label{equ: price}
\begin{aligned}
p_{L}^{*}=&\frac{1}{15}+\frac{2c}{3}+\frac{k}{3}+\frac{t_{F}}{5}-\frac{b}{5}I_{F}+\frac{4b}{15}I_{L},\\
p_{F}^{*}=&\frac{1}{15}+\frac{2c}{3}+\frac{k}{3}+\frac{t_{L}}{5}+\frac{b}{15}I_{L}+\frac{b}{5}I_{F}.
\end{aligned}
\end{equation}

First, we show that  any such $(I_L^*, I_F^*, p_F^*, n_L^*, n_F^*)$
must be of the form given in Theorem~\ref{thm: out-outcome-sectionA}. Next, note that an optimum solution of \eqref{equ: BM-bargainging optimization-3temp}, should it exist,  is also an optimum solution of \eqref{equ: BM-bargainging optimization-3}. Since equilibrium-type solutions constitute the optimum solutions of \eqref{equ: BM-bargainging optimization-3temp},  Theorem~\ref{thm: out-outcome-sectionA} follows.


In fact, substituting \eqref{equ: Out-demand} and \eqref{demandfunction} in Part~\uppercase\expandafter{\romannumeral1}~\cite{Part1} into \eqref{equ: BM-L-payoff} and \eqref{equ: BM-F-payoff}, 
\begin{equation}\label{equ:stage 3}
\begin{aligned}
\pi_{F}=&\alpha(t_{L}+k+p_{L}-2p_{F}+bI_{F})(p_{F}-c)-\tilde{s}I_{F}^{2}+\theta\\
\pi_{L}=&\alpha(t_{F}+k+p_{F}-2p_{L}+bI_{L}-bI_{F})(p_{L}-c)\\
+&\tilde{s}I_{F}^{2}-\gamma I_{L}^{2}-\theta.
\end{aligned}
\end{equation}

\begin{lemma}
\label{theorem:g_bargain}
In any solution of \eqref{equ: BM-bargainging optimization-3}, 
$I_{F}^{*}=I_{L}^{*}$ or $I_F^* = 0$.
\end{lemma}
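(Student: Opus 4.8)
The plan is to mimic the convexity argument used in the proof of Theorem~\ref{toprove} (Step 1 in Appendix~\ref{Appendix: Unequal1}), now applied to the modified objective $u_{excess}$ that incorporates the demand functions $\varphi_L,\varphi_F$. First I would substitute the Stage~2 equilibrium prices \eqref{equ: price} (specialized to $\Delta=0$, so $t_L = 1 - I_F^*/I_L^*$, $t_F = I_F^*/I_L^*$) and the corresponding subscriptions \eqref{equ: outsideoption-subscriptions} into the payoff expressions \eqref{equ:stage 3}, and then into the definition of $u_{excess}$ in \eqref{equ: BM-u-excess}. As in the base case, the disagreement payoffs $d_L, d_F$ and the term $-\gamma I_L^2$ do not involve $I_F$, so for the purpose of locating the optimizing $I_F$ they are additive constants. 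What remains is to show that $u_{excess}$, viewed as a function of $I_F$ for fixed $I_L$, is convex on $[0, I_L]$; convexity immediately forces the maximizer to lie at an endpoint, i.e. $I_F^* \in \{0, I_L^*\}$, which is exactly the claim of Lemma~\ref{theorem:g_bargain}.

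The key computation is therefore the second derivative $\frac{d^2 u_{excess}}{dI_F^2}$ with $I_L$ held fixed. I expect that after the substitutions, $u_{excess}$ is a quadratic (or at worst low-degree polynomial) in $t^* = I_F/I_L$, hence in $I_F$, because the equilibrium prices are affine in $I_F$ and the subscriptions, being affine in the prices and in $t^*$, are also affine in $I_F$; products of these affine quantities give a quadratic. The step is then to verify that the coefficient of $I_F^2$ is strictly positive — paralleling the $\frac{4}{9 I_L^2} > 0$ obtained in the base case, where here one would expect something of the form (positive constant)$\times(\tfrac{1}{I_L} + b)^2$ or similar, reflecting the definition $f(I_L) = \frac{1}{5 I_L} + \frac{b}{5} > 0$ that the paper has already singled out. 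I would carry out the substitution carefully, collect the $I_F^2$ terms from both $n_L(p_L - c)$ and $n_F(p_F - c)$, and confirm the sum of their coefficients is positive for all admissible parameter values (recall $\alpha > 0$, and the relevant regime $b \ge 0$, $I_L > 0$).

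The main obstacle I anticipate is purely algebraic bookkeeping: unlike the base case, each of $p_L^*, p_F^*, \tilde n_L^*, \tilde n_F^*$ now carries the extra additive pieces $k$, $bI_L$, $bI_F$, and the $\alpha$ scaling, so the expansion of $\tilde n_L^*(p_L^* - c) + \tilde n_F^*(p_F^* - c)$ has many more cross terms. The cleanest route is probably not to expand everything but to isolate only the $I_F$-dependence: write $p_L^* = A_L - \beta I_F$ and $p_F^* = A_F + \beta I_F$ for appropriate constants $A_L, A_F$ (depending on $I_L$) and $\beta = \frac{b}{5} + \frac{1}{5 I_L}$ coming from the $t_F/5 - bI_F/5$ and $t_L/5 + bI_F/5$ terms, note that $\tilde n_L^* = p_L^* - c$ and $\tilde n_F^* = p_F^* - c$ type identities may hold as they did for the interior SPNE in Part~I, reducing $u_{excess}$ to $(p_L^* - c)^2 + (p_F^* - c)^2 - \gamma I_L^2 - d_L - d_F$ (up to the $\alpha$ factor). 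Then the $I_F$-second-derivative is manifestly $2\beta^2 \alpha \cdot 2 > 0$, which is the whole argument. I would double-check the identities $\tilde n_i^* = p_i^* - c$ against Theorem~7 of Part~I before relying on this shortcut; if they fail, the brute-force second-derivative computation still works but requires the careful sign check described above.
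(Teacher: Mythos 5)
Your proposal is correct and takes essentially the same route as the paper: both substitute the Stage-2 equilibrium prices and subscriptions into $u_{excess}$, observe that for fixed $I_L$ it is a quadratic in $I_F$ with strictly positive leading coefficient ($\tfrac{d^2 u_{excess}}{dI_F^2}=8\alpha f^2(I_L)>0$), and conclude by strict convexity that the maximizer must sit at an endpoint of $[0,I_L]$. The only correction to your shortcut is that the interior-SPNE identity is $\tilde n_i^* = 2\alpha(p_i^*-c)$ rather than $\tilde n_i^* = p_i^*-c$, which changes the second derivative only by a harmless positive factor and leaves the argument intact.
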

\begin{proof}
 By substituting (\ref{equ:stage 3}) into $u_{excess}=\pi_{L}-d_{L}+\pi_{F}-d_{F}$, and using $t_{L}=I_{F}/I_{L}$, $t_{F}=1-t_{L}$,
\begin{align*}
u_{excess}=&4\alpha f^{2}(I_{L})I_{F}^{2}-4\alpha f^{2}(I_{L})I_{L}I_{F}+2\alpha g^{2}(I_{L})\\
+&2\alpha(f(I_{L})I_{L}+g(I_{L}))^{2}-\gamma I_{L}^{2}-d_{F}-d_{L}.
\end{align*}


Next $\frac{d^2 u_{excess}}{d I^2_F}=8\alpha f^2(I_L)>0.$

Thus, $u_{excess}$ is convex wrt $I_F$, and the maximum of $u_{excess}$ is obtained at the boundary of $I_F$:  
\begin{align*}
&u_{excess}|_{I_{F}=I_{L}}=u_{excess}|_{I_{F}=0}\\
=&2\alpha g^{2}(I_{L})+2\alpha(f(I_{L})I_{L}+g(I_{L}))^{2}
-\gamma I_{L}^{2}-d_{F}-d_{L}.
\end{align*}
Thus $I_{F}^{*}=I_{L}$ or $I_{F}^{*}=0$.
\end{proof}

Also, for any solution of \eqref{equ: BM-bargainging optimization-3},  $I_{L}^{*}$ is given by
\begin{equation}\label{equ: proof-outside-I_L}
\begin{aligned}
\max_{I_{L}}\quad&2\alpha g^{2}(I_{L})+2\alpha(f(I_{L})I_{L}+g(I_{L}))^{2}-\gamma I_{L}^{2}\\
s.t\quad& L_0\leq I_{L}.\,\,
\end{aligned}
\end{equation}

Substituting $I_{F}^{*}=I_{L}^{*}$ and $I_{F}^{*}=0$ into \eqref{equ: outsideoption-subscriptions} and \eqref{equ: price}, combining with Lemma~\ref{theorem:g_bargain} and \eqref{equ: proof-outside-I_L}, it follows that any solution $(I_L^*, I_F^*, p_F^*, n_L^*, n_F^*)$ of \eqref{equ: BM-bargainging optimization-3}
must be of the form given in Theorem~\ref{thm: out-outcome-sectionA}. Thus, Theorem~\ref{thm: out-outcome-sectionA} follows.


From (4) in Part~\uppercase\expandafter{\romannumeral1}~\cite{Part1}, $x_{0}^*=t_{F}^*+p_{F}^{*}-p_L^*$,  substituting \eqref{equ: price}, $t_{F}^{*}=(I_L^*-I_F^*)/I_L^*$ into $x_0^*$, then we have $0<x_0^*<1$ if and only if $I_{L}^*<4/b$.
The total payoff of the two SPs under each of the possible interior equilibrium-type solutions listed in Theorem~\ref{thm: out-outcome-sectionA} is the same, and is given in Theorem~\ref{thm:existence_outside}. If any possible equilibrium-type solution listed in Theorem~\ref{thm: out-outcome-sectionA}  is an equilibrium-type solution, then this total payoff must not exceed the sum of the disagreement payoffs. Thus, the necessity in  Theorem~\ref{thm:existence_outside} follows.   Next, if $I_L^* < 4/b$, the $p_L^*, p_F^*$ in Theorem~\ref{thm:existence_outside} constitute an interior Nash equilibrium in Stage $2$ of the sequential hybrid game. If the total payoff of the possible equilibrium-type solutions in Theorem~\ref{thm:existence_outside} is not less than the disagreement payoffs, then $u_{excess} \geq 0$ under them. Thus, these solutions satisfy   the additional constraint in  \eqref{equ: BM-bargainging optimization-3temp} (beyond \eqref{equ: BM-bargainging optimization-3}), and therefore constitute its optimum solution too. 
Thus, the sufficiency in Theorem~\ref{thm:existence_outside} follows.

\qed

\section{Proof of results in Section~\ref{sec: Limited spectrum from the central regulator-outside-option}.}\label{APP: outside option,M}

In Section~\ref{sec: Limited spectrum from the central regulator-outside-option}, we had claimed the following Theorem:

\begin{theorem}\label{thm: out-outcome-sectionA, M}
Let $\Delta = 0$. Either there is no interior equilibrium-type solution, or there are two  interior equilibrium-type solutions. They are:
\begin{itemize}
\item[(1)] $I_{L,1}^{*}$ is a solution of
\begin{equation*}
\footnotesize
\begin{aligned}
\footnotesize
\max_{I_{L}}\,\,& 2\alpha g^{2}(I_{L})+2\alpha(f(I_{L})I_{L}+g(I_{L}))^{2}-\gamma I_{L}^{2}\\
s.t\,\,&L_0\leq I_{L}\leq M\,\,
\end{aligned}
\end{equation*}
$I_{F,1}^{*}=I_{L,1}^{*}$, $\tilde{s}^{*}$ is obtained by \eqref{equ: BM-bargaining-s}, and $\theta^*=0$.
\item[(2)]
$p_{L,1}^{*}=\frac{1}{15}+\frac{2c}{3}+\frac{k}{3}+\frac{bI_{L}^{*}}{15},\,
p_{F,1}^{*}=\frac{4}{15}+\frac{2c}{3}+\frac{k}{3}+\frac{4bI_{L}^{*}}{15}$.
\item[(3)] $\tilde{n}_{L,1}^{*}=\frac{2}{15}+\frac{2k}{3}+\frac{2bI_{L}^{*}}{15}-\frac{2c}{3},\, \tilde{n}_{F,1}^{*}=\frac{8}{15}+\frac{2k}{3}-\frac{2c}{3}+\frac{8bI_{L}^{*}}{15}$.
\end{itemize}
and
\begin{itemize}
\item[(1)] $I_{L,2}^{*} = I_{L,1}^*, I_{F,2}^{*}=0$, $\tilde{s}^{*}$ has no significance, and $\theta^*$ is obtained by \eqref{equ: BM-bargaining-theta}.
\item[(2)] $p_{L,2}^{*}=p_{F,1}^{*}$, $p_{F,2}^{*}=p_{L,1}^{*}$.
\item[(3)] $\tilde{n}_{L,2}^{*}=\tilde{n}_{F,1}^{*}$, $\tilde{n}_{F,2}^{*}=\tilde{n}_{L,1}^{*}$.
\end{itemize}
\end{theorem}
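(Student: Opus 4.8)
The plan is to present this as a short ``delta'' against the proof of Theorem~\ref{thm: out-outcome-sectionA} in Appendix~\ref{APP: outside option}, tracking exactly where (if anywhere) the new constraint $I_{L}\leq M$ enters. Recall that once $I_{L}^{*}, I_{F}^{*}$ are fixed, $\tilde{s}^{*}, \theta^{*}$ are pinned down by \eqref{equ: BM-bargaining-s} and \eqref{equ: BM-bargaining-theta}, so it suffices to characterize $(I_{L}^{*}, I_{F}^{*}, p_{L}^{*}, p_{F}^{*}, n_{L}^{*}, n_{F}^{*})$. By the reduction underlying Theorem~\ref{thm: BM-bargaining optimization-3}, these are the maximizers of $u_{excess}$ over the feasible set, which is now $\{(I_{L}, I_{F}) : L_{0}\leq I_{L}\leq M,\ 0\leq I_{F}\leq I_{L}\}$ (further restricted to $u_{excess}\geq 0$ and to $I_{L}<4/b$ for an interior Stage-$2$ equilibrium). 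First I would re-derive $u_{excess}$ as a function of $(I_{L}, I_{F})$ exactly as in the display following Lemma~\ref{theorem:g_bargain}, by substituting \eqref{equ:stage 3} into $u_{excess}=\pi_{L}-d_{L}+\pi_{F}-d_{F}$ and using $t_{L}=I_{F}/I_{L}$. Nothing changes here: the payoff expressions \eqref{equ:stage 3}, the Stage-$2$ prices \eqref{equ: price} and the Stage-$3$ subscriptions \eqref{equ: outsideoption-subscriptions} imported from Part~\uppercase\expandafter{\romannumeral1} treat $I_{L}$ as given and only require $I_{L}<4/b$ for interiority of the Stage-$2$ Nash equilibrium.

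Next I would invoke the key structural fact, Lemma~\ref{theorem:g_bargain}: since $\partial^{2}u_{excess}/\partial I_{F}^{2}=8\alpha f^{2}(I_{L})>0$, for each fixed $I_{L}$ the maximum of $u_{excess}$ over $I_{F}\in[0,I_{L}]$ is attained at $I_{F}=0$ or $I_{F}=I_{L}$, and these two give equal $u_{excess}$. This argument is untouched by the bound $I_{L}\leq M$, since it concerns only the $I_{F}$-marginal problem. After eliminating $I_{F}$, the problem reduces to maximizing $2\alpha g^{2}(I_{L})+2\alpha(f(I_{L})I_{L}+g(I_{L}))^{2}-\gamma I_{L}^{2}$ over $I_{L}$, and the sole difference from \eqref{equ: proof-outside-I_L} is that the feasible interval is now $[L_{0},M]$ instead of $[L_{0},\infty)$ --- which is precisely the modification asserted in the statement. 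Finally, substituting $I_{F}^{*}=I_{L}^{*}$ and $I_{F}^{*}=0$ back into \eqref{equ: price} and \eqref{equ: outsideoption-subscriptions} (with $\Delta=0$, so the same corner/interior dichotomy from Theorem~$7$ of Part~\uppercase\expandafter{\romannumeral1} applies) yields the two families of closed-form prices and subscriptions in parts (2) and (3); these depend on $I_{L}^{*}$ alone and hence coincide with those in Theorem~\ref{thm: out-outcome-sectionA}. The ``either none or exactly two'' dichotomy and the interiority equivalence $0<x_{0}^{*}<1 \iff I_{L}^{*}<4/b$ (from $x_{0}^{*}=t_{F}^{*}+p_{F}^{*}-p_{L}^{*}$ together with \eqref{equ: price}) are likewise unaffected.

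The main --- and essentially the only --- obstacle is bookkeeping: verifying that none of the formulas borrowed from Part~\uppercase\expandafter{\romannumeral1}, in particular the Stage-$2$ equilibrium prices and the interior-equilibrium characterization of its Theorem~$7$, implicitly relied on the absence of an upper bound on $I_{L}$. Since those results fix $I_{L}$ and constrain it only through $I_{L}<4/b$, imposing $I_{L}\leq M$ with $M\geq L_{0}$ changes nothing beyond the feasible set of the final one-dimensional maximization. I would therefore write the proof as a brief step-by-step cross-reference to Appendix~\ref{APP: outside option}, flagging at each step that it carries over verbatim, rather than reproducing the entire computation.
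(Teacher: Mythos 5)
Your proposal is correct and follows essentially the same route as the paper: the paper's own proof simply declares the argument identical to that of Theorem~\ref{thm: out-outcome-sectionA}, with the only change being that the one-dimensional optimization \eqref{equ: proof-outside-I_L} acquires the upper bound $I_{L}\leq M$. Your more explicit tracking of why Lemma~\ref{theorem:g_bargain}, the Stage-2 formulas from Part~I, and the interiority condition $I_{L}^{*}<4/b$ are unaffected by the bound is a faithful (and slightly more careful) elaboration of exactly what the paper asserts.
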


The proof of this Theorem is identical  to that for Theorem~\ref{thm: out-outcome-sectionA} in Appendix~\ref{APP: outside option}, with the following modification: the optimization problem \eqref{equ: proof-outside-I_L} becomes
\begin{equation*}
\begin{aligned}
\max_{I_{L}}\quad&2\alpha g^{2}(I_{L})+2\alpha(f(I_{L})I_{L}+g(I_{L}))^{2}-\gamma I_{L}^{2}\\
s.t\quad& L_0\leq I_{L}\leq M.
\end{aligned}
\end{equation*}
This is because $(I_L^*, I_F^*, p_L^*, p_F^*, n_L^*, n_F^*)$ constitute the optimum solution of
\begin{equation*}
\begin{aligned}
\max_{I_{L}, I_{F}}\quad&u_{excess}\\
s.t.\quad&L_0\leq I_{L}\leq M,\,\,0\leq I_{F}\leq I_{L}.
\end{aligned}
\end{equation*}
Theorem~\ref{thm: out-outcome-sectionA, M} now follows using arguments that are otherwise identical to that for the proof of Theorem~\ref{thm: out-outcome-sectionA}.

\end{document}